\newcommand{\ket}[1]{\mbox{$|#1\rangle$}}
\newcommand{\bra}[1]{\mbox{$\langle#1|$}}
\def\be{\begin{equation}}      
\def\ee{\end{equation}}
\def\beu{\begin{equation*}}   
\def\eeu{\end{equation*}}
\providecommand{\abs}[1]{\left\lvert#1\right\rvert}   
\DeclareMathOperator{\trace}{Tr}      
\providecommand{\mean}[1]{\langle#1\rangle}
\providecommand{\del}{\partial}
\providecommand{\br}{{\bm{r}}}
\providecommand{\bx}{{\bm{x}}}
\providecommand{\by}{{\bm{y}}}
\newcommand{\E}{\mathcal{E}}
\providecommand{\id}{{\mathbb{I}}}
\newcommand{\M}{{\mathcal{M}}}
\newcommand{\K}{{\mathcal{K}}}
\newcommand{\N}{{\mathbb{N}}}
\newcommand{\C}{{\mathbb{C}}}
\newtheorem{theorem}{Theorem}[section]
\theoremstyle{definition}
\definecolor{new}{rgb}{.08,.05,.8}
\newcommand{\delete}[1]{{}}
\begin{document}
\title{Entanglement Structure of Current-Driven Diffusive Fermion Systems}
\date{\today}
\author{Michael J. Gullans}
\author{David A. Huse}
\affiliation{Department of Physics, Princeton University, Princeton, New Jersey 08544, USA}
\begin{abstract}
When an extended system is coupled at its opposite boundaries to two reservoirs at different temperatures or chemical potentials, it cannot achieve a global thermal equilibrium and is instead driven to a set of current-carrying nonequilibrium states.  
Despite the broad relevance of such a scenario to metallic systems, there have been limited investigations of the entanglement structure of the resulting long-time states, in part, due to the fundamental difficulty in solving realistic models for disordered, interacting electrons.  We investigate this problem by carefully analyzing two ``toy'' models for coherent quantum transport of diffusive fermions: the celebrated three-dimensional, noninteracting Anderson model and a class of random quantum circuits acting on a chain of qubits, which exactly maps to a diffusive, interacting fermion problem. Crucially, the random circuit model can also be tuned to have no interactions between the fermions, similar to the Anderson model.  We show that the long-time states of driven noninteracting fermions exhibit volume-law mutual information and entanglement, both for our random circuit model and for the nonequilibrium steady-state of the Anderson model.  With interactions, the random circuit model is quantum chaotic and approaches local equilibrium, with only short-range entanglement. These results provide a generic picture for the emergence of local equilibrium in current-driven quantum-chaotic systems, and also provide examples of stable, highly-entangled many-body states out of equilibrium.  We discuss experimental techniques to probe these effects in low-temperature mesoscopic wires or ultracold atomic gases.
\end{abstract}
\maketitle

\section{Introduction}

Uncovering general principles that describe the entanglement structure of quantum many-body systems is a fundamental challenge in statistical mechanics and quantum information science \cite{Horodecki09}. 
In the ground state of local Hamiltonian systems, the entanglement entropy often satisfies an ``area law,'' whereby the entropy of a subregion scales with the area of its boundary \cite{Bombelli86,Srednicki93,Wolf08,Eisert10}.  Single eigenstates with finite energy density above the ground state typically exhibit extensive entanglement entropy \cite{Deutsch91,Srednicki94,DAlessio16};  however, the mutual information of finite-temperature thermal Gibbs (thus, {\it mixed}) states  still exhibits an area law \cite{Wolf08}.  
The existence of such area laws allows a rich set of analytical and  numerical tensor network techniques to be used to characterize and classify these states \cite{White92,Schollwock05,Verstraete04}.  
Although area laws for the mutual information are typical of thermal mixed states, any modification that drives the system out of equilibrium allows for potential violations.   

 A common nonequilibrium scenario consists of an extended system  coupled to two reservoirs with different chemical potentials, which drives currents in the system.    The analog of thermalization in these systems is the approach to local equilibrium at long times.
Motivated by recent developments in the understanding of thermalization and many-body localization in closed quantum many-body systems \cite{DAlessio16,Nandkishore15}, we revisit this class of current-driven open quantum systems with the goal of determining the entanglement structure of the long-time density matrix of the system.  Our results hold the most physical significance for disordered metallic systems where one can identify four length scales governing the qualitative features of this  nonequilibrium transport problem: the elastic mean free path $\ell$, the phase-coherence length $\ell_{\varphi}$, the electron-electron energy relaxation length $\ell_{\rm ee}$, and the electron-phonon scattering length $\ell_{\rm ep}$.  In a typical metal at low temperatures $\ell < \ell_{\varphi} < \ell_{\rm ee} < \ell_{\rm ep}$ \cite{Altshuler82}.  

Rather than attempting to analyze realistic models for disordered, interacting electrons, we instead focus our investigations on two ``toy'' models: the three-dimensional, noninteracting Anderson model in the diffusive phase \cite{Anderson58}, and a class of random quantum circuits acting on a chain of qubits, which exactly map to a many-fermion system.  The current-driven Anderson model is expected to qualitatively capture the properties of mesoscopic disordered metals in the regime  $L \ll \ell_{\rm \varphi}$, where $L$ is the length of the driven system.

In this article, we show that the long-time density matrix for driven noninteracting fermions is characterized by volume-law mutual information and entanglement, in distinct contrast to the entanglement properties at equilibrium.   This result should  generally apply to driven systems in the regime  $\ell \ll L \ll \ell_{\rm \varphi}$.  Our results thus provide examples of physical systems where volume-law entanglement can be sustained, and possibly harnessed, despite strong coupling of the system to external reservoirs.   We discuss experimental methods to probe these effects in transport experiments on mesoscopic wires or ultracold Fermi gases.    In our random circuit models we can also add interactions and show the crossover to area-law entanglement as $L$ exceeds $\ell_{\varphi}$, showing that such quantum chaotic-driven systems stay closer to local equilibrium than do the noninteracting fermion models.

Many-body physics models based on random quantum circuits  have recently attracted interest in high-energy \cite{Hayden07,Sekino08,Lashkari11} and quantum condensed matter physics \cite{Nahum16,Nahum17,vonKeyserlingk17,Khemani17,Tibor17} because they exhibit the quantum chaotic dynamics and rapid scrambling characteristic of interacting many-body systems, while still retaining a simple enough structure to allow controlled calculations of various measures of entanglement and chaos.  In the context of our work, the random circuit model we introduce is advantageous because it provides an analytically tractable realization of key qualitative features of the entanglement structure of these current-driven diffusive systems.  We study a generalization of these models where the interaction is tunable, allowing us to explore both the noninteracting fermion regime $L\ll\ell_{\varphi}$, the strongly chaotic regime $L\gg\ell_{\varphi}$, and the crossover between these regimes.

An additional, seemingly unrelated, aspect of the random circuit model is that it has a strong connection to an exactly solvable classical, boundary-driven stochastic lattice gas model for diffusion called the symmetric-simple-exclusion process (SSEP) \cite{KipnisLandim99,Derrida07}.   Thus, our results also provide insights into the emergence of classical hydrodynamics from interacting quantum many-body systems.  

 The paper is organized as follows:  In Sec.~\ref{sec:summary} we give a more detailed overview of the main results.  In Sec.~\ref{sec:circuit}  we present a detailed analysis of the random circuit model, including its operator spreading dynamics,  phase diagram, and entanglement properties.   In Sec.~\ref{sec:ext} we discuss some natural extensions of the random circuit model to higher dimensions and to more than two states per site.
 In Sec.~\ref{sec:anderson} we analyze the mutual information in the current-driven Anderson model using a scattering-state approach and compare these results to a scattering-state analysis of the random circuit.   In Sec.~\ref{sec:appLE} we show that signatures of the volume-law entangled phase of the random circuit also show up in intermediate-time dynamics following a quantum quench.  In Sec.~\ref{sec:exp} we discuss a method to experimentally probe signatures of the volume-law entanglement in transport through  mesoscopic wires or  ultracold Fermi gases.
 We present our conclusions in Sec.~\ref{sec:outlook}.  In the appendixes, we present several useful technical results.  
 See Appendix \ref{app:overview} for an overview.

\section{Summary of main results}
 \label{sec:summary}
 
 As described in the Introduction, the two main results in this paper are (i) the discovery that noninteracting, current-driven, diffusive fermion models exhibit extensive mutual information and entanglement and (ii) the development of a simple physical picture for how interactions effectively decohere these correlations and recover the expected area-law scaling of local equilibrium.    The essential  arguments underlying these results are based on universal properties of operator-spreading dynamics and, thus, should apply to a wide range of physical systems and models.  To provide deeper insight into our results, we systematically analyze the behavior and phenomenology of three different classes of models that exhibit these universal features: random quantum circuits, noninteracting Anderson models, and weakly interacting, disordered metals.   In this section, we summarize our findings for each of these classes of models.
 
   \begin{figure}[tb]
\begin{center}
\includegraphics[width = 0.49 \textwidth]{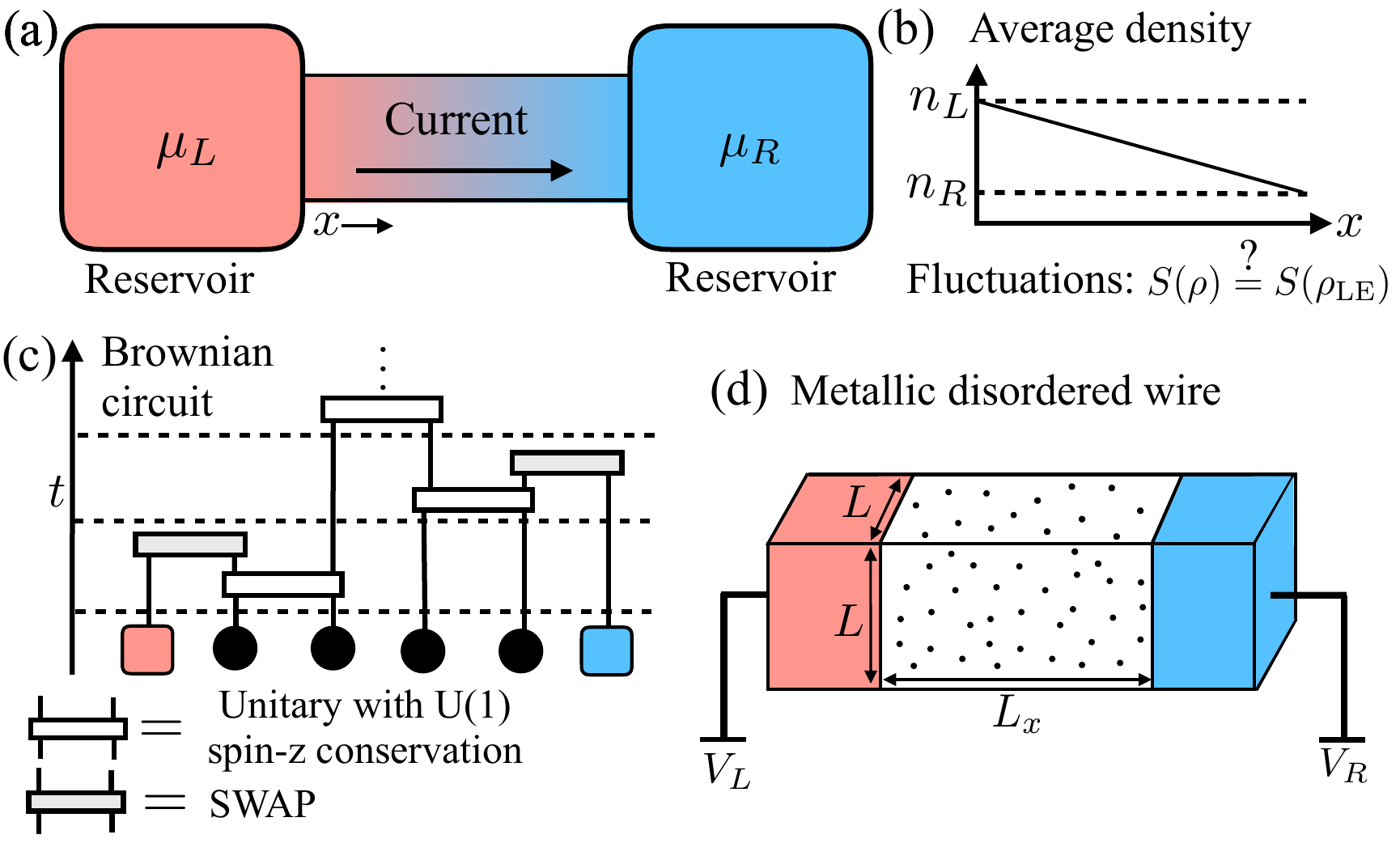}
\caption{(a) Current-driven model that we consider, where two reservoirs are held at fixed chemical potentials and the bulk dynamics conserves the total density.  (b) The average local density will generically be linear in position, but the total von Neumann entropy may have an extensive (``volume-law'') deviation from local equilibrium (LE).  (c) Schematic of the random circuit model, which exhibits quantum chaotic behavior for some parameter regimes.  The interaction with reservoir is produced by a SWAP gate that acts immediately after any other gate is applied to the boundary spin and swaps the boundary spin state with a fresh spin state from the reservoir. (d) We also consider the diffusive regime of noninteracting fermions in a random potential in 3D and find that they display a similar volume-law entanglement structure as the noninteracting fermion regime of the random circuit model.}
\label{fig:model}
\end{center}
\end{figure}
 
The general setup we consider is shown in Fig.~\ref{fig:model}(a) and consists of a system with a global conservation law (either magnetization or particle number) in contact with two thermodynamic reservoirs at different chemical potentials.  The chemical potential bias leads to steady-state currents in the long-time limit.  At long times and wavelengths, the models we consider have an effective hydrodynamic description (derived below) for the average value of the conserved quantity $n(\br,t)$ given by the diffusion equation
\be
\frac{\del }{\del t} n(\br,t) = \tilde D \nabla^2 n(\br,t),
\ee
where $\tilde D$ is the scaled diffusion constant and $\br$ is a $d$-dimensional position vector scaled so that $x=0/1$ corresponds to the longitudinal position of the left/right reservoir.  When subject to the boundary condition $n(\br)\lvert_{x=0}=n_L$ and $n(\br)\lvert_{x=1}=n_R$, this equation has the steady-state solution shown in Fig.~\ref{fig:model}(b)
\be \label{eqn:nx}
n(\br)=n_L(1-x)+n_R x.
\ee
  This average profile, however, gives no information about correlations and the entanglement structure in the long-time states.  To more systematically investigate the entropy and entanglement properties of such current-driven quantum systems, we analyze the models discussed above and shown schematically in Figs.~\ref{fig:model}(c)-(d).

 \subsection{Random circuit model}
 In the random circuit model  [see Fig.~\ref{fig:model}(c)], our system is a spin-1/2 chain of $L$ sites with the $z$ component of the magnetization of the spins being the ``charge'' that is conserved and that is being transported by the current.  The dynamics  are generated by a sequence of total spin-$z$ conserving, randomly chosen nearest-neighbor unitary operations, i.e., quantum gates, applied at each time step, with no correlations between time steps.   Whenever a gate operates on a boundary spin, that boundary spin is immediately swapped with a ``fresh'' (uncorrelated) spin from the adjacent infinite reservoir; this is how the system is coupled to the reservoirs.    The unitary operations in the local two-site basis $\lvert \uparrow\uparrow\rangle,\lvert \uparrow\downarrow\rangle,\lvert \downarrow\uparrow\rangle,\lvert \downarrow\downarrow\rangle$ take the  form \cite{Khemani17,Tibor17}
 \be \label{eqn:U}
U = \left( \begin{array}{c  c c c} U_+ & 0 & 0 &0 \\ 
0 & U_{ud} & U_{rl} & 0 \\
0 & U_{lr} & U_{du}&0 \\
0&0&0&U_{-}  \end{array} \right).
\ee
  The coefficients in $U$ are chosen according to a two-parameter family of random distributions that we describe in Sec. \ref{sec:circuit}.  One can understand the time evolution under the random circuit as a stochastic process whereby in each time step a Hamiltonian for the system is  randomly chosen from a given distribution and applied for a fixed length of time.
The average dynamics of this class of random circuit models are diffusive because  (i) we choose the sites where we apply the gates with a uniform probability and (ii) unitarity constrains $|U_{rl}| = |U_{lr}|$.  As a result, at each time step, the local spin-density always has equal probability of hopping to the left or right.  
  Due to the diffusive transport to the memoryless reservoirs, the long-time density matrix for a given circuit realization is insensitive to its initial conditions and converges to a particular time-dependent mixed state that we call a ``nonequilibrium attracting state'' (NEAS), in analogy to the nomenclature of a nonequilibrium steady-state (NESS) for driven time-independent systems.  
  
  More precisely, after $N=L t$ random gates are applied (note that we scale time $t$ so that $L$ gates of the circuit are applied in one unit of time), we can describe the action of the random circuit on the initial density matrix of the system $\rho_{I}$ by a linear operator acting on the space of density matrices 
  \be \label{eqn:Mrho}
  \begin{split}
   \mathcal{M}_{ t}(\rho_I) &= \trace_{\rm res}(U_{L t} \cdots U_1 \rho_I \otimes \rho_{\rm res} U_1^\dag \cdots U_{Lt}^\dag) \\
   &= \trace_{\rm res}(U_{Lt} \mathcal{M}_{ t-\delta t}(\rho_I) \otimes \rho_{\rm res}^{LR} U_{L t}^\dag),
   \end{split}
\ee
where $\delta t= 1/L$, $U_i$ are the randomly chosen unitary gates at each time step that include the SWAPs with the fresh reservoir spins,  $\rho_{\rm res}$ is the initial many-body density matrix of the reservoir,  $\trace_{\rm res}(\cdot)$ denotes a partial trace over the  reservoirs, and $\rho_{\rm res}^{LR} = \rho_{m_L} \otimes \rho_{m_R}$ is the two-site density matrix of the fresh reservoir spins that can become entangled with the system by $U_{i}$.  Here we have defined
\be \label{eqn:rhomlr}
 \rho_{m_{L/R}} = \left( \begin{array}{c c}
 \frac{1}{2} + m_{L/R} & 0 \\
 0 & \frac{1}{2} - m_{L/R} 
 \end{array} \right),
\ee
where $m_{L/R}$ are the magnetizations of the left/right reservoirs.
    For $t$ much larger than the diffusive transit time through the system, i.e., the Thouless time $\tau_{\rm Th} = L^2/D$, the action of $\mathcal{M}_{ t}$ takes the form 
\be
 \mathcal{M}_{ t} (\rho_I) = \rho_{\rm NEAS}(t) \trace[\rho_I] , 
\ee
  where $\rho_{\rm NEAS}(t)$ depends on the  history of the circuit, 
  but is independent of $\rho_I$ in the limit of long time.
We choose the model so that the average NEAS density matrix over random circuit realizations is the same for all values of the parameters.  Interestingly, this average density matrix can be found from the known solution for the NESS of the classical stochastic lattice gas model that goes by the name of the symmetric-simple-exclusion process \cite{Derrida07}.   The entropy and mutual information of the NESS for the SSEP have been shown to have only subextensive deviations from local equilibrium  \cite{Bahadoran07,Derrida06}.

   As there is no energy conservation for the random circuit,  local equilibrium is simply given by the product-state density matrix with the same average magnetization profile as Eq.~(\ref{eqn:nx}),      
  \begin{align} \label{eqn:rhomi}
\rho_{\rm LE}& = \bigotimes_{i=1}^{L} \rho_{m_i},~\rho_{m_i} = \Big(\frac{1}{2}+m_i\Big) u_i + \Big(\frac{1}{2}-m_i\Big) d_i,\\
m_i &= m_L \Big(1- \frac{i}{L+1}\Big) + m_R  \frac{i}{L+1},
\end{align}
where $u_i = (1+\sigma_i^{z})/2$ and $d_i = (1-\sigma_i^{z})/2$ denote projections of each site $i$ onto spin up or down.

Despite this uniform average behavior, we find three qualitatively different regimes of behavior for the instantaneous NEAS $\rho_{\rm NEAS}(t)$ when we investigate the deviation of its von Neumann entropy  $S(\rho)=-\trace[\rho \log \rho]$ from $S(\rho_{\rm LE})$,
\be
\Delta S \equiv S(\rho_{\rm LE})-\overline{S(\rho_{\rm NEAS})},
\ee
where the overbar denotes an average over realizations of the random circuit.
As we show below, the generic behavior of the random circuit model is strongly interacting and quantum chaotic.   In this quantum chaotic regime, we show that the scaling of $ \Delta S$ with the length of the system $L$ is subextensive.  As a result,  the NEAS is, as one might expect, close to local equilibrium.     For certain limiting values of the parameters of the model, however, we obtain two other fine-tuned phases whose entropy deviates extensively from local equilibrium, i.e., $\Delta S \propto L$.  One of these special phases corresponds to a classical deterministic dynamics, while the other maps to noninteracting diffusive fermions.
  
The emergence of a volume-law scaling for $\Delta S$ coincides with nonchaotic dynamics in each of these two special phases.  However, we find that their mutual information and entanglement properties  have dramatically different behavior. Here, the mutual information between two regions $A$ and $B$ is defined in terms of the von Neumann entropy as $I(A:B)\equiv S(\rho_A)+S(\rho_B)-S(\rho_{AB})$, where $\rho_C$ is the reduced density matrix on region $C$.  To study the  entanglement, we use an entanglement measure appropriate for mixed states known as the logarithmic negativity \cite{Vidal02,Plenio05}.   We remark that similar to the area law for the mutual information, recently an area law for the logarithmic negativity was proved for thermal-equilibrium Gibbs states \cite{Sherman16}.

In one of these two nonchaotic phases, the  mutual information and logarithmic negativity exactly vanish.  In this phase, the extensive deviation of the entropy from local equilibrium already appears in the instantaneous magnetization profile of the system, i.e., single-site correlation functions.  
  The properties of this phase, as well as many aspects of its crossover to the quantum chaotic phase, can be understood using a completely classical description of the spin dynamics.   In the other phase, we observe both volume-law mutual information and logarithmic negativity of the NEAS density matrix.   This phase can be more easily understood after transforming into a fermion representation of the spins via a Jordan-Wigner transformation, where it has a description in terms of noninteracting fermions whose average dynamics are diffusive.  This mapping motivates us to search for a similar volume-law mutual information in the three-dimensional Anderson model, which is a paradigmatic model for diffusive, noninteracting particles.  
 We give a more complete picture for the nonequilibrium dynamics that govern the emergence of these three phases in Sec.~\ref{sec:circuit}.

\subsection{Anderson model}
As mentioned in the Introduction, the volume-law entanglement phase of the random circuit has some key qualitative similarities with current-driven disordered Anderson models.  The Anderson model we consider is a three-dimensional (3D) tight-binding model with quenched disorder governed by the Hamiltonian
\be
H= t_0 \sum_{\mean{ij}}c_i^\dagger c_j + \sum_i V_i c_i^\dagger c_i ,
\ee
where $c_i$ is a fermion operator on site $i$, $t_0$ is the hopping, the first sum is over nearest-neighbor sites of a cubic lattice, and $V_i$ is a random potential on each site.  We  draw $V_i$ from a uniform distribution between $\pm W/2$.  In 1D and 2D, any amount of disorder localizes the eigenstates, while in 3D this model has a metal-insulator transition near $W_c \approx 16.5\, t_0$ \cite{MacKinnon81,Pichard81,Slevin14}.      One distinction between the Anderson model and the random circuit is that the Anderson model is also subject to energy conservation.
    In fact, it is known from theoretical studies using semiclassical Boltzmann equations \cite{Nagaev92,Nagaev95,Kozub95} and experimental measurements \cite{Pothier97} that, for  disordered wires of length $\ell \ll L \ll \ell_{\rm ee}$, the local energy distribution function strongly deviates from local equilibrium.  Thus the question we address in the context  of the Anderson model is how the entanglement structure for the NESS density matrix within a single disorder realization compares to the disorder-averaged density matrix.  We focus on the metallic regime $0<W< W_c$ in this work, since we are interested in diffusive systems.

We model the driven problem by taking a finite disordered region connected at its two ends to ballistic leads that are otherwise identical to the central region.  The correlations in the NESS are determined by the properties of the scattering states in a narrow energy range (up to thermal broadening) between the chemical potentials of the two leads. There are two sources of correlations that then give rise to the mutual information: First, there is the range in energy difference over which the scattering-state wave functions are correlated in the disordered region.  This energy scale is simply set by the Thouless energy, which is the inverse of the diffusive transit time through the disordered region $E_{\rm Th} = \hbar D/L_x^2$
where $D$ is the diffusion constant and $L_x$ is the longitudinal length of the disordered region (assumed to be nearly equal to the transverse width $L$) \cite{Thouless74,Anderson80}.  Crucial to the existence of the diffusive phase in 3D is that for large $L$ and $L_x$ the Thouless energy is much larger than the average level spacing of about $1/L^2L_x$ that the disordered region would have if it were isolated \cite{Anderson80}.  In the context of the NESS, this scaling implies that the eigenfunctions of the open system are significantly modified due to the coupling to the leads.  In addition to these energy correlations, there are also correlations between different transverse scattering channels. 
Accounting for both these types of contributions, we find that we recover the volume-law scaling of the mutual information  predicted from the random circuit model.  Note that this effect is absent in equilibrium because the scattering states originating  from each lead are then equally populated and interfere with each other to cancel the long-range correlations present in the eigenfunctions.  For a ballistic conductor biased at its two ends, a simple calculation shows that there is no similar buildup of  volume-law mutual information in the NESS.  We provide a more detailed analysis of the Anderson model in Sec.~\ref{sec:anderson}.

\subsection{Disordered metals}
An intriguing implication of the above results is that the volume-law entangled phase observed in the random circuit model and the Anderson model should also arise in current-driven disordered metals in the ``mesoscopic'' regime $\ell \ll L \ll \ell_{\varphi}$.   In Sec.~\ref{sec:appLE} we further explore the connection between our results and interacting, disordered electron systems by considering the quench dynamics of an initial state with a large step in the density profile.  We provide qualitative arguments that, for weak interactions, the reduced density matrix over the diffusive length scale $\sqrt{D t}$ will similarly exhibit volume-law scaling of entanglement until a crossover timescale set by the inelastic scattering rate.    In Sec.~\ref{sec:exp}, we describe an  approach based on single-particle interference experiments to directly probe these effects in mesoscopic wires or atomic Fermi gases \cite{Gross17}.

 \subsection{Relation to prior work}
 
  Here, we review some prior related work on boundary-driven classical and quantum problems.
  For classical versions of this class of current-driven, nonequilibrium problems, a large body of work has been devoted to deriving emergent hydrodynamic descriptions of hard-core stochastic lattice gases \cite{KipnisLandim99,Derrida07}.  In this case, rigorous arguments have been formulated  showing that the entropy for a large class of these models converges to local equilibrium (up to subextensive corrections) \cite{Bahadoran07,Derrida06}.  At the same time,  it has been found that, even in one dimension, these boundary-driven classical systems exhibit behavior traditionally associated with critical models at equilibrium, such as power-law correlations and spontaneous symmetry breaking \cite{Gacs86,Gacs01,Evans95,Evans98}, making them a rich avenue for investigation.
Studies of quantum versions of these current-driven problems have mainly focused on integrable or free-fermion models \cite{Blythe07,Prosen09,Prosen11a,Prosen11b,Buca14,Znidar14,Zanoci16,Carollo18}.   Adding integrability-breaking terms or interactions to these lattice models generally leads to diffusive dynamics at long times, unless the system is in a many-body localized phase \cite{Meisner03,Prosen07,Huang13,Znidaric10,Znidaric18}.  
Other  work has aimed at finding efficient tensor network descriptions of the steady states of these models based on the assumption that they satisfy an area law or have an integrable structure \cite{Temme10,Brandao15,Hosur16,Mahajan16,Haegeman17}. 
A related quench problem  considers two identical  many-body  systems at different temperatures or chemical potentials suddenly brought into contact and allowed to evolve \cite{Ruelle00,Ogata02,Bhaseen15}. For integrable models, the steady state is nonthermal \cite{Castro16,Bertini16,Ljubotina17,Collura18}.   In some cases, it has been shown that there is a logarithmic violation of the area law for the mutual information \cite{Eisler14} and entanglement \cite{Coser14,Eisler14b,Hoogeveen15,Wen15,Dubail17} for this type of quench problem.

In the case of the 3D Anderson model, to our knowledge, the presence of these extensive correlations in the NESS of the current-driven problem has not been previously discussed in the literature.  There is a large body of work studying shot-noise correlations of disordered mesoscopic systems \cite{Buttiker92,Beenakker92,Blanter00};  however, the presence of such correlations between spatially separated leads follows directly from current conservation and does not provide direct information about the mutual information or entanglement.  For  free-fermion or Luttinger-liquid leads connected by a time-varying quantum point contact, there is a coincidental relation between   the full counting statistics of the current and growth of entanglement  entropy in the leads \cite{Klich09, Hsu09}.  These studies, however, considered a spatially zero-dimensional region between the leads, finding logarithmic growth of entanglement entropy in the time direction, and did not consider steady-state properties.     
Other related work has considered the wave-function entanglement in the Anderson model (i.e., entanglement entropy of the system with a single occupied eigenstate) \cite{Jia08} and the ground-state entanglement entropy of random spin chains \cite{Refael04,Refael09}, finding a logarithmic scaling with system size.

\section{Random circuit model}
\label{sec:circuit}

   In this section, we systematically analyze the nonequilibrium quantum dynamics and phenomenology of the random circuit model.
To tune between the three phases in this model, we draw the nearest-neighbor random gates 
defined in the local two-site basis $\lvert \uparrow\uparrow\rangle,\lvert \uparrow\downarrow\rangle,\lvert \downarrow\uparrow\rangle,\lvert \downarrow\downarrow\rangle$ as
 \be \label{eqn:U}
U = \left( \begin{array}{c  c c c} U_+ & 0 & 0 &0 \\ 
0 & U_{ud} & U_{rl} & 0 \\
0 & U_{lr} & U_{du}&0 \\
0&0&0&U_{-}  \end{array} \right),
\ee 
according to a two-parameter measure $d \mu$ over three gate sets $\mu_{0,1,2}$:
\be \label{eqn:dmu}
d\mu = (1-p_1-p_2) d\mu_0 + p_1 d\mu_1 + p_2 d \mu_2,
\ee
where the probabilities $p_{1,2}$ satisfying $p_1+p_2 \le 1$  are our tuning parameters.  
With probability $p_1$, we apply a ``noninteracting fermion'' (NIF) gate.  These NIF gates are then chosen as follows: We choose $\phi_0$  with uniform probability between $[0,2\pi]$ and  fix $U_+ = U_-^* = e^{i \phi_0}$.   The central 2$\times$2 matrix is then drawn from the  Haar random ensemble on SU$(2)$.  
More explicitly, it takes the form 
\begin{align}
 \left( \begin{array}{c  c}
  U_{ud} & U_{rl} \\
  U_{lr} & U_{du} 
  \end{array} \right) = \left( \begin{array}{c c}
  e^{i \phi_1} \cos \theta/2 & e^{i \phi_2} \sin \theta/2 \\
-  e^{-i \phi_2} \sin \theta/2 & e^{-i \phi_1} \cos \theta/2
\end{array} \right)
\end{align}
 with $\phi_{1,2}$ chosen uniformly between $[0,2\pi]$ and $\theta$ chosen in the interval $[0,\pi]$ with probability density $\sin \theta$.

When acting only on nearest neighbors, such a circuit can be efficiently simulated using a fermion representation of the qubits obtained by a Jordan-Wigner transformation \cite{Terhal02}.  In this case, the spin density is mapped to the fermion density.  These NIF gates are the only ones that perform a ``partial swap,'' where all four $U_{rl},U_{lr},U_{ud},U_{du}$ are nonzero.  With probability $p_2 \leq (1-p_1)$, we  apply a random unitary  chosen as follows: With equal probability, we choose a gate from one of the two ``interaction gate'' sets that produce interactions between the fermions:
 \begin{align} \label{eqn:p2gates} 
 U_1&=e^{i \phi_1} u_1 u_2 + e^{i \phi_2} u_1 d_2+ e^{i \phi_3} d_1 u_2 + d_1 d_2,\\
 U_2&= e^{i \phi_1} u_1 u_2 + e^{i \phi_2} \ell_1 r_2+ e^{i \phi_3} r_1 \ell_2 + d_1 d_2,
 \end{align}
 where $u_i$ and $d_i$ are projectors onto up and down spins, respectively, and $r_i = \sigma_i^{+}$ and $\ell_i= \sigma_i^{-}$ are single-site raising and lowering operators. Once one of these two gate sets is chosen, we then choose the $\phi_i$ with uniform probability in the interval $[0,2\pi]$.  Note that these interaction gates do not perform partial swaps, which implies that states in the $z$ basis are mapped to a single state in the $z$ basis.
Finally, with probability $1-p_1-p_2$ we apply either an iSWAP gate ($U_+=U_-=1$, $U_{ud}=U_{du}=0$ and $U_{lr}=U_{rl}=i$ so that it is in the NIF class) or the identity operation with equal probability.  This last set of gates produces neither interactions nor partial swaps.

As described in Sec.~\ref{sec:summary}, this random circuit does not produce a time-independent steady state in the long-time limit but rather induces a distribution over NEASs.   A more rigorous proof of this result is provided by three general theorems in Appendix \ref{app:thms}. The physical argument underlying the theorems is rather straightforward:
Because of the diffusive transport in the system arising from swap and partial swap gates, after a time much greater than $L^2$ ($L+1$ gates of the random circuit occur in one unit of time), the trajectory of each spin within the system has almost certainly involved a swap with a reservoir, which has no memory of the initial state within the system.  The ensemble of circuits, or the ensemble of all times for a single circuit, produces a probability density $\mathbb{P}(\rho_{\rm NEAS})$ over the NEAS density matrices $\rho_{\rm NEAS}$ that depends on the parameters $p_1$ and $p_2$.   
    
An important feature that we have designed into this family of models is that the circuit-averaged density matrix $\bar{\rho} \equiv  \int d\rho_{\rm NEAS} \mathbb{P}({\rho}_{\rm NEAS})\rho_{\rm NEAS}$ is independent of $p_1$ and $p_2$.  To see this independence, one can first derive an equation of motion for the average density matrix of the system at time $t$
\be \label{eqn:drhodt}
\bar{\rho}_t= \int d\nu \E_{\nu}(\bar{\rho}_{t-\delta t}),
\ee
where $\delta t = 1/L$ is the minimal time step in this discrete model, $\E_{\nu}(\rho) = \trace_{\rm res}(U_\nu  \rho \otimes \rho_{\rm res}^{LR}U_\nu^\dag)$ is a linear operator on the space of density matrices (often referred to as a quantum channel; see Appendix \ref{app:thms}) for a single time step of the random circuit, and $d \nu$ is a measure over all allowed gates in the random circuit that accounts for the randomness in the choice of sites to apply the gate, as well as the randomness of the two-site unitary.  Recall that $\rho_{\rm res}^{LR} = \rho_{m_L} \otimes \rho_{m_R}$ is the density matrix of the two fresh reservoir spins that can become entangled with the system by $U_\nu$ [see (\ref{eqn:rhomlr})]. 
 The quantum channel for the whole circuit, defined in Eq.~(\ref{eqn:Mrho}), is given by the composition of a long sequence of independently chosen  $\E_{\nu}$.
   Since $\E_\nu$ is a linear operator and the  late-time probability distribution for the density matrix is independent of time, we can average both sides of Eq.~(\ref{eqn:drhodt}) over $\mathbb{P}(\rho_{\rm NEAS})$ to obtain the steady-state equations for the average density matrix
\be \label{eqn:rhobar1}
\bar{\rho} = \int d\nu \E_\nu(\bar{\rho}).
\ee
Because of the random phases, the transport, and the swaps with the reservoirs, it is easy to show, using Eq.~(\ref{eqn:rhobar1}), that all off-diagonal terms in the density matrix average to zero such that
\be
\bar{\rho} = \sum_{\{ \tau_i\}} P(\tau_1,\ldots,\tau_L) \bigotimes_{i=1}^{L} [\tau_i u_i +(1- \tau_i)d_i ],
\ee
where $\tau_i \in \{0,1\}$ is a pseudospin variable for site $i$.  Moreover,  the probability measure $P({\bm \tau})$ satisfies the same steady state equation as the SSEP \cite{Derrida07}
\be \label{eqn:dpdt}
\frac{d P({\bm \tau})}{dt } = \sum_{\{\sigma_i \}} W_{\bm \tau}^{{\bm \sigma}} P({\bm \sigma}) =0~.
\ee
We give the full expression for the transition matrix $W_{\bm \tau}^{{\bm \sigma}}$ and review some basic properties of the SSEP in Appendix \ref{app:SSEP}. 
 This model is exactly solvable using a translationally invariant matrix-product-state (MPS) representation for $P(\bm{\tau})$, with a bond dimension equal to $L$ \cite{Derrida93}.  The diverging bond dimension in this solution is needed to account for the long-range correlations induced by the currents with a translationally invariant MPS.   The average spin current between two sites satisfies Fick's law and is given by $J_i = \overline{\mean{u_i - u_{i+1}}} = \delta/(L+1)$, where $\delta = m_L-m_R$, $\mean{\cdot} \equiv \trace[\rho\, \cdot]$, and the overbar denotes averages over $\mathbb{P}(\rho_{\rm NEAS})$ or, equivalently, time averages.

\begin{figure}[tb]
\begin{center}
\includegraphics[width = 0.49 \textwidth]{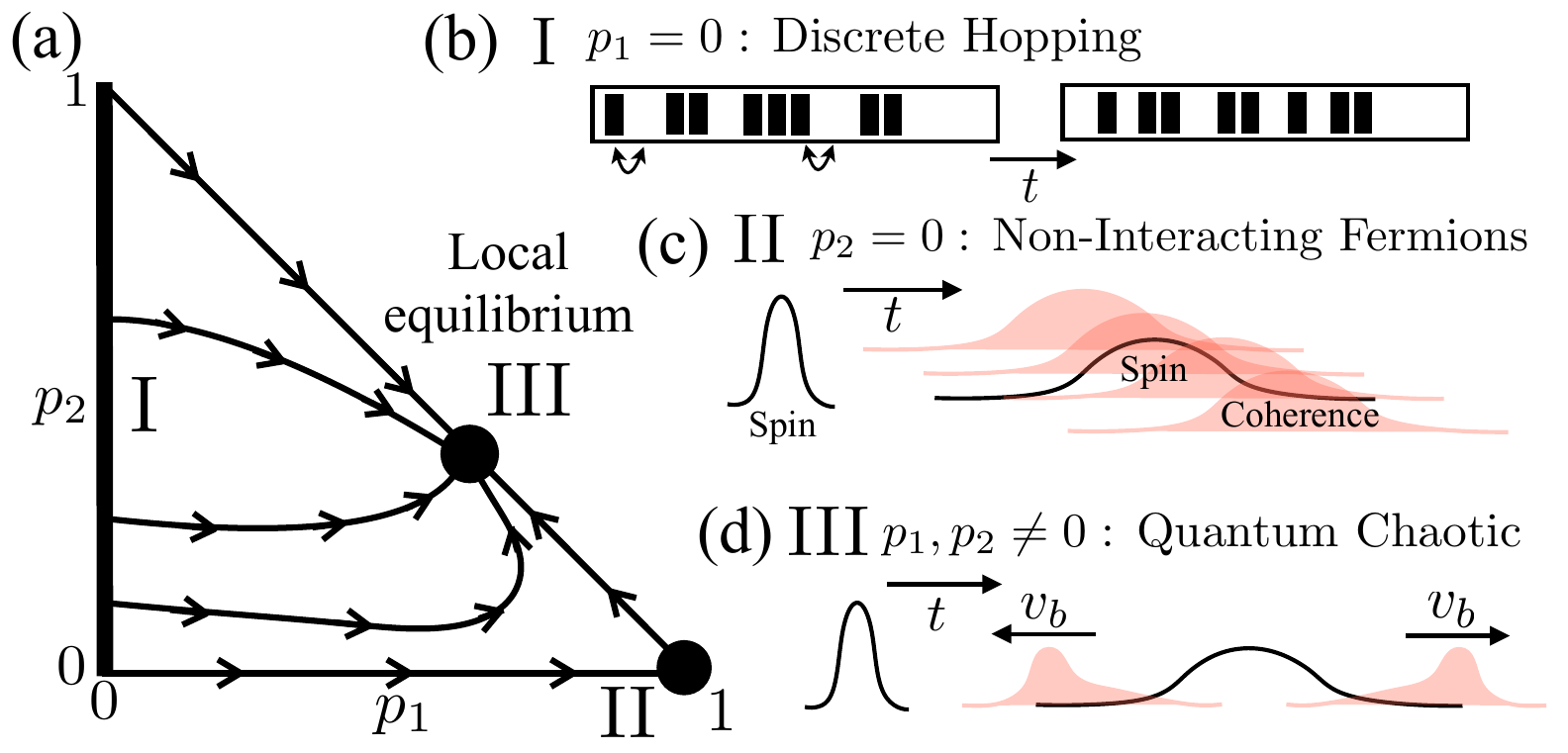}
\caption{(a) Phase diagram for small $\delta$ as a function of probabilities $p_1$ and $p_2$  determining the distribution of random unitaries in the random circuit.  We find three distinct phases in the thermodynamic limit.  Phase I is realized along the entire $p_1=0$ axis.  Phase II is realized for $p_2=0$ and $p_1 > 0$.  Turning on $p_1>0$ always destabilizes phase I, as indicated by the schematic ``flow lines,'' obtained by scaling up the system size for fixed values of the parameters (see Sec.~\ref{sec:crossovers} for a precise analysis of the crossovers).  Similarly, turning on $p_2>0$ in phase II causes a ``flow'' to phase III.
(b)   Phase I has a mapping to a classical hard-core lattice gas with discrete hopping.  (c) In phase II both the spin and the coherences move and spread diffusively. (d) In phase III, the coherences spread ballistically at the butterfly speed $v_B$, rapidly reaching the boundaries where they are ``decohered'' by the reservoirs (see Sec.~\ref{sec:lrc}).}
\label{fig:phase}
\end{center}
\end{figure}

\begin{table*}[tb]
\caption{Properties defining the three phases.  
Here, $I(L:R)$ for $\bar{\rho}$ is bounded by $\log L$.  In phases II and III, the results are derived only for small $\delta$.
}
\begin{center}
\begin{ruledtabular}
\begin{tabular}{cccccc} 
            & Transport & Operator spreading & Entropy production &$ \Delta S$         &  $\overline{I(L:R)}$  \\
\hline
I  & Diffusive/Fick's law&  Diffusive &$ \sqrt{t}$  & Volume  & 0  \\
II &          $''$          & Diffusive &  $\sqrt{t}$ & Volume & Volume \\

III &      $''$            & Ballistic &$ t \to \sqrt{t} $&  Area &  Area \\ 
\end{tabular}
\end{ruledtabular}
\end{center}
\label{tab:1}
\end{table*}%

The central results for the random circuit model can be summarized via the phase diagram shown in Fig.~\ref{fig:phase}(a), while the key features of the bulk dynamics in the three phases are illustrated in Figs.~\ref{fig:phase}(b)-\ref{fig:phase}(d).  
In Table \ref{tab:1}, we list some of their defining characteristics.  Since these three phases all have the same average $\bar\rho$, they cannot be distinguished by any simple time-averaged measurements.  But they do differ qualitatively in their $\mathbb{P}(\rho_{\rm NEAS})$, which can be seen 
by the scaling of the instantaneous total entropy and mutual information of the NEAS.  Alternatively, time-resolved measurements of two-point functions will generically distinguish these three phases.  We outline an interferometric approach in Sec.~\ref{sec:exp} that can be used to directly probe the nontrivial correlations that contribute to the mutual information.

These phases are further distinguished by the rate of entropy production in the reservoirs following a quench.  If the system starts, for example, in a pure product state with zero entropy, the initial entropy production due to coupling to the reservoirs is diffusive ($\sim \sqrt{t}$) in phases I and II, while it is ballistic $(\sim t)$ in phase III.  But if the initial density profile is different from that of the NEAS, the final entropy production in phase III is diffusive, as this profile diffusively approaches that of the NEAS.

\subsection{Operator spreading and emergence and violation of local equilibrium}
\label{sec:lrc}

Before describing our derivation of the phase diagram, we first give an ``informal'' general picture for the quantum  dynamics in this model.  We then provide a heuristic description of the entanglement structure and the emergence or not of local equilibrium in each of the three phases.

  To gain some intuitive understanding of the action of the three gate sets introduced in Eq.~(\ref{eqn:dmu})  on the qubits, it is more convenient to work in a fermion representation of the spins after a Jordan-Wigner transformation 
\be \label{eqn:jw}
  c_j = e^{i \pi \sum_{m=0}^{j-1} u_j} \sigma_j^{-}, ~ c_j^\dag = e^{i \pi \sum_{m=0}^{j-1} u_j} \sigma_j^{+}, ~
  \ee
  where $c_j~(c_j^\dag)$ is a fermionic annihilation (creation) operator acting on site $j$.
    In this representation, the gates in $\mu_0$ only induce discrete hopping of the fermions between sites, which leads to diffusive transport of the fermion density (i.e., magnetization) since the position of the gate is also chosen randomly.  In $\mu_1$, we allow ``partial swaps'' of the qubits, which, in the fermion representation, can break up local fermion density operators $n_i=c_i^\dagger c_i$ into creation and annihilation operators that act on different sites, but we forbid gates that induce interactions between the fermions.  
In $\mu_2$, the gates are allowed to induce random phases on each state of the fermion occupations, which generates interactions between the fermions, but there are no partial swaps in $\mu_2$.  These gates allow the operators $c_i^\dag$ and $c_i$ to generate local density operators $n_i=c_i^\dag c_i$ by mapping, for example, $c_i \to c_i n_{i+1}$, but they do not  break up the local  density operators.  By combining gates from $\mu_1$ and $\mu_2$, one can generate any two-qubit unitary that conserves the total fermion number.  

One of our motivations for distributing the gates according to Eq.~(\ref{eqn:dmu}) is based on the fact that ballistic operator spreading, which is associated with fast scrambling and quantum chaos, only emerges in a two-step process that requires both $p_1$ and $p_2$ to be nonzero.  The general picture for operator spreading in high-temperature quantum chaotic spin models without a conservation law was developed in Refs.~\cite{Nahum16,Nahum17,vonKeyserlingk17}.  More recently, these results were generalized to the case with a conservation law \cite{Khemani17,Tibor17}, where it was found that during the evolution induced by the random circuit, an initially nontrivial local operator $\mathcal{O}_i$ at site $i$ can  be decomposed into two components: a conserved component  that spreads diffusively with the diffusion constant $D$, and a nonconserved component distributed across an exponentially growing number of operator strings. Each operator string has a maximum length and number of nontrivial operators that scales as $v_B t$, where $v_B$ is the butterfly velocity.  Stated more precisely, we decompose the time-evolved operator into a complete operator basis formed by tensor product ``strings'' of operators from the set $\{ \mathbb{I}, n, c, c^\dag \}$
\be
\mathcal{O}_i(t) =\mathcal{O}_{i}^c (t)+\mathcal{O}_{i}^{\rm nc}(t) = \sum_{S_c} a_{S_c} S_c + \sum_{S_{\rm nc}} a_{S_{\rm nc}}S_{\rm nc},
\ee
where $S_c$ are operator strings composed of tensor products of $\mathbb{I}$ and $n$ operators only and $S_{\rm nc}$ consists of all other operator strings.  The ballistic spreading of the front is determined by the dynamics of $\mathcal{O}_i^{\rm nc}$.    To describe the qualitative features of the operator-spreading process, we first define a coarse-grained density of local density and creation operators 
\begin{align}
n_o(x,t) &=\frac{1}{\Delta}\sum_{|y-x|<\Delta} \sum_{\{ S_{\rm nc} :\, S_{\rm nc}^y = n  \}} |a_{S_{\rm nc}}|^2, \\
c_o(x,t) &= \frac{1}{\Delta} \sum_{|y-x|<\Delta} \sum_{\{ S_{\rm nc} :\, S_{\rm nc}^y = c \}} |a_{S_{\rm nc}}|^2, 
\end{align}
where $\Delta \gg 1$ is the coarse-graining scale.
At the ``front'' of the spreading operator, each of these components is at a low density, which allows us to neglect nonlinearities in their dynamics.  The linear hydrodynamics for these two fields has to take the form
\begin{align} \label{eqn:no}
\frac{d n_o}{dt}&= {D} \frac{d^2 n_o}{d x^2} - r_1 n_o + 2 r_2  c_o,\\
\frac{d c_o}{dt}& = {D} \frac{d^2 c_o}{dx^2} +  r_1 n_o,  \label{eqn:c}
\end{align}
where $r_1 \sim p_1$ and $r_2 \sim p_2$ are the rates for generating creation and local density operators, respectively.   These equations describe a runaway process whereby the noninteracting fermion gates from $\mu_1$ break up density operators into creation and annihilation operators, which then allows the generation of more density operators through the application of the interaction gates from $\mu_2$.   Because of the constant application of swap gates, the diffusion constant is always order one  in these random circuits, but, for small $p_1$ and $p_2$, Eqs.~(\ref{eqn:no})and (\ref{eqn:c}) predict the scaling of the butterfly velocity as
\be \label{eqn:vb}
v_B^2 \sim {D} \min(\sqrt{p_1 p_2},p_2).
\ee
 The asymmetry between $p_1$ and $p_2$ arises from the fact that the partial swaps in $\mu_1$ only move operators but do not produce new operators, while the interactions in $\mu_2$ can make new density operators. To derive the scaling in (\ref{eqn:vb}), we thus have to consider the limits of small $p_1$ and $p_2$ separately.  
For $p_2 \ll p_1$, the front is a region that has diffusing $c$ and $c^{\dag}$ operators, but an underpopulation of $n$ operators.  By diffusion, if the front is moving with $v_B$, the width of the front is about ${D}/v_B$.  Thus, the total net rate of $n$ production over the front  is about $p_2 {D}/v_B$  and this has to supply the needed $n$ operators to advance the front, implying $p_2 {D}/v_B  \sim v_B$  or  $v_B^2 \sim D p_2$.  In the opposite limit of $p_1 \ll p_2$,  the front is a region where the $n$ operators are diffusing, but there is an underpopulation of $c$ and $c^\dag$ operators.   The $p_2$ process uses these $c$ operators to make the $n$ operators at the needed rate, which  gives  $p_2 c_o {D}/v_B  \sim v_B$.  The $p_1$ process breaks up the $n$ operators into $c$ and $c^\dag$ operators in this region at the rate needed to advance the front, 
which gives  $p_1 {D} v_B \sim c_o\, v_B$.  Dividing these two equations, we get  $c_o^2 \sim p_1/p_2$  and $ v_B^4 \sim  {D}^2 p_1 p_2$. 
For either $p_1$ or $p_2$ equal to zero, the butterfly velocity is exactly zero, which is the origin of the two distinct nonchaotic phases described in the introduction.

\begin{figure}[tb]
\begin{center}
\includegraphics[width=0.49 \textwidth]{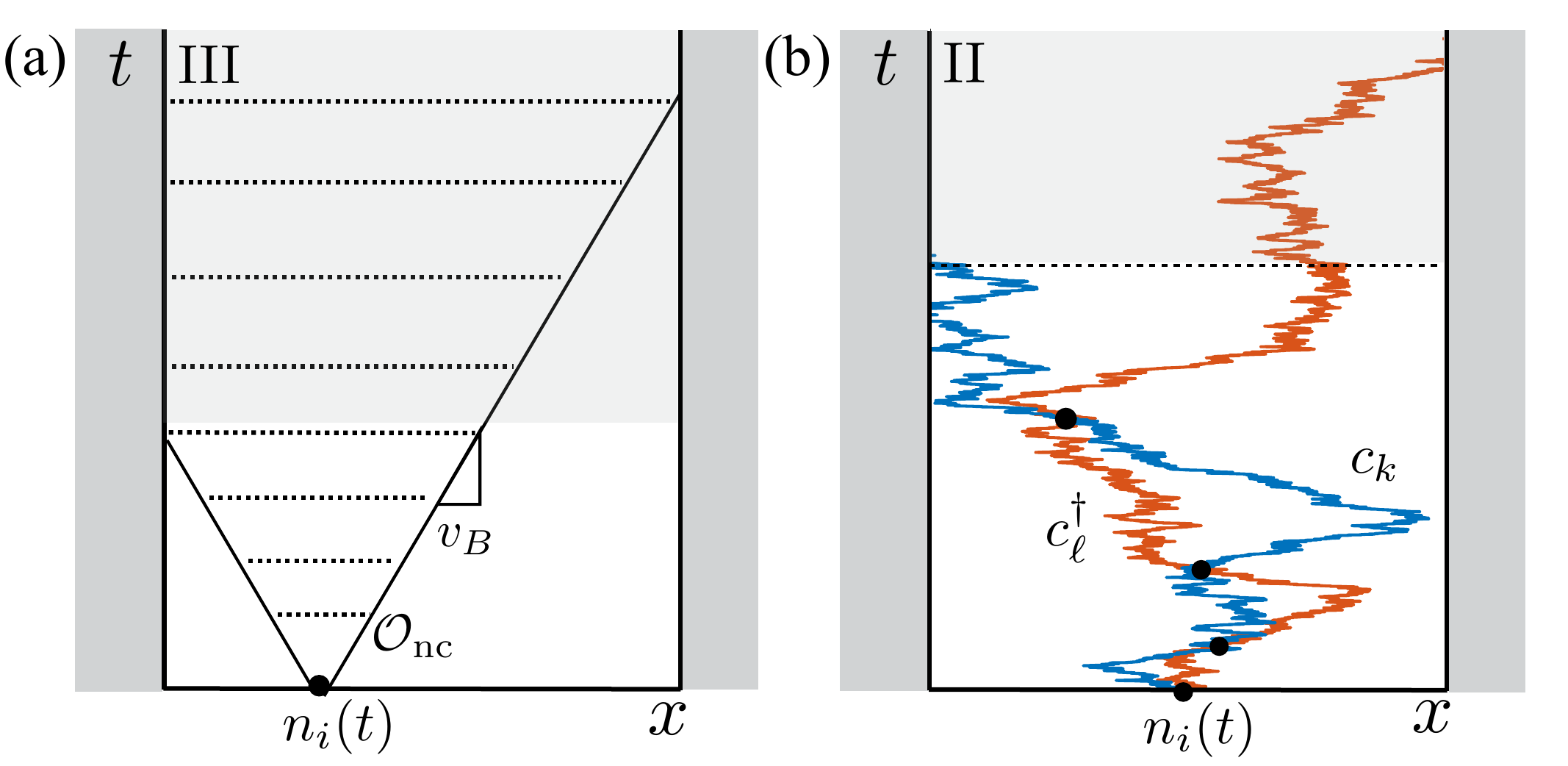}
\caption{Operator dynamics in the open system.  (a) In the quantum chaotic phase III, initially local operators generate highly entangled non-conserved operators $\mathcal{O}_{\rm nc}$ that spread ballistically at the butterfly velocity $v_B$.  These correlations are encoded in  extensive  operator strings, which are decohered at the boundary in a time of about $ L/v_B$.  Such processes only contribute a subextensive amount to the mutual information and entanglement in $\rho_{\rm NEAS}(t)$ because their production rate, given by an Ohm's law of dissipation $J^2 \sim D \delta^2/L^2$, is not enough to compensate for their short lifetime. (b) In the noninteracting fermion phase II, an initially local density operator can be split into a nonlocal pair of fermion operators.   This pair lives for a diffusive time in the system $\tau_{\rm Th} =L^2/D$ before one of these operators reaches the boundary, where it becomes decohered by the reservoir.  Such nonlocal correlations give rise to the extensive mutual information and entanglement in $\rho_{\rm NEAS}(t)$ since their production rate $J^2 \sim D \delta^2/L^2$ is sufficiently fast compared to their lifetime $\tau_{\rm Th}$.   }
\label{fig:operdyn}
\end{center}
\end{figure}

We now move to the description of the long-time behavior of the current-driven problem with open boundaries.  The crucial feature of the quantum chaotic phase for finite $p_1$ and $p_2$ (phase III) is the presence of ballistic operator spreading, whereas the operator spreading is diffusive in phases I and II, where $p_1=0$ and $p_2=0$, respectively.  This separation of timescales between the spreading of correlations and the diffusion of the conserved charge leads to a simple heuristic picture for the emergence of local equilibrium, which is illustrated in Figs.~\ref{fig:operdyn}(a)-\ref{fig:operdyn}(b). Because of the action of the partial swaps in phases II and III, the system's NEAS density matrix 
is constantly ``emitting'' nonconserved operators $\mathcal{O}_{\rm nc}$ at a rate per site that scales as the square of the current $J^2 \sim D \delta^2/L^2$ (where $\delta$ is the end-to-end difference of the  local magnetization, and $L$ is the length of the system). 
In phase III, these nonconserved operators spread ballistically at speed $v_B$ and thus only ``live'' for a time of about $ L/v_B$ before they reach a boundary and are ``absorbed'' (decohered) by that reservoir [see Fig.~\ref{fig:operdyn}(a)], implying that they can only accumulate to a density on the order of $D \delta^2/Lv_B$.  Since these nonconserved operators are needed to preserve the unitary evolution in the bulk, when they are quickly lost to the reservoir and remain at a low density, the deviation $\Delta S = S(\rho_{\rm LE})-\overline{S(\rho_{\rm NEAS})}$ remains small.   

Understanding the volume-law deviation of the entropy from local equilibrium in the other two phases requires separate consideration:  In phase I, the argument above no longer applies because the absence of partial swaps implies that the NEAS has a zero rate for ``emitting'' nonconserved operators.  In this case, the extensive deviation from local equilibrium is encoded in the single-site spin densities: In the NEAS, each site came most recently from one of the two reservoir,s and it still has the same spin density.  In phase II, the ``emission'' process is present, but the nonconserved operators spread only diffusively; thus, they live for a time of about $L^2/D$ and build up to a constant density in the thermodynamic limit on the order of $\delta^2$ [see Fig.~\ref{fig:operdyn}(b)].  We find that this finite density of non-conserved operators leads to a volume-law deviation of the entropy, mutual information, and entanglement away from local equilibrium.  

\subsection{Phase diagram}

In this section, we describe our derivation of the phase diagram for the NEAS as a function of $p_1$ and $p_2$ with many of the technical details underlying this analysis provided in the appendixes.

\subsubsection{Phase I: Discrete hopping limit} 

For $p_1=0$, we can see, by inspection, that diagonal product states of the form
\be
\rho_{\bm \tau} = \bigotimes_{i=1}^{L}[ \tau_i \rho_{m_L} + (1- \tau_i) \rho_{m_R}]
\ee
are NEASs of the random circuit, where $\tau_i$ is a pseudospin variable that keeps track of whether a given density operator was originally inserted from the left or right reservoir.  The attracting nature of these states arises because the random phases for the gates in (\ref{eqn:p2gates}) do not affect this type of product state, while the SWAPs simply rearrange the configuration of on-site density matrices.  If the initial state has any off-diagonal coherences in this $u$, $d$ basis, they will diffuse to the boundaries and ``disappear'' into the reservoirs.  The dynamics within the NEAS manifold can then be mapped to the SSEP where  $\rho_{m_L/m_R}$ maps to a pseudospin-up or pseudospin-down state at a given site and the pseudospin reservoirs are fully polarized.  This result allows us to characterize the entire distribution function of the NEASs through the relation $\mathbb{P}(\rho_{\bm \tau}) = P(\bm{\tau})$ for $P(\bm{\tau})$ satisfying Eq.~(\ref{eqn:dpdt}) with $\delta = 1$.  Note that this case is perhaps an unconventional perspective on the classical SSEP, which is typically formulated as an intrinsically stochastic and dissipative process. Instead, we view the dynamics as produced by one particular circuit, and for finite $L$ for that specific circuit there is a unique time-dependent absorbing state $\rho_{\bm \tau}(t)$.  

We can find the average entropy of the NEASs by noting that 
\be \label{eqn:srhoi}
S(\rho_{\bm \tau}) = N_L S(\rho_{m_L}) + N_R S(\rho_{m_R}),
\ee
where $N_L$ is the total number of pseudospin-up states of the chain and $N_R=L-N_L$ is the number of pseudospin-down states.  Because of the pseudospin $z$ inversion symmetry, $\overline{N_L}=\overline{N_R} = L/2$.  Considering the antisymmetric case $m_L = -m_R=\delta/2$ and comparing to local equilibrium, we find,  after averaging Eq.~(\ref{eqn:srhoi}) over circuits,
the volume-law correction
\be \label{eqn:delSssep}
\begin{split}
\Delta S & =  S(\rho_{\rm LE}) - \overline{S(\rho_{\bm \tau})} \\
& =\frac{L}{2} - \frac{L}{2 \delta} (1-\delta^2)\tanh^{-1}(\delta) + O(L^0).
\end{split}
\ee
In the limit $\delta \to 1$, the NEAS approaches a pure state, and this deviation reaches its maximum possible value.

\subsubsection{Phase II: Diffusive noninteracting fermions}
\label{sec:phaseII}

  For $p_2=0$,  the NEASs have exact representations as Gaussian fermionic states because the dynamics are equivalent to that of noninteracting fermions and, in the fermion representation, the reservoirs are clearly Gaussian states.  Such states are uniquely determined by their two-point function \cite{Chung01,Cheong04,Peschel09}:
\begin{align}
G_{ij}& = \trace[ \rho \,c_i^\dagger c_j ],  \\ \label{eqn:rhoNIF}
S(\rho) & = - \trace [(\mathbb{I}-G) \log (\mathbb{I}-G) ] - \trace [G \log G],
\end{align}
where $c_i$ are fermion annihilation operators [see (\ref{eqn:jw})].  The operators $c_i$ spread diffusively in this random circuit, leading to diffusive spreading of both the spin density $n_i = c_{i}^{\dagger}c_i$ and the coherences $c_{i}^{\dagger}c_j$ for $i\ne j$ [see Fig.~\ref{fig:phase}c].  Again, any non-Gaussian features of the initial state will diffuse to the boundaries and disappear into the reservoirs, leaving the Gaussian NEAS at long times.
The presence of the log of the two-point function makes it difficult to compute the average entropy.  For simplicity, we restrict ourselves to anti-symmetric reservoirs $m_L=-m_R=\delta/2$ and small $\delta$, which allows us to expand the $\log$ by transforming into an eigenbasis of $G$ at each instance of time
\be \label{eqn:entNIF}
\begin{split}
\overline{ S(\rho_{\rm NEAS}) } & \approx -2  \trace [ \bar{G} \log \bar{G} ] -2 \trace [ \overline{ \delta G^2}]  ,
\end{split}
\ee
where  $\delta G = G - \bar{G}$.  Since $\bar{G}$ is just given by the linear magnetization profile, the first term is the entropy of local equilibrium, while the second term accounts for the deviations arising from additional correlations and is determined by the covariance matrix 
\be \label{eqn:cov}
A_{ij} \equiv [\overline{ \delta G^2}]_{ij}= (1-\delta_{ij})  \overline{\lvert \langle c_i^\dag c_j \rangle \lvert^2 }  + \delta_{ij} \big( \overline{\mean{n_i}^2 } - \overline{\mean{n_i}}^2\big).
\ee
To solve for $A_{ij}$, we work in the scaling limit $(L\to \infty)$ for $p_1L^2 \gg 1$ and only compute the lowest-order correction in a $1/L$ expansion.  Defining the coordinates $ x = i/L$ and $y=j/L$, we introduce the variables $a(x,y)=A_{xL,yL+1}$ and $h(x)=A_{xL,xL}$.  The restriction to nearest-neighbor gates implies that, away from the diagonal $x =y$, $a(x,y)$ satisfies a diffusion equation with boundary conditions $a(x,1)=a(0,y)=0$.  Integrating out the $h(x)$ variable, one finds that $a(x,y)$ has a constant source term along the diagonal given by $-2 J^2 L \delta(x-y)= -2 \delta^2\delta(x-y)/L$.
This diffusion problem has the solution (for $x<y$)
\be \label{eqn:axy}
a(x,y) = \frac{ x (1-y)}{L} \delta^2 +O(L^{-2}) +  O (\delta^4)~.
\ee
The deviation from local equilibrium can be expressed perturbatively in $\delta$ as
 \be
 \begin{split}
\Delta S&\approx   4 L^2  \int_0^1 dy \int _0^y dx \, a(x,y) =  \frac{\delta^2}{6}L~.
 \end{split}
 \ee
 which has a volume-law correction away from both local equilibrium and the average entropy of phase I.  We can also compute the mutual information of two sections of the chain cut at a point $z\in(0,1)$,
 \begin{align}
\overline{I(L:R)}& =   \delta^2 z^2(1-z)^2 L + O(L^0)+O(\delta^4)~.
 \end{align}
In sharp contrast to the $p_1=0$ solution, we find that the NEASs have a volume-law scaling of the mutual information, indicating that these states are highly correlated.  

For mixed states, the mutual information is not a direct measure of entanglement as it can be dominated by classical correlations \cite{Eisert10}.  In Sec.~\ref{sec:entphase} we explicitly show that the NEAS density matrix has volume-law entanglement  for sufficiently large $\delta$ by computing a lower bound on the logarithmic negativity.   When measured according to the fermionic logarithmic negativity recently introduced by Shapourian, Shiozaki, and Ryu \cite{Shapourian17}, we find that this volume-law scaling persists down to arbitrarily small $\delta$.  This result establishes that the NEAS density matrix is driven to a nonseparable state in the large-$L$ limit. 
Thus, the volume-law deviation of the entropy in phase II arises from  entanglement and nonlocal correlations, while the deviation in phase I arises from ``classical,'' single-site magnetizations.  

\subsubsection{Phase III: Quantum chaotic}

  For nonzero values of $p_1$ and $p_2$, the dynamics in the bulk are quantum chaotic as discussed in Sec.~\ref{sec:lrc}.  Recall that the effective butterfly velocity $v_B$, which measures the speed of the ballistic operator front, scales as $v_B^2 \sim \min(\sqrt{p_1 p_2},p_2)$, while the diffusion constant for conserved quantities is identical for all values of $p_1$ and $p_2$.  The calculation of the average entropy in this region is more difficult than in phases I and II because this random circuit does not map to an integrable model.  To approach the calculation analytically, we instead work perturbatively in $\delta$.  This method allows us to expand $ \log \rho_{\rm NEAS} = \log(\bar{\rho}+\rho_{\rm NEAS}-\bar{\rho})$ to derive an expression for the average entropy similar to Eq.~(\ref{eqn:entNIF}),
\be \label{eqn:avgent}
\begin{split}
\Delta S & \approx   2^{L-1} \big( \trace[ \overline{\rho_{\rm NEAS}^2}] - \trace[ \bar{\rho}^2] \big),
\end{split}
\ee
which reduces the computation of the average entropy to the easier task of computing the average purity.  The details of this calculation are described in Appendix \ref{app:phaseIII}, while we give a brief summary here, with more details provided in the discussion of the crossover behavior in Sec.~\ref{sec:crossovers}.
The approach we take is to first derive exact steady-state equations for the average replicated density matrix $\overline{\rho_{\rm NEAS}\otimes \rho_{\rm NEAS}}$.  The  solution to these equations can then be mapped to the NESS of a fictitious six-species stochastic lattice gas model, which we refer to as the $abc$ model.  Here, the $a$ particles represent the off-diagonal coherences of the density matrix, while the $b$ and $c$ particles represent different types of density-density correlations.  To solve this model, we use the ansatz that the $n$-point connected correlation functions of ${\overline{\rho_{\rm NEAS} \otimes \rho_{\rm NEAS}}}$ have a scaling as $\delta^n/L^{n-1}$ or smaller.  This type of scaling is well known for the SSEP and can be proved exactly for phases I and II.   
 Using this ansatz, for $ p_{1,2} L^2 \gg 1$, the lowest-order correction to the entropy deviation from local equilibrium is given by
 \be \label{eqn:delsiii}
 \Delta S =\frac{\alpha_1 \delta^2}{L p_1} + \frac{\alpha_2 \delta^2}{L p_2}  +O(\delta^4) + O(L^{-2}),
 \ee
 where $\alpha_{1}=(3-p_1)$, $\alpha_2=2+\sqrt{2 p_2/(2-p_2)}$ are order-one constants computed from the perturbative solution to the the $abc$ model (see Appendix \ref{app:phaseIII}).  As a result, the average entropy is given by that of local equilibrium in the large-$L$ limit.  This behavior is similar to the SSEP, where the leading-order $L^0$ term in $\Delta S$ at small $\delta$ scales as $\delta^4$.  We can bound the higher-order corrections to $\Delta S$ since $S(\rho_{\rm LE}) \ge S(\bar{\rho}) \ge \overline{S(\rho_{\rm NEAS})}$ and $S(\bar{\rho})$ is equal to $S(\rho_{\rm LE})$ up to a constant correction \cite{Bahadoran07,Derrida06}.  
 To second order in $\delta$ and for $ p_{1,2} L^2 \gg 1$, we  find that the mutual information  between two halves of the chain has the scaling
 \be \label{eqn:mutinfiii}
 \overline{I(L:R)} = \frac{\alpha_3 \delta^2}{L^2 p_2^{3/2}} + O(\delta^4)+O(L^{-3}),
 \ee
 where $\alpha_3 = \sqrt{2(2-p_2)}$ is computed from the $abc$ model. 
Based on our analysis, we suspect that $\mathbb{P}(\rho_{\rm NEAS})$ is concentrated near $\bar{\rho}$, whose mutual information is bounded by $\log L$ for large $\delta$, but further analysis is required to make more definitive statements about the mutual information at large values of $\delta$.

\subsection{Crossovers between phases}
\label{sec:crossovers}

Here, we carefully analyze the finite-size scaling of the crossovers between each phase to gain further insight into the relevant length scales that describe the $\textrm{NEASs}$.  Conveniently, we have an analytic description of the entire phase diagram for this model using the perturbative solution to the  $abc$ model for $\overline{\rho_{\rm NEAS}\otimes\rho_{\rm NEAS}}$ introduced in Appendix \ref{app:phaseIII}.  Essentially, to second order in $\delta$, the average replicated density matrix for antisymmetric reservoirs ($m_L=-m_R$) is entirely determined by six correlation functions: the average magnetization profile $\overline{\mean{u_i}}$, the average connected spin-spin correlations $\tau_{ij}\equiv \overline{\mean{u_i u_j}} - \overline{\mean{u_i}} \overline{\mean{u_j}}$, and the average fluctuations 
\begin{align}
h_i &\equiv \overline{\mean{u_i}^2} - \overline{\mean{u_i}}^2 ,\\
b_{ij}&\equiv \mean{\mean{u_i u_j}^2}_c + \mean{\mean{u_i d_j}^2}_c,\\
B_{ij}&\equiv \mean{\mean{u_i u_j}^2}_c - \mean{\mean{u_i d_j}^2}_c, \\
a_{ij} & \equiv \overline{\lvert \mean{\sigma_i^{+} \sigma_j^{-}}\lvert^2},
\end{align}
where $\mean{\mean{AB}^2}_c\equiv \overline{\mean{AB}^2}- \overline{\mean{A}^2} \overline{\mean{B}^2}$ denote the connected correlations defined with respect to the single-site fluctuations.  For antisymmetric reservoirs, these correlations are invariant under $\delta \to -\delta$, which sends $u_i \to d_i$ and vice versa.  The average connected spin-spin correlation is known exactly from the mapping of $\bar{\rho}$ to the NESS of the SSEP and is given by (see Appendix \ref{app:SSEP}):
\be
\tau_{ij} = - \frac{i (L+1-j) }{L (L+1)^2}\delta^2.
\ee
  We define the scaled variables $x\equiv i/L$ and $y\equiv (j-1)/L$, then, expanding to lowest order in $1/L$ and second order in $\delta$, we find that these correlation functions satisfy a set of diffusion equations away from the diagonal $x=y$,
\begin{align} \label{eqn:hxmt}
h''(x)&=  \frac{4 p_1 L^2  }{3- p_1} \delta J^2(x) - 2 \delta^2, \\ \label{eqn:axymt}
 \nabla^2 a(x,y) &= \frac{4 p_2 L^2}{2-p_2} a(x,y), \\
 ~\nabla^2 b(x,y)&= \frac{4 p_1 L^2}{3- p_1} b(x,y), ~\nabla^2 B(x,y)= 0,
\end{align}
where $\delta J^2(x) \equiv h(x) +B(x,x) - a(x,x)-\tau(x,x)$ gives the corrections to the current  fluctuations as compared to local equilibrium.  In terms of these variables, we can compute the entropy and mutual information as
\begin{align}
\Delta S&  = 2L \int_0^1 dx h(x) \\ \nonumber
&+ 4L^2 \int_0^1 dy\int_0^y dx \,  [a(x,y) +2 b(x,y)], \\ 
\overline{I(L:R)} & = 4 L^2 \int_{z}^{1} dy \int_{0}^{z} dx\,  [a(x,y)+2 b(x,y)].
\end{align}
The boundary conditions are such that all the connected correlation functions vanish for $x$ or $y$ equal to $0$ or $1$.  There is also a set of boundary conditions on the diagonal, which mix these different functions and result in a nontrivial steady-state.  The boundary condition for the coherences takes the form
\be \label{eqn:aboundc}
(\del_x - \del_y) \, a(x,y)\lvert_{x=y} \, =  \frac{4 p_1 L}{6-3 p_2}\Big[\frac{ \delta^2}{2 L^2}  +  \delta J^2(x) \Big],
\ee
which provides a microscopic justification for the heuristic picture outlined in Sec.~\ref{sec:lrc}, where we argued that the action of the partial swaps leads to nonconserved operators being generated at a rate per site proportional to the average current fluctuations.  From  Eq.~(\ref{eqn:aboundc}), we find that there is an additional source term $\delta J^2(x)$, which accounts for the deviations of the current fluctuations from local equilibrium and has to be found self-consistently. 
  The full expressions for the diagonal boundary conditions for $b(x,y)$ and $B(x,y)$ are given in Appendix \ref{app:phaseIII}.  Solving these equations we find that $b(x,y)$ is only nonzero at order $\delta^4$ for all values of $p_1$ and $p_2$.  Our expectation from the SSEP is that the $\delta^4$ corrections  scale with a higher power of $L$, which would imply that these density-density fluctuations only give rise to an area-law correction to the entropy and mutual information.  The field $B(x,y)\sim \delta^2/L$ has a nontrivial dependence on $p_1$ and $p_2$, but this field gives no direct contribution to the entropy at second order in $\delta$.

Using these coupled diffusion equations, we can obtain a quantitative picture for the crossover behavior.  The results are summarized in Fig.~\ref{fig:cross}.  In panel (a), we identify the  crossover boundaries between the three phases as determined by the scaling of $\Delta S$ and $\overline{I(L:R)}$, while panel (b) schematically shows the functional behavior of both quantities in the crossover from (A) phase I to II, (B) phase II to III, and (C) phase III to I. 

\begin{figure}[tb]
\begin{center}
\includegraphics[width = .47 \textwidth]{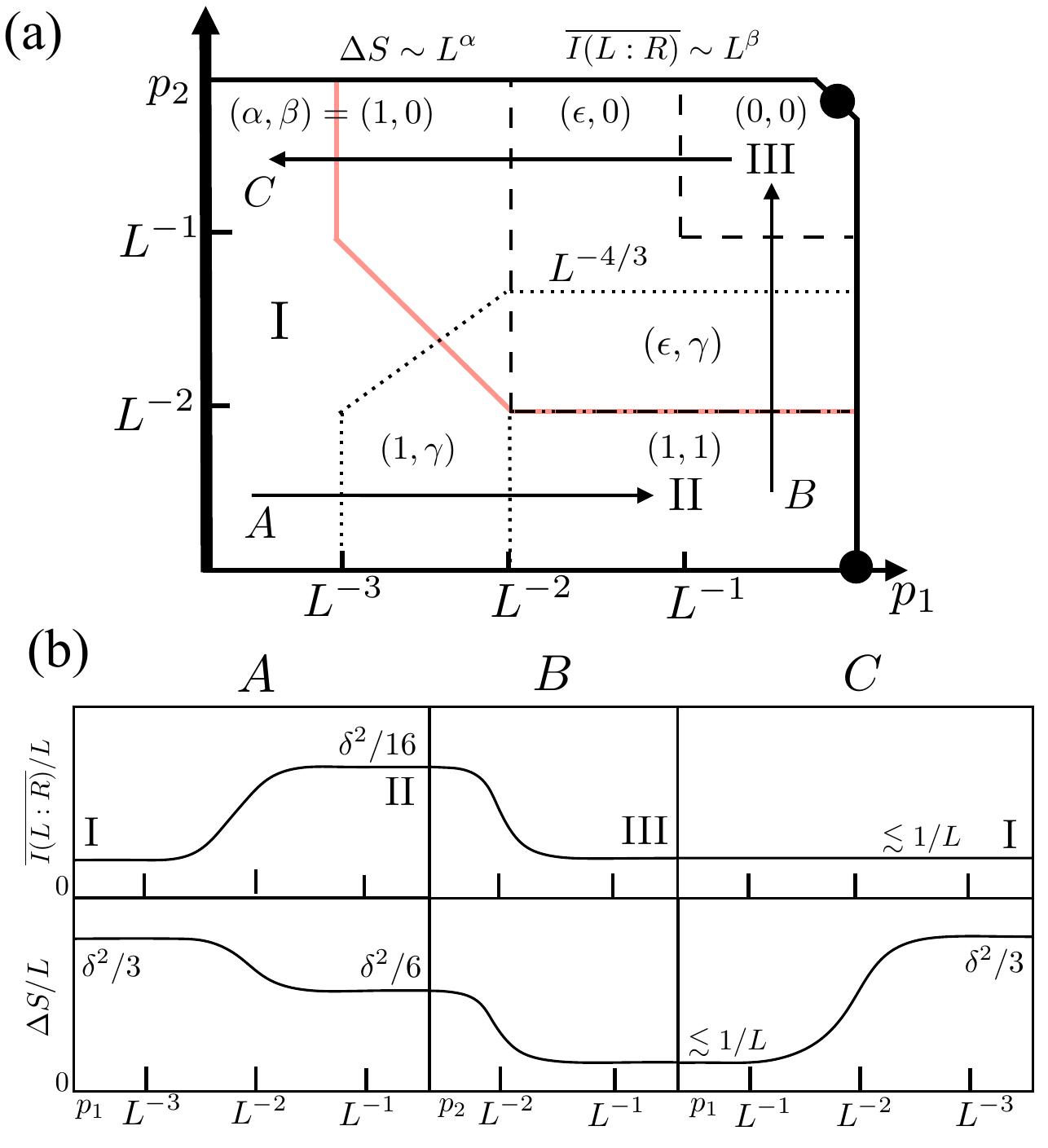}
\caption{(a) Crossover boundaries for the NEAS phase diagram for small $\delta$.  Dotted lines denote mutual information crossovers, dashed lines denote crossovers in the von Neumann entropy, and the pink boundary denotes the crossover from diffusive to ballistic coupling to the reservoirs.  Here, $0 < (\alpha,\beta) < 1$ are the scaling exponents for $\Delta S$ and $\overline{I(L:R)}$, respectively.  The exponents $\epsilon$ and $\gamma$  denote the crossover values for the scaling exponents and take continuous values between 0 and 1.  The $L^{-4/3}$ scaling for the mutual information arises because the long-range correlations that contribute to the mutual information are more strongly suppressed than the single-site density fluctuations that only contribute to $\Delta S$.  (b) Schematic behavior of the entropy deviation density and the mutual information density for the three cuts in parameter space labeled in panel (a).}
\label{fig:cross}
\end{center}
\end{figure}

\emph{Phase I to II crossover.---}The phase I to II crossover is distinct from those to phase III because all operators spread diffusively in both phases so ballistic operator spreading does not play a role.  Instead, one can understand the crossover solely in terms of the dynamics of diffusive noninteracting fermions.  In phase I, the fermion operators that have an expectation value in the NEAS are predominantly paired into density operators $n_i = c_i^\dagger c_i$.  Once injected into the system from one of the reservoirs, these operators live for a time on the order of the Thouless time $\tau_{\rm Th}= L^2/D$ before escaping back out to the reservoirs.  However, if a  partial swap gate acts on this operator within $\tau_{\rm Th}$, then $n_i$ will break up into a pair of creation and annihilation operators.  These operators then evolve independently in the system before escaping to the reservoir.  Their common origin from the initially local density operator gives rise to the transient long-range entanglement in the system.   In phase II, this is in fact the dominating physics as the density operators are  almost immediately broken up into  pairs of independent creation and annihilation operators once they enter the system.  Based on this argument, we expect the crossover to occur for $p_1 L^2 \sim 1$.   

We can verify this scaling directly using  Eqs.~(\ref{eqn:hxmt}) and (\ref{eqn:aboundc}).  For $p_1 L^2 \ll 1$, Eq.~(\ref{eqn:hxmt}) has the approximate solution
\be
h(x) \approx  x(1-x) \delta^2, 
\ee
which, unlike local equilibrium, leads to order $L^0$ current fluctuations $\delta J^2(x)$ instead of $L^{-2}$.  Thus, despite the small value of $p_1$, these large current fluctuations allow the partial swaps to drive coherences on the diagonal at the rate per site of about $p_1$,
\be
(\del_x - \del_y) \, a(x,y)\lvert_{x=y} \, =  \frac{4 p_1 L}{6-3 p_2} \delta^2 x(1-x).
\ee
This diagonal boundary condition leads to the scaling $a(x,y) \sim p_1 L $ and $\overline{I(L:R)} = \beta(p_1,p_2) p_1 \delta^2 L^3 $ for an order-one constant $\beta(p_1,p_2)$, which implies the emergence of the volume-law mutual information as $p_1L^2$ approaches 1.  Remaining in phase I, but increasing $p_2 L^2$ above 1, this scaling becomes instead $\overline{I(L:R)}=\beta(p_1,p_2) p_1 \delta^2/p_2^{3/2}$, which leads to the black-diagonal crossover boundary for the mutual information shown in Fig.~\ref{fig:cross}(a).

We now return  to the description of the phase I to II crossover on the phase II side ($p_1 L^2 \gg 1$). Unlike the phase I side, the current fluctuations in this regime are close to their value in local equilibrium.  Away from order $1/\sqrt{p_1}$ sites near the boundary, they are given by
\be
\delta J^2(x) \approx \frac{3 \delta^2}{2 p_1 L^2} - \frac{\delta^2}{2 L^2}.
\ee
Inserting this into Eq.~(\ref{eqn:aboundc}) implies that the diagonal source term for $a(x,y)$ becomes independent of $p_1$, which reflects the fact that the density operators are quickly broken up by the partial swaps within the Thouless time.  Although the current fluctuations are weak in this regime, the rapid application of the $p_1$ gates allows the coherences to build up to the large density needed to recover the volume-law mutual information.

\emph{Phase II to III crossover.---}Because of its broader physical significance to disordered metallic systems, the physical picture underlying the phase II to III crossover is discussed in more detail in Sec.~\ref{sec:appLE}.  Since the butterfly velocity scales as $\sqrt{p_2}$ in this regime, there is a natural  crossover scale defined by the value of $p_2$ for which $L/v_B \sim \tau_{\rm Th}$, which suggests that the crossover occurs when $p_2 L^2 \sim 1$.  This is readily verified using Eq.~(\ref{eqn:axymt}), which shows that when $p_2 L^2 \gg 1$, the $a(x,y)$ is exponentially damped away from the diagonal.  The entropy deviation and mutual information in this regime are given by Eqs.~(\ref{eqn:delsiii}) and (\ref{eqn:mutinfiii}), which  determine the boundaries shown in Fig.~\ref{fig:cross}(a).  

\emph{Phase I to III crossover.---}The phase I to III crossover, which occurs in the regime $p_1<p_2$, is distinct from the other two crossovers because the system can no longer be treated as noninteracting or weakly interacting fermions in the crossover regime.  Moreover, because of the weak scaling of the butterfly velocity as $(p_1p_2)^{1/4}$ in this regime, the ballistic operator spreading can play a secondary role in determining the properties of the NEAS crossover.  This feature of the phase I to III crossover follows from a similar argument that was used in describing the phase I to II crossover.  Since the partial swaps are needed to break up the single-site density operators, the action of the $p_1$ gates is crucial to realizing local equilibrium; however, for $p_1L^2 \ll 1$, the probability of the partial swaps acting before the density operators are diffusively exchanged with the reservoirs is small.  The distinction from the phase I to II crossover is that the ballistic operator spreading can lead to a rapid dissipation of the nonconserved operators to the reservoirs even for $p_1 L^2 \sim 1$.  In fact, according to our expression for the butterfly velocity, this overdamping of the coherences persists until $p_1p_2 L^4 \sim 1$.   Of course for too-small values of $p_1$, our derivation of $v_B$ no longer applies because the entire system reaches the NEAS after application of about $L^3$ gates (note that this is the number of gates applied to the system within a Thouless time $\tau_{\rm Th} =  L^2/D$); however, $p_1 L^3 \sim 1$ is still a much smaller scale than the crossover scale set by the Thouless time of $p_1 L^2 \sim 1$.  

This overdamping of the nonconserved operators motivates us to consider an approximate description of the phase I to III crossover in which we model decoherence induced by the reservoir as an instantaneous measurement of the density matrix in the $z$ basis after each unitary is applied.  In this simplified model, the density matrix always remains in a diagonal mixed state such that the dynamics have an effective classical description.  More specifically, each time step of the random circuit is replaced by a randomly chosen operation on the diagonal components of the nearest-neighbor density matrices that, with probability $1-p_1$, applies either the identity or SWAP operation and, with probability $p_1$, applies a transition matrix that is the identity except when acting on the operators $\{ u_{i}d_{i+1}$,$d_{i}u_{i+1}\}$, where it takes the form
\be
\left( \begin{array}{c c c c}
 \cos^2 (\theta/2) & \sin^2 (\theta/2)  \\
 \sin^2 (\theta/2) & \cos^2 (\theta/2)  \\
\end{array}
\right),
\ee
where $\theta$ is a random variable drawn from $[0,\pi]$ with probability density $\sin \theta$.    The average replicated density matrix for this model is described by the same $abc$ model as the random circuit, with the restriction that configurations with $a$ particles are no longer allowed.  As a result, to second order in $\delta$, $\overline{\rho \otimes \rho}$ is characterized  in terms of the correlation functions $\overline{\mean{u_i}}$, $\tau_{ij}$, $h_i$, $b_{ij}$, and $B_{ij}$.  Furthermore,  these correlations satisfy the same equations of motion as the random unitary circuit with the constraint  $a(x,y)=0$.   While both models exhibit the same crossover scale at $p_1 L^2 \sim 1$, we can gain some physical intuition by considering the simpler model without coherences.  For $p_1=0$, we have  the same result as for the random unitary circuit --- that $\mathbb{P}(\rho_{\rm NEAS})$ can be found from the NESS for the SSEP.  On the other hand, for large values of $p_1$, the stochasticity induced by allowing finite $\theta$ implies that even a single realization of the  NEAS will be close to the NESS of the SSEP.  In this case, the Thouless time is the only relevant timescale in the problem since all the dynamics are diffusive, which forces the crossover scale to be $p_1 L^2 \sim 1$. 

\subsection{Volume-law entanglement in phase II}
\label{sec:entphase}

In this section, we show that the entanglement in phase II follows a volume-law scaling using our analytic solution for the covariance matrix and exact numerical simulations of the long-time evolution of the random circuit.
The von Neumann entropy and mutual information are not direct measures of entanglement for mixed-state density matrices.  Although a variety of entanglement measures for mixed states have been introduced \cite{Horodecki09}, these measures are generally harder to compute than the von Neumann entropy because of the  difficulty of distinguishing entanglement in many-body systems from nonlocal, classical correlations, which, for mixed states, can also be generated by local operations and classical communication  (LOCC) between the two regions.  One efficiently computable measure of entanglement in the Hilbert space dimension is the logarithmic negativity $\E(\rho)$, which was originally introduced by Vidal and Werner \cite{Vidal02} and later proven to be an entanglement monotone under LOCC by Plenio \cite{Plenio05}.    Note that 
$\E(\rho)$ has the property that, if it is nonzero, then the density matrix is nonseparable (i.e., entangled) between the two regions.  In bosonic or spin systems, $\E(\rho)$ is naturally defined using the partial transpose operation,
\be \label{eqn:Ta}
\bra{i_A,j_B}\rho^{T_A} \ket{\ell_A,k_B} \equiv \bra{\ell_A,j_B} \rho \ket{i_A,k_B},
\ee
where matrix elements are taken with respect to a separable orthonormal basis for regions $A$ and $B$.  The logarithmic negativity is 
\be
\E(\rho) \equiv \log  \lvert \lvert \rho^{T_A} \lvert  \lvert
\ee
where $  \lvert \lvert {A} \lvert \lvert \equiv \trace[\sqrt{A^\dag A} \, ]$ is the trace norm.  Here,$\E(\rho)$ is a measure of the strength and number of negative eigenvalues of $\rho^{T_A}$ and is an upper bound on the entanglement of distillation $\E_D(\rho)$.  Note that $\E_D(\rho)$ is a more fundamental measure of entanglement defined as the maximum number of near-perfect singlet states that can be generated from multiple copies of $\rho$ with LOCC on $A$ and $B$.  Although our underlying physical system in the random circuit model is a spin system, the logarithmic negativity for two adjacent regions in the original spin representation is equal  to the logarithmic negativity of the fermions obtained after a Jordan-Wigner transformation \cite{Eisler15}.  Since we have an exact representation of the NEAS density matrix in phase II in terms of a Gaussian fermionic state,  it is natural to ask whether we can directly compute the logarithmic negativity using this representation.

  For fermionic systems, a partial transpose operation can be defined analogously to Eq.~(\ref{eqn:Ta}); however, it has the property that the partial transpose of a fermionic Gaussian state may not be Gaussian, which makes the logarithmic negativity generally intractable to compute for large systems even for these simple fermionic states \cite{Eisler15}.  On the other hand, it was argued by Shapourian, Shiozaki, and Ryu  that a more natural definition of the logarithmic negativity for fermions  is in terms of a partial time-reversal operation on subsystem $A$  \cite{Shapourian17,Shapourian18}.  We refer to this measure as $\E_f(\rho)$ to distinguish it from $\E(\rho)$.  Unlike the partial transpose, this operation maps fermionic Gaussian states to fermionic Gaussian states, making it efficiently computable in the number of fermions.   Furthermore, for fermionic Gaussian states, it can be shown that it is an upper bound on the logarithmic negativity $\E(\rho) \le \E_f(\rho)+\log\sqrt{2}$ \cite{Eisler15,Eisert16}.  Several efficiently computable lower bounds on $\E(\rho)$ for Gaussian states were also introduced by Eisert, Eisler and Zimbor{\' a}s~\cite{Eisert16}.  Here, we use the lower bound $\E_\ell(\rho)$ introduced by these authors for Gaussian states that conserve particle number:  $\trace[\rho c_i^\dag c_j^\dag] = \trace[\rho c_i c_j] =0$ [see Sec. IVB of Ref.~\cite{Eisert16} for a definition of $\E_\ell(\rho)$].    
  
    \begin{figure}[t]
\begin{center}
\includegraphics[width = 0.47 \textwidth]{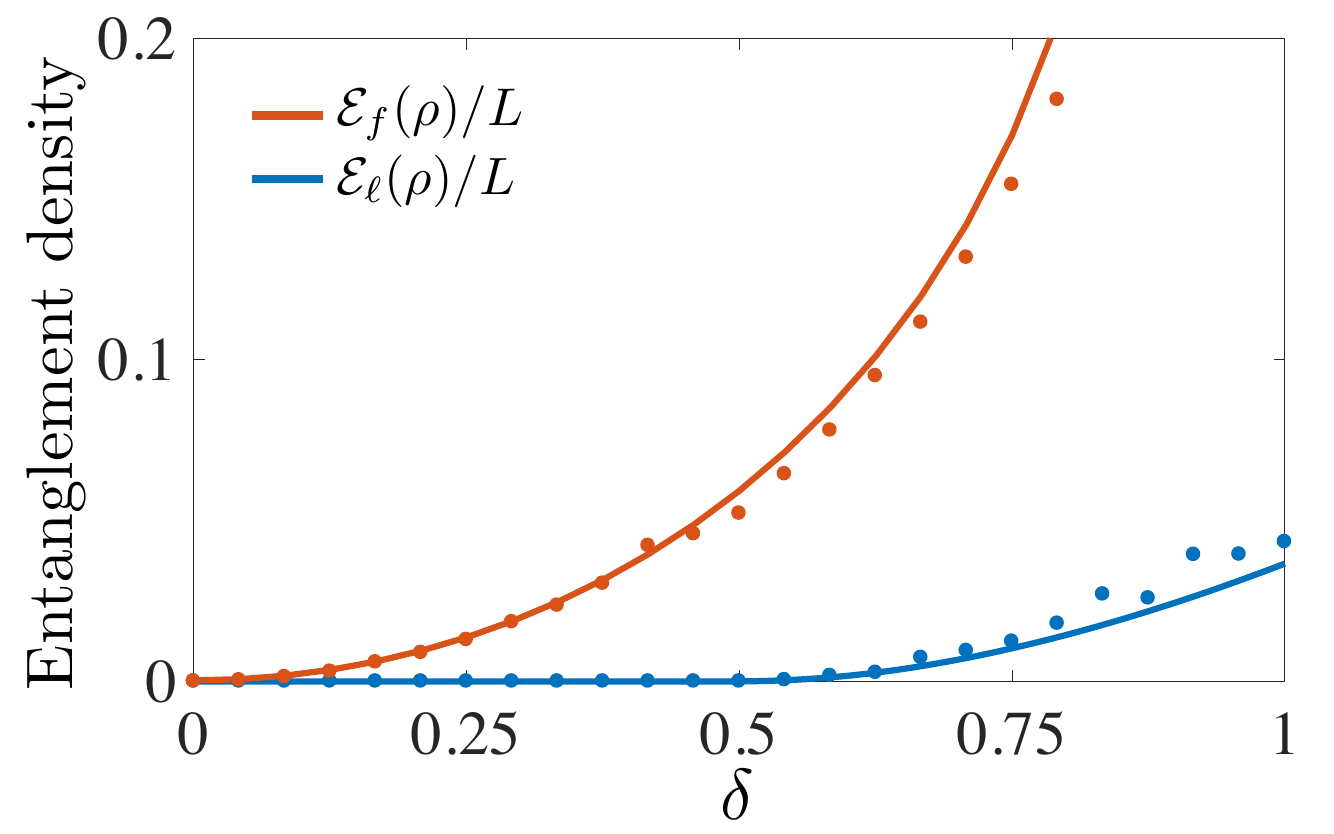}
\caption{Upper $\E_f(\rho)$ and lower $\E_\ell(\rho)$ bounds for the logarithmic negativity per unit length in phase II with $p_1=1$ and $p_2=0$.  The upper bound is equal to the fermionic logarithmic negativity \cite{Shapourian17}.  Solid lines are for $L=10^3$ sites computed by randomly sampling the fermionic two-point function of the NEAS, which is valid for small $\delta$.  Circles are computed from  an exact simulation of the NEAS two-point function for the random circuit with $L=50$ sites.  The volume-law coefficient for $\E_f(\rho)$ is nonzero for any $\delta>0$, while the volume-law coefficient of $\E_\ell(\rho)$ is nonzero only for $\delta > 0.5$.  }
\label{fig:logneg}
\end{center}
\end{figure}

  To compute the upper and lower bounds for the time-averaged logarithmic negativity in phase II, we use the fact that all fourth- and higher-order correlations of the matrix elements of the two-point function $G_{ij}$ scale as $\delta^4$ or higher.  This method allows us to sample random realizations of the NEAS two-point function by treating $G_{ij}$ as independent, normally distributed, random variables with mean $\bar{G}_{ij}$ and variance $A_{ij}$ [see Eq.~(\ref{eqn:cov})].   
  Using this approach to sample from the NEAS, we are able to compute $\E_{f,\ell}(\rho)$ for up to several thousand sites, which allows an accurate determination of the coefficient for the volume-law term in $\E_{f,\ell}(\rho)$ as a function of $\delta$.  The results are shown in Fig.~\ref{fig:logneg}, where we  compare this direct sampling approach for $L=10^3$ sites to exact simulations of the NEAS two-point function for realizations of the random circuit with $p_1=1$, $p_2=0$, and $L=50$ sites.  We find good agreement between the two independently computed coefficients up to $\delta \approx 0.75$.  Interestingly, $\E_f(\rho)$ exhibits volume-law scaling for any nonzero $\delta$, which shows that the NEAS density matrix remains nonseparable in the scaling limit.   On the other hand, $\E_\ell(\rho)$ is exactly zero up until $\delta =0.5$, at which point a clear volume-law scaling emerges.  
The sharp behavior of the lower bound suggests that the system may undergo a phase transition with increasing $\delta$ from a weakly entangled state to a volume-law entangled state. However, we expect that the actual behavior of $\E(\rho)$ is more like $\E_f(\rho)$, which exhibits a crossover to the volume-law scaling for any nonzero $\delta$.  Finally, we remark that it was recently proven that thermal mixed states of equilibrium spin models obey an area law for the logarithmic negativity \cite{Sherman16}.  Thus, similar to the observed volume-law mutual information, this volume-law scaling of $\E(\rho)$ is a manifestly nonequilibrium effect.
  
\subsection{Entropy production following a quench}

The discussion so far concerned average properties of each phase in the long-time limit.  It is also interesting to consider the explicit time dynamics of this model, where the effects of the ballistic operator spreading in the quantum chaotic region should be more manifest.  Here we make a qualitative argument that, starting from an initial state that differs extensively in entropy from the NEAS, the ballistic operator spreading directly appears in the entropy production rate in the reservoirs following a quench into the chaotic phase.

For a quench into phase I or II, all operators spread diffusively, implying that the system's net entropy change, which is only due to the reservoirs, will grow following the quench as $\sqrt{t}$
until it saturates on a timescale of order $L^2$.  
In contrast, in phase III, any deviations from the NEAS can produce operators that
spread ballistically to the boundary, so initially the increase in entropy can grow linearly in $t$.  This fast entropy production will saturate
on a timescale of order $L/v_B$ as the system is brought to a local equilibrium.  But if the magnetization profile in this initial local equilibrium is extensively away from the steady state, then at
later times, diffusive spin transport is required for further entropy increase, leading to a crossover in time to diffusive relaxation of the local equilibrium state, which then finally approaches a NEAS by a timescale of order $L^2$.

\section{Extensions of the random circuit model}
\label{sec:ext}

Two natural extensions of the random circuit model are to consider  higher-dimensional lattices, with the left and right boundaries of the lattice coupled to reservoirs, or to allow each site to also be coupled to a qudit with local Hilbert space dimension $q$, which acts as a local bath for the qubit.  The closed system dynamics for the qudit model was studied in  Ref.~\cite{Khemani17} by replacing $U_+$ and $U_-$ in the definition of the two-qubit unitaries in the bulk [see  Eq.~(\ref{eqn:U})] by two independent Haar random unitaries acting on the qudits.  The central 2$\times 2$ matrix was replaced by a Haar random unitary acting on the $2q^2$-dimensional space space spanned by $\lvert{\uparrow\downarrow}\rangle \otimes\ket{n n'}$ and $\lvert{\downarrow \uparrow}\rangle \otimes \ket{n n'}$, where $n$ and $n'$ represent the state of the qudit on the two sites.  

\begin{figure}[tb]
\begin{center}
\includegraphics[width = 0.25 \textwidth]{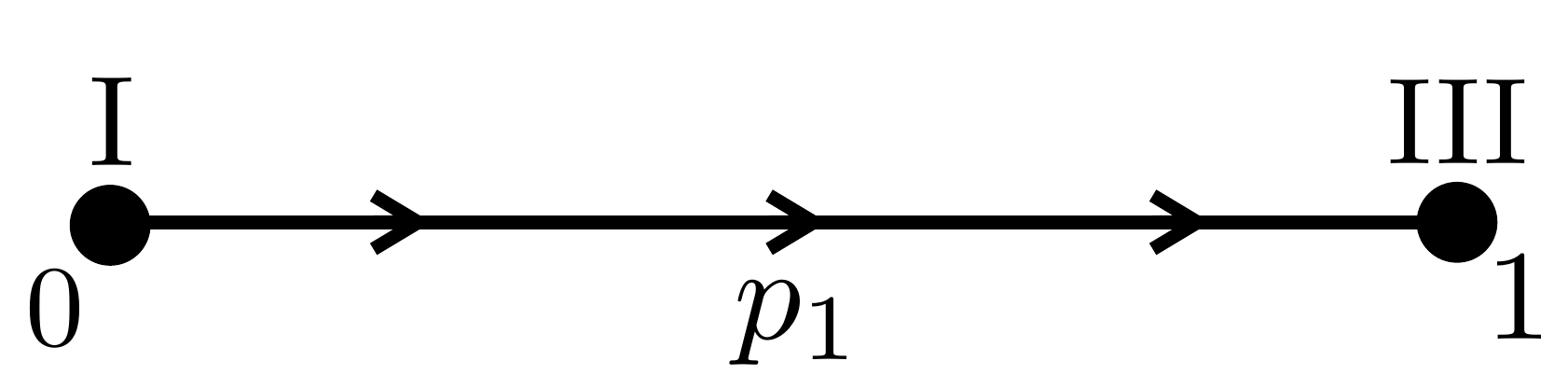}
\caption{Expected NEAS phase diagram as a function of the partial swap probability $p_1$ for the random circuit model in higher spatial dimensions or with an additional qudit at each site in arbitrary dimensions \cite{Khemani17}, which acts like a local bath for the qubit.  Although phase I is preserved for $p_1=0$, there is no noninteracting fermion regime (phase II) in these models, and we expect that the system immediately becomes quantum chaotic and achieves local equilibrium (phase III) for any finite $p_1$.}
\label{fig:phaseExt}
\end{center}
\end{figure}

In the qudit model in arbitrary spatial dimensions, phase I still naturally appears by forbidding partial swaps of the state of the qubit.  However, adding any finite probability $p_1$ for partial swaps  results in quantum chaotic dynamics even for $p_2=0$, which drives the system to local equilibrium (phase III, see Fig.~\ref{fig:phaseExt}).  Effectively, the qudits act as a quantum chaotic bath, which can rapidly dissipate entropy into the reservoirs through ballistic operator spreading.  
For higher-dimensional lattices of qubits with finite $p_2$, phase I and phase III are naturally realized for $p_1=0$ and any nonzero $p_1$, respectively.   Furthermore, for $p_2=0$ beyond 1D, there is no longer a mapping of the random circuit to a system of noninteracting fermions, and the operator spreading becomes ballistic for any finite value of $p_1$.  In this case, we expect that the system directly realizes phase III even along the $p_2=0$ axis.  The entropy and mutual information properties of the qudit models and the higher-dimensional qubit models can  be studied analytically using  extensions of the $abc$ model discussed in Appendix \ref{app:phaseIII}.  Although we do not expect the higher-dimensional random circuit models to naturally realize phase II, one can consider explicit noninteracting fermion models in higher dimensions to realize this phase, one example of which is discussed in Sec.~\ref{sec:anderson}.

\section{noninteracting Anderson model~}
\label{sec:anderson}

The presence of the volume-law mutual information in phase II  arises from the diffusive spreading of the fermions that were produced by a Jordan-Wigner transformation of the spins.   It is then natural to ask whether a similar effect occurs in a paradigmatic model for diffusive fermions: the Anderson model in 3D, with the Hamiltonian
\be
H= t_0 \sum_{\mean{\bx \by}}c_\bx^\dagger c_\by + \sum_\bx V_\bx c_\bx^\dagger c_\bx ,
\ee
where $t_0$ is the hopping, the first sum is over nearest-neighbor sites of a simple cubic lattice, and $V_\bx$ is a random potential on each site.  We  draw $V_\bx$ from a uniform distribution between $\pm W/2$.  
In 3D, this model has a metal-insulator transition near $W_c/t \approx 16.5$ \cite{MacKinnon81,Pichard81,Slevin14}.  We focus on the diffusive regime $0<W< W_c$.  Similar to the averaging over random unitaries in the circuit model, this model has characteristic disorder-averaged properties. For example, the disorder-averaged density-density response function in 3D at small $(k, \omega)$ is  dominated by a diffusive pole
\be
\overline{\chi(k,\omega)} \sim  \frac{4\pi \nu}{D k^2 - i \omega},
\ee
where, in this section, the overbar denotes a disorder average, $D$ is the diffusion constant, and $\nu$ is the density of states, with the fluctuations of these diffusive modes well described by a nonlinear sigma model \cite{EfetovBook}. 

\subsection{Entropy from scattering states}

To determine the entanglement structure of the NESS for this Anderson model in contact with two reservoirs, we use a description of the eigenstates of the system in terms of scattering states  \cite{Buttiker92,Beenakker97,Bruch18}, as shown in Fig.~\ref{fig:scattering}.  The leads are taken to be ballistic conductors with the same Hamiltonian as the disordered region but with no random potential: $V_i=0$.  The incoming scattering states at a given energy $E$ in transverse channel $n$ are defined by a set of fermion operators $a_\alpha^n(E)$ whose wave function satisfies the boundary condition that it is an incoming plane wave in lead $\alpha$ and channel $n$.  These operators have the correlation functions
\begin{align}
\mean{a_\alpha^{n \dag}(E) a_\beta^m(E')} &= \delta(E-E') \delta_{\alpha \beta} \delta_{m n} \sigma_\alpha(E),\\
\sigma_\alpha(E)& = \frac{1}{e^{(E-\mu_\alpha)/T}+1},
\end{align}
where $\mu_\alpha$ is the chemical potential of lead $\alpha$, and $T$ is the temperature, which, for simplicity, we take to be the same in the two leads.
The outgoing operators $b_{\alpha}^n(E)$ are related to these incoming  operators through the $S$ matrix.  

\begin{figure}[b]
\begin{center}
\includegraphics[width = 0.48 \textwidth]{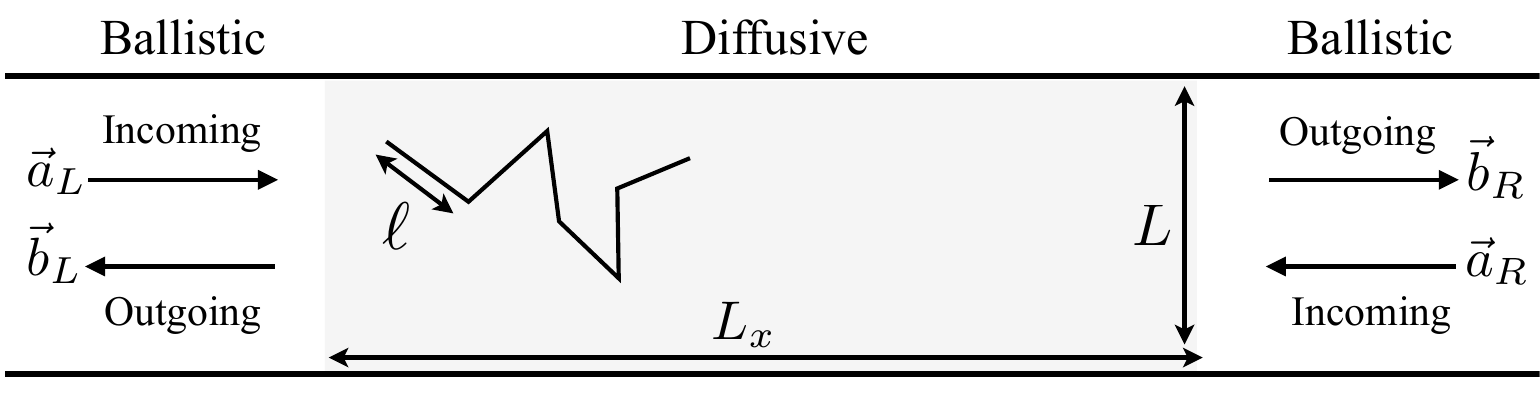}
\caption{Incoming and outgoing scattering states for a disordered wire with mean free path $\ell$. Here, $L_x$ is the longitudinal length of the wire and $L$ is the transverse width.}
\label{fig:scattering}
\end{center}
\end{figure}

As in phase II, the reduced density matrix for the disordered region is entirely determined by the two-point function.  Defining the wave function of the incoming scattering states
\be
\phi_\alpha^n(\br,E) \equiv \bra{\br} a_\alpha^{n \dag}(E) \ket{0},
\ee
the two-point function inside the wire is
\begin{align} \nonumber
\mean{c^{\dag}_{\bx}  c_{\by}}& =\sum_n  \int  dE  \big[\delta \sigma(E)  \phi_L^{n *}(\bx,E) \phi_L^n (\by,E) \\
&+ \sigma_R(E) \sum_\alpha \phi_\alpha^{n *}(\bx,E) \phi_\alpha^n (\by,E)\big],
\end{align}
where $\delta \sigma = \sigma_L(E)-\sigma_R(E)$ is the difference in the Fermi functions of the two leads and the scattering states are normalized to have unit current in the leads \cite{Pendry92}. 
 The term proportional to $\delta \sigma$ gives the nonequilibrium contribution to the two-point function.  
 The excess fermion density in the wire due to the bias (assuming $\delta \mu \equiv \mu_L-\mu_R>0$, and taking as our equilibrium reference the state when $\mu=\mu_R$ in both leads) is given by
\be \label{eqn:dnx}
\delta n(\bx) =\sum_n \int  dE \delta \sigma(E) \abs{\phi_L^n(\bx,E)}^2.
\ee
After disorder averaging, $\overline{\delta n(\bx)}$ will have the linear ramp profile discussed in Sec.~\ref{sec:summary}.   

Similar to phase II, we can compute the corrections to the entropy by expanding the two-point function around its disorder average.  The correction to the entropy of a given subregion $A$ will then be approximately equal to [see Eq.~(\ref{eqn:entNIF})]
\be
\Delta S_A \approx 2 \sum_{\bx,\by \in A} \overline{\lvert{\mean{c_{\bx}^{\dag} c_{\by}}}\lvert^2} - \overline{n(\bx)}^2 \delta_{\bx \by},
\ee
 which depends on the disordered average wave-function correlations between different channels and energies.  We parametrize these correlations by introducing the function 
 \be
\begin{split}
\Phi{}_A^{n m}(E,\Delta )&= \sum_{\bx,\by \in A}   \overline{\phi_L^{n *}(\bx,E) \phi_L^{m }(\bx,E+\Delta ) }  \\
&\overline{\times \phi_L^n (\by,E) \phi_L^{m *} (\by,E+\Delta )}\\
&- \sum_{\bx \in A}  \overline{\abs{\phi_L^n(\bx,E)}^2}~  \overline{\abs{\phi_L^{m}(\bx,E+\Delta )}^2} .
\end{split}
\ee
In the limit where $\delta \mu,T \ll t_0 \sim W$, over the range of energies that contribute to the NESS, $\Phi_{A}^{nm}(E,\Delta )$ will only depend on the energy difference $\Delta $.  Taking the weakly driven limit $\delta \mu \ll T \ll t_0 \sim W$ to compare to the random circuit, this allows us to approximate the entropy deviation from the disorder-averaged reduced-density matrix for region $A$ as
 \be
 \Delta S_A \approx  \frac{2 \delta \mu^2}{T}\sum_{nm} \int d \Delta \,   \Phi_A^{nm}(\mu_R,\Delta ).
\ee
The scaling of $\Delta S_A$ with the geometry of $A$ is determined by several factors.   First, from the definition of the scattering states, the wave-function amplitude scales as $\phi_L^n(\bx,E )\sim 1/L$, where we have taken the two transverse linear dimensions of the leads and the diffusive region to be $L$.   Assuming $A$ is a finite fraction of the diffusive region, which is of length $L_x$, the sum over $\bx$ and $\by$ involves on the order of $L_x^2 L^4$ terms, which  implies $\Phi_A^{nm}(\mu_R,0) \sim L_x^2 $.  As we argued in Sec.~\ref{sec:summary}, the energy range over which the scattering-state wave functions change on the length scale $L_x$ is given by the Thouless energy $E_{\rm Th} \sim 1/L_x^2$, which allows us to represent $\Delta S$ as 
\begin{align}
\Delta S_A &\approx \frac{2   t_0 \delta \mu^2}{T L_x^2}  \sum_{nm}  \Phi_A^{nm}(\mu_R,0) \int d\epsilon\, s_A(\mu_R,\epsilon) , \\
s_A&(\mu_R,\epsilon) = \frac{\sum_{nm}\Phi_A^{nm}(\mu_R,\epsilon t L_x^{-2})}{\sum_{nm} \Phi_A^{nm}(\mu_R,0)},
\end{align}
where $s(\mu_R, \epsilon)$ is independent of $L_x$ in the scaling limit, but will depend on the aspect ratio $L/L_x$ of the diffusive region.  
For diffusive wires in the regime $L \ll L_x$, the transmission through the wire satisfies the ``isotropy'' condition \cite{Mello91}, which states  that the wave functions in different transverse channels are completely uncorrelated at each transverse slice of the wire.  This assumption permits analytic treatments of the transmission and conductance through the wire using random matrix theory (RMT) \cite{Beenakker97}.  For the scaling analysis of the entropy, this would imply that the sum over $n$ and $m$ can be reduced to a single sum over $n$, leading to an area-law scaling with $\Delta S_A$ independent of $L_x$ in this $L_x \gg L$ regime.  However, in the regime $L\sim L_x$, this isotropy condition is known to break down \cite{Jalabert95}, which allows additional correlations that violate the area-law scaling.

Using numerical solutions of the scattering-state wave functions based on the transfer matrix method \cite{MacKinnon81,Pichard81,Pendry92,Slevin14}, we have confirmed that there is indeed a volume-law scaling of the mutual information of two halves of the wire when the diffusive region has an aspect ratio of 1.  Our numerical results are summarized in Fig.~\ref{fig:anderson}.  We consider a wire on a cubic lattice of size $(L+2)\times L \times L$ with periodic boundary conditions in the transverse direction and strong disorder $W =8 t_0 $, but still well within the metallic phase ($W_c \approx 16.5\, t_0$).   We  parametrize the average mutual information between two regions as
\begin{align} \label{eqn:iabd}
\overline{I(A:B)} &\approx   \frac{  \nu^2 t_0 \delta \mu^2}{T  }    f_{AB}(\mu_R) L^2 \int d \epsilon\, i_{AB}(\mu_R,\epsilon), \\
i_{AB}(\mu_R,\epsilon) & = \frac{\sum_{nm} \Delta \Phi_{AB}^{nm}(\mu_R,\epsilon t L_x^{-2})  }{\sum_{nm} \Delta \Phi_{AB}^{nm}(\mu_R,0) }, \\
\Delta \Phi_{AB}^{nm} &=\Phi_{AB}^{nm} - \Phi_{A}^{nm} - \Phi_B^{nm}, \\
f_{AB}(\mu_R) &= \frac{2}{L_x^2 L^2} \sum_{nm} \Delta \Phi^{nm}_{AB}(\mu_R,0) ,
\end{align}
where $i_{AB}(\mu_R,\epsilon)$ converges to a single function of $\epsilon$ that is independent of $L$ in the large-$L$ limit.

\begin{figure}[bt]
\begin{center}
\includegraphics[width = .4 \textwidth]{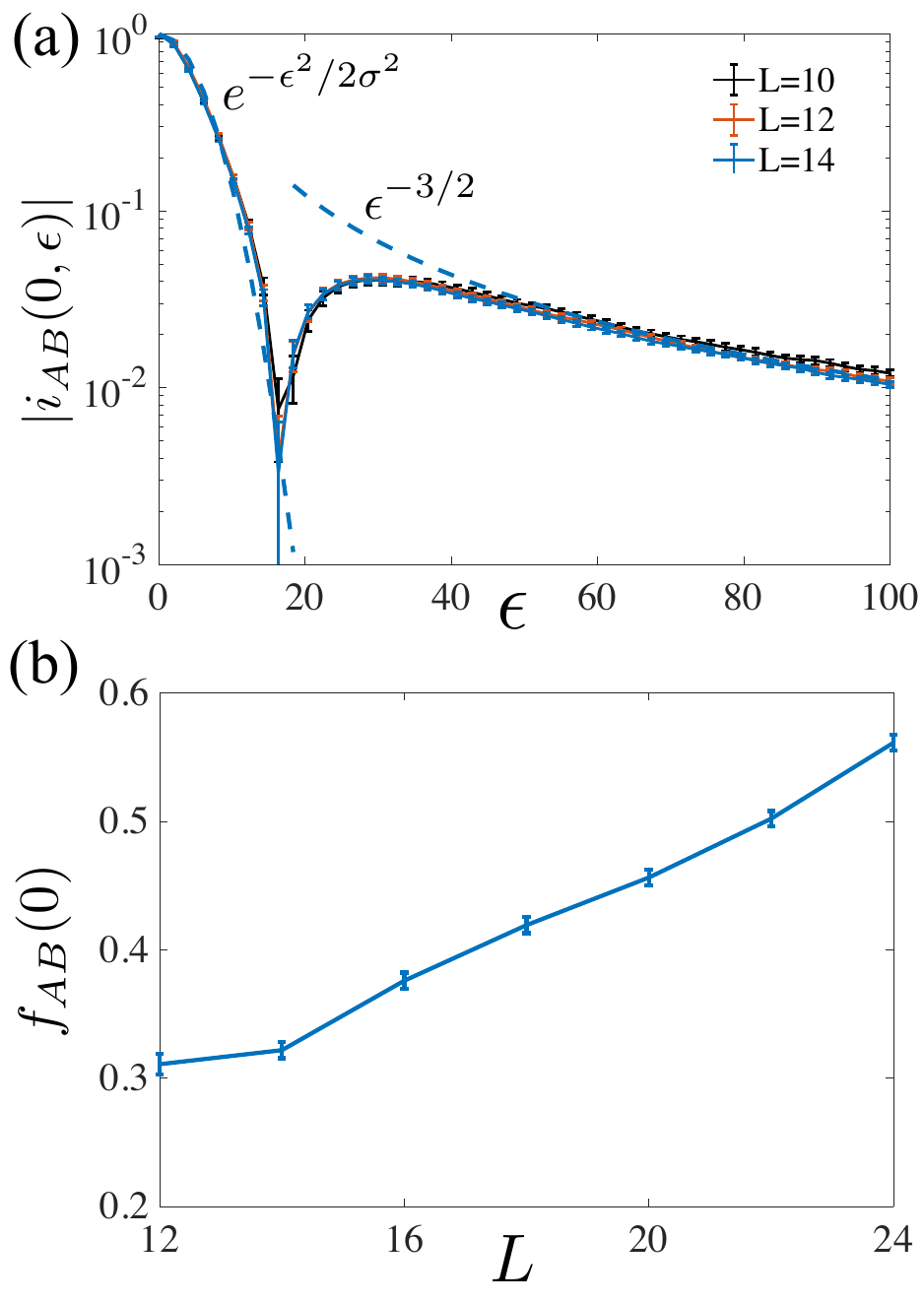}
\caption{(a) Energy correlation function $\abs{i_{AB}(0,\epsilon)}$ for the mutual information between two halves of the central $L\times L \times L$ region of an $(L+2)\times L \times L$ wire with $t_0=1$ and $W=8 $ for 400 realizations of disorder. The approximate zero in the absolute value is due to a sign change.  (b) Finite-size scaling of $f_{AB}$, which determines the overall scaling of the mutual information.  Parameters as in (a) with 1000 disorder realizations for each value of $L$. Error bars denote $\pm$3 standard deviations. }
\label{fig:anderson}
\end{center}
\end{figure}

Taking $A/B$ equal to the left/right half of the central $L\times L \times L$ cube of the wire, Fig.~\ref{fig:anderson}(a) shows the scaling of $\abs{i_{AB}(\mu_R,\epsilon)}$ with $L$, demonstrating that it quickly converges to a universal curve.   The converged function for $i_{AB}(0,\epsilon)$ is well fit by a Gaussian decay $e^{-\epsilon^2/2\sigma^2}$ at small values of $\epsilon$ with $\sigma \approx 5$,  but then changes sign and crosses over to a power-law tail  that is well fit by $\epsilon^{-3/2}$ over this range of $\epsilon$.  We note that a similar crossover at the Thouless energy to an $\epsilon^{-d/2}$ tail  has been observed in the  spectral statistics of small metallic particles in dimension $d$ \cite{Jalabert93}.     For the parameters used in Fig.~(\ref{fig:anderson}), we find that the integral of $i_{AB}(0,\epsilon)$ over $\epsilon$ is approximately equal to 4.
The overall scaling of the mutual information with $L$ is set by the term $f_{AB}$, which is shown in Fig~\ref{fig:anderson}(b).  Fitting to $f_{AB} = a L + b$, we determine the coefficients  
\be
a \approx 0.021, ~b \approx 0.038,
\ee
consistent with an overall volume-law scaling for the mutual information.  We leave a direct calculation of the logarithmic negativity $\E(\rho)$ for future work.   However, because of the similar structure of the fermionic two-point function of this model with phase II in the random circuit, we expect that the NESS will similarly exhibit a volume-law scaling for entanglement.

\subsection{Scattering-state correlations in the random circuit}

In the case of the 3D Anderson model, we saw that the volume-law mutual information arose from ``hidden'' correlations between different transverse channels of the scattering states.  Here, we show that a similar effect occurs in phase II of the random circuit model.

For the random circuit  we do not have an identical notion of a scattering state as in extended Hamiltonian systems.  Instead, we define our scattering states to be the set of all reservoir operators at $t=0$.  Running the circuit forward in time, these reservoir operators will diffuse into and eventually out of the system on a timescale of about  $L^2$.  At any instant of the circuit, we thus have an over-complete set of operators given by the of order $L^2$ reservoir operators that are still present and diffusing in the system.  We can roughly understand the evolution of the probability amplitude of a single reservoir operator by coarse graining its evolution to a diffusion equation.  In this case, it satisfies a standard diffusion equation with absorbing boundaries.  Using the method of images, this diffusion problem has the solution at early times 
\be \label{eqn:res}
|\phi(x,\tau)|^2 \sim \frac{x}{\tau^{3/2} L^2} e^{-x^2/4 {D} \tau} ,~\tau \ll 1
\ee
where $\tau=n/L^2$ and $x=i/L$ are  rescaled time $n$  and space $i$ variables, respectively.  At late times, the dynamics are dominated by the slowest diffusive mode that satisfies the boundary condition that it vanishes at the edges
\be
|\phi(x,\tau)|^2 \sim \frac{\sin \pi x}{L^2} \, e^{- \pi^2 {D} \tau},~  \tau \gg 1.
\ee
Thus, we see that the scattering states spread across the system on the diffusive timescale and have an amplitude of about $1/L$ before eventually leaking back out to the reservoirs. 

Similar to Hamiltonian systems, we define an incoming scattering-state wave function using the backward time evolution of the system operators
\be
c_i^n = \sum_\ell \phi_\ell^i(n-m) c_\ell^{m},
\ee
where $c_\ell^m$ are the operators at time step $m<n$ including both the system and the reservoir.  Evolving sufficiently far backward in time that $\phi_\ell^i(n-m)$ only has sizable overlap with the reservoir, we can express the two-point function in the system as
\be \label{eqn:cij}
\mean{c_i^{n \dag} c_j^n} =  \sum_{\ell \in {\rm Res}} \phi_\ell^{i *}(n-m) \phi_\ell^{j}(n-m)  \mean{c_\ell^{m\dag}c_\ell^m},
  \ee
  where, similar to the Anderson model, we used the fact that scattering states in the reservoir are in an uncorrelated product state at negative times.
   For $i=j$, we see that Eq.~(\ref{eqn:cij}) is a sum of order $L^2$ terms with an amplitude of about $1/L^2$, and thus we find that the density in the system is always order one, as expected. 
 For $i\ne j$, we can use the identity
  \be
  \{ c_i^{n \dag},c_j^n \} = 0 = \sum_{\ell \in {\rm Res}} \phi_\ell^{i *}(n-m) \phi_\ell^{j}(n-m),
    \ee
 to  rewrite the coherences for anti-symmetric reservoirs as 
  \be
  \begin{split}
  \mean{c_i^{n \dag} c_j^n} =  \delta  \sum_{\ell \in {\rm Left}}  \phi_\ell^{i *}(n-m) \phi_\ell^{j}(n-m) .
  \end{split}
  \ee
  We immediately see that at equilibrium $(\delta =0)$, the coherences vanish as expected.  
For finite $\delta$, we naively have a sum over $L^2$ terms with random phases and amplitude $1/L^2$, which would result in the scaling
  \be
\lvert{  \mean{c_i^{n \dag} c_j^n}}\lvert^2 \sim \delta^2/L^2.
\ee
 Such a scaling would lead to an area law for the mutual information, in sharp contrast to what we found in Sec.~\ref{sec:phaseII} in Eq.~(\ref{eqn:axy}).   The resolution to this paradox is similar to the effect we found for the 3D Anderson model, where the volume law  arose due to the presence of correlations between transverse channels.  In the case of the random circuit, the volume-law mutual information is recovered because of correlations between reservoir wave functions that persist over a finite time.

\section{Approach to local equilibrium following a quench}
\label{sec:appLE}

In this section, we go beyond the scenario of the boundary-driven system considered in most of this work to qualitatively consider the implication of our results for the time dynamics of interacting diffusive systems when the initial density profile has deviations from equilibrium.  For weak interactions, either in the random circuit or in disordered metals, the butterfly velocity is parametrically smaller than other characteristic scales for the diffusive dynamics.  For the random circuit, the diffusion constant is set by the lattice spacing and the rate at which the unitaries act, while the butterfly velocity is reduced by a factor of $\sqrt{p_2}$ according to Eq.~(\ref{eqn:vb}).  In a weakly interacting diffusive metal the diffusion constant is determined by the elastic scattering rate and the mean free path, while $v_B \sim \sqrt{D \gamma_{\rm in}}$, where $\gamma_{\rm in}$ is the inelastic scattering rate due to interactions \cite{Swingle17,Patel17}; here, we consider the case where the elastic scattering rate is large compared to $\gamma_{\rm in}$.  Using the mapping of the local random circuit to  diffusive fermions, we can interpret the parameter $p_2$ as the analog of $\gamma_{\rm in}$.   As we now argue, this parametric separation between operator spreading dynamics and the diffusive dynamics leads to a large window of space and time where the system will display strong deviations from local equilibrium with a similar structure to the volume-law mutual information phase of the NEASs.

As a thought experiment we imagine a quench experiment in the random circuit model whereby
we prepare the left and right halves of the chain in initial mixed-product-state diagonal density matrices with different magnetizations $m_{L/R} = (1\pm \delta)/2$ on the left/right half.  
At time scales much larger than the Thouless time, the magnetization will relax to either equilibrium if the system is not driven or to the NEAS if it is driven; however, in either case, the early-time dynamics on the diffusive length scale will display behavior reminiscent of the boundary-driven system.  Letting the step in the magnetization occur at $x=0$, the diffusion length $\ell_D \equiv \sqrt{D t}$ then sets the length scale for the diffusive smoothing of the magnetization step. 
For weak interactions the diffusion length is large compared to $v_B t$ at early times, and crossover length and timescales, 
\be
\ell_{\rm in} \equiv D/v_B,~\tau_{\rm in} \equiv D/v_B^2~,
\ee
are set by $\ell_D(t)=v_Bt$.
For $t \ll \tau_{\rm in}$, we can neglect the effects of the ballistic operator spreading in describing features on the scale $\ell_D$, thus, the quench dynamics will behave similarly to the volume-law mutual information phase (phase II) of a boundary-driven system with $L \sim \ell_D$.  On the other hand, for $t \gg \tau_{\rm in}$ the operator spreading will have scrambled most of the information originally encoded on the length scale $\ell_D$ over a much larger range on the order of $v_B t$.  Thus, the system will appear to be in local equilibrium, similar to phase III of the boundary-driven system.  
Note that in the random circuit model the elastic mean free path is on the order of a single lattice spacing, which implies that diffusion is well defined down to this scale.    This case should be contrasted with the strongly interacting regime for diffusive metals, where the interactions can lead to a renormalization of the diffusion constant.  In that context, $\tau_{\rm in}$ can instead be identified with a microscopic equilibration time $\tau_{\rm eq}$, which then provides a constraint on the renormalization group flow by  implying the existence of an upper bound on the diffusion constant $D \lesssim v_B^2 \tau_{\rm eq}$ \cite{Hartman17}.  

\begin{figure}[tb]
\begin{center}
\includegraphics[width = 0.49 \textwidth]{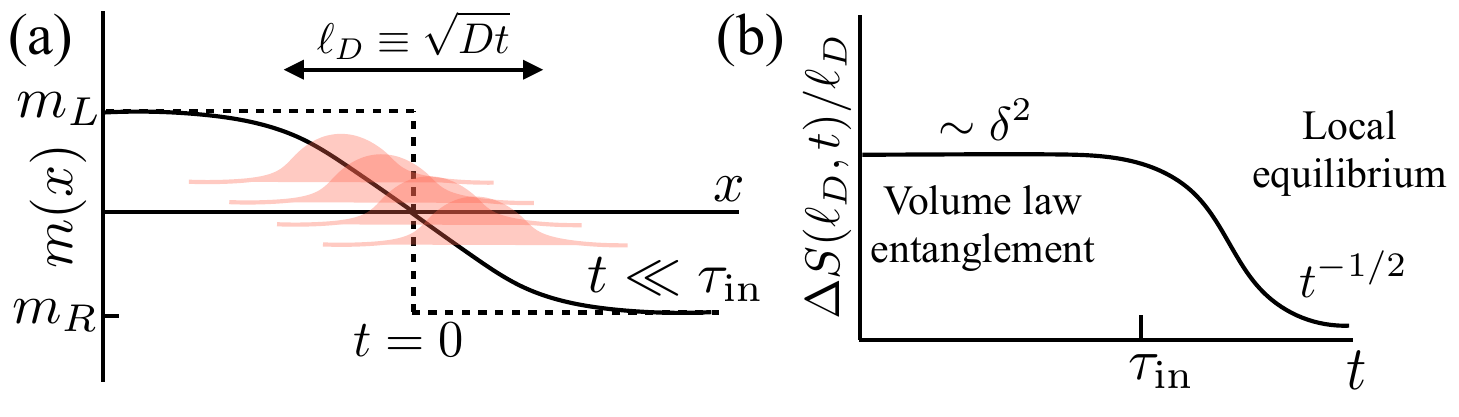}
\caption{(a) Quench dynamics for a diffusive system with weak interactions and a single conserved quantity, initially with a step profile in the conserved quantity.  The step broadens diffusively and, on short timescales compared to the inelastic scattering time $\tau_{\rm in}=D/v_B^2$, the system will exhibit the long-range coherence and entanglement characteristic of phase II of the boundary-driven system.  (b) Deviation of the entropy from local equilibrium $\Delta S(\ell_D,t)$ at the diffusion length, showing the crossover from volume to area law near $t \sim\tau_{\rm in}$.   }
\label{fig:quench}
\end{center}
\end{figure}

As shown in  Fig.~\ref{fig:quench}(a), the initial step in density will broaden diffusively over a length $\ell_D$ so that the instantaneous current density  near the origin at time $t$ will scale as $J \sim \delta/\ell_D$. 
Defining the reduced density matrix at time $t$ over the region between $\pm x$ as $\rho(x,t)$,
 the entropy deviation from local equilibrium is
\be
\Delta S(x,t) \equiv S[\rho_{\rm LE}(x,t)] - S[\rho(x,t)].
\ee
 From our analysis of the boundary-driven system, we then have the result that for times much less than $\tau_{\rm in}$, the entropy deviation for $x = \ell_D$ will scale as
\be
\Delta S(\ell_D,t) \sim \delta^2 \ell_D,~t \ll \tau_{\rm in}.
\ee
Thus, we expect the system exhibits volume-law deviations from local equilibrium on this length scale.  
On the other hand, for times much longer than $\tau_{\rm in}$ but short compared to the Thouless time, the situation is analogous to phase III of the boundary-driven system, so we  only have area-law deviations from local equilibrium, Fig.~\ref{fig:quench}(b).

 From this picture, we can also develop some intuitive understanding for the finite-size scaling that sets the crossover with increasing $p_2$ from phase II to phase III for the NEAS of the boundary-driven system.  In this case, the Thouless time $\tau_{\rm Th} = L^2/D$ imposes a cutoff on the time dynamics, which suggests that the crossover occurs precisely when $\tau_{\rm Th} \sim \tau_{\rm in}$.  This observation implies that the crossover function for the NEAS density matrices starting from phase II will be a universal function of $p_2 L^2$ in the scaling limit.  This scaling is directly verified in Sec.~\ref{sec:crossovers}.

\section{Experimental signatures}
\label{sec:exp}

In this section, we outline an experimental approach,  suitable for mesoscopic wires or ultracold Fermi gases \cite{Gross17},  to probe two-site coherences in the NEAS.  When the sites are separated by a distance much greater than $\ell$, this measurement is an indirect probe of the long-range entanglement in the system.    

The idealized setup is shown in Fig.~\ref{fig:exp} and consists of four essential components: (i) a disordered 3D metallic system being driven by a chemical potential bias (here we focus on energy-conserving systems at a fixed temperature), (ii) two 1D channels connected to the wire via tunable, tunneling barriers,  (iii) a coherent beam splitter \cite{Oliver99,Henny99b,Cassettari00,Houde00}, and (iv) a pair of current or charge sensors.  
Referring to the length scales $(\ell,\ell_{\varphi},\ell_{\rm ee}, \ell_{\rm ep})$ introduced in the Introduction, our analysis of the random circuit model implies that achieving volume-law entanglement in the NEAS requires a wire of length $\ell \ll L \ll \ell_{\varphi}$.  In typical metals, these length scales are (50 nm,~1~$\mu$m,~10~$\mu$m,~10 mm) \cite{Altshuler82,Steinbach96}; thus, the regime of interest is easily achievable and, in fact, has been directly probed in an experiment conceptually similar to the one we propose here \cite{Pothier97}.  In atomic Fermi gases the strength of interactions are tunable via a Feshbach resonance \cite{Chin10}, which would allow one to scan $\ell_{\varphi}$ for a fixed length of the channel.

\begin{figure}[tb]
\begin{center}
\includegraphics[width = 0.3 \textwidth]{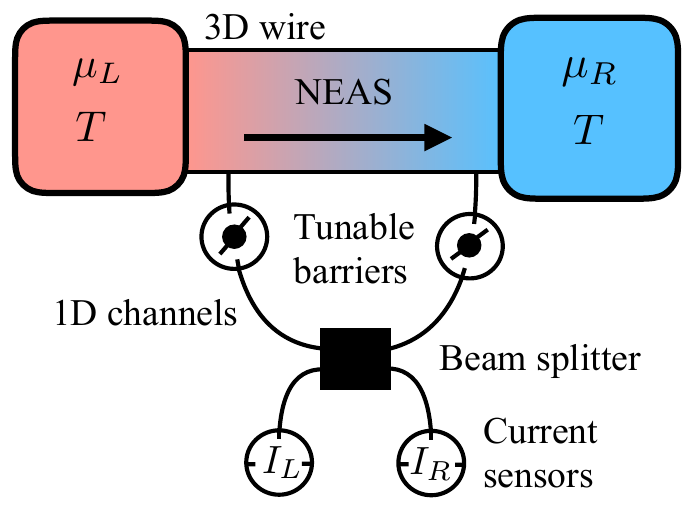}
\caption{Idealized experiment to probe the long-range coherences of a current-driven disordered fermionic system at fixed temperature $T$ with a chemical potential bias.  The channel is taken to either be formed from a 3D metallic wire or a 3D disordered potential landscape in an atomic Fermi gas.   After the system reaches the NEAS, tunable tunneling barriers into 1D channels are opened to allow an interferometric measurement of the long-range coherence in the wire for a specific realization of the disorder. 
}
\label{fig:exp}
\end{center}
\end{figure}

For simplicity, we focus on a measurement approach where the tunneling to the two 1D channels is turned off while the system is driven to its steady state by the bias.  After a sufficiently long time, the tunneling barriers are lowered, allowing current leakage into the two 1D channels. To avoid interaction effects in the channel, we consider the limit of weak tunneling into the channels as compared to the transit time through the beam-splitter device.
We assume the left and right 1D channels to be connected to single sites $L$ and $R$ of the wire, respectively. Taking a beam splitter with reflection coefficient $r$ and transmission coefficient $t$, the current in the left or right meter will be proportional to
\begin{align}
I_L &\propto |r|^2 \mean{\psi_L^\dag \psi_L} +  |t|^2  \mean{\psi_R^\dag \psi_R} + 2 \mathrm{Re}[ r^* t \mean{\psi_L^\dag \psi_R}] , \\
I_R &\propto |t|^2 \mean{\psi_L^\dag \psi_L} + |r|^2  \mean{\psi_R^\dag \psi_R} +2 \mathrm{Re}[ t^* r \mean{\psi_L^\dag \psi_R}].
\end{align}
For a balanced beam splitter (i.e., $|r|^2=|t|^2=1/2$) we then have the result that 
\be
I_L - I_R \propto 2 \mathrm{Re}[(r^*t -  t^* r) \mean{\psi_L^\dag \psi_R}].
\ee
Thus, by tuning the transmission and reflection phases of the beam splitter (e.g., by varying the path length or tunneling barriers), it becomes possible to directly measure the coherence between two sites of the wire for a static disorder potential.
 
The setup discussed here  demonstrates that these long-range coherences are physically accessible in a transport experiment.  
A more detailed analysis of this measurement scheme is beyond the scope of this work.  In general, these effects should be most easily observed in diffusive wires with a short elastic mean free path.  In addition, one can consider geometries that use multiple channels and beam splitters to enhance the total detected current difference.  An alternative approach is to consider driven, time-dependent systems in a 1D or quasi-1D system similar to the random circuit model.

\section{Conclusion}
\label{sec:outlook}

While systems in thermodynamic-equilibrium Gibbs states  have extensive von Neumann entropy, they usually satisfy area laws for their entanglement entropy or mutual information.  On the other hand, any modification that drives the system out of equilibrium allows for potential violations of this behavior.  In this paper we investigated a common nonequilibrium scenario consisting of an extended system coupled at its two ends to reservoirs at different chemical potentials,  leading to current-carrying nonequilibrium attracting states (NEASs) in the long-time limit.  The analog of thermalization for these current-driven systems is the approach to local equilibrium.
For a family of random circuit models, we found that the ballistic operator spreading associated with quantum chaos is crucial for the emergence of local equilibrium in these models.  As our argument in favor of local equilibrium is rather general, it will be interesting to test these ideas in quantum chaotic Hamiltonian or Floquet driven systems.  In addition, given that we found the mutual information of the NEASs in the quantum chaotic region is consistent with an area law, it is likely that these states can be well approximated using a matrix product representation, allowing a rich set of analytical and numerical techniques to be applied to further characterize these states. 

A more surprising result is that, in one of the nonchaotic phases (phase II), we found that the system is driven towards a highly entangled state with a volume-law mutual information and logarithmic negativity between two halves of the chain.  
We showed that this phase captures important qualitative features of the entanglement structure of the current-driven 3D Anderson model.
As a result, we expect signatures of this phase to appear in a wide variety of metallic systems on mesoscopic length scales and timescales.  We discussed  an experimental approach applicable to mesoscopic wires or ultracold Fermi gases where one could directly probe these effects.
More generally, this behavior demonstrates the ability to stabilize a high degree of entanglement in nonequilibrium systems with limited fine-tuning of the evolution.   Determining other conditions under which one can achieve such a strong violation of local equilibrium is a promising direction for future research.  
Finally, one is naturally led to ask whether  such  large deviations of the entropy away from local equilibrium or  extensive entanglement   can be harnessed as a thermodynamic resource or for applications in quantum information science.  For example, we showed in Sec.~\ref{sec:exp} that the long-range coherences in the NEAS could be transformed into a current difference between two 1D channels, which can be directly used to perform work.

Future theoretical work on the Anderson model could investigate the behavior of the NESS entanglement across the metal-insulator transition, where, naively, one expects a crossover to an area-law scaling due to the localization of the wave functions.  Using the mapping of disorder-averaged Anderson models to supersymmetric field theories \cite{EfetovBook}, it will be interesting to understand how the volume-law mutual information of the NESS emerges in the corresponding field theoretic description.    

Another interesting avenue of investigation for the random circuit models is to better understand their connection to classical boundary-driven stochastic lattice gases. In particular, it may be fruitful to investigate a boundary-driven random circuit model with a true phase transition.  One prominent example in the classical case is the asymmetric simple exclusion process (ASEP) in which the particles have unequal probabilities of hopping left or right \cite{Derrida07}.  Naively such a model would need to break unitarity in the random circuit, however, it may be possible to engineer chiral transport using ladder systems.  A related question to explore is the role of dissipation in the bulk of the system, which can be easily incorporated in the random circuit model  by allowing dissipative quantum channels.  We gave one such example in Sec.~\ref{sec:crossovers} as an effective model for the crossover from the discrete hopping limit of the random circuit (phase I) to the quantum chaotic phase (phase III).

In the context of open quantum systems, our work introduces the concept of NEAS density matrices to the description of the long-time behavior of nonequilibrium open systems with noise or time dependence in their parameters.  In time-independent systems, there is a natural classification of NESSs into those described by low-entropy mixtures of a few pure states and high-entropy mixed states, with the latter often having an effective thermal description at long wavelengths \cite{Sieberer16}.  Extending this classification program to the description of NEASs may prove a rich direction of research.  

\emph{Note added:} During the final stages of publication of this paper, two related papers appeared \cite{Gullans19,Bernard19}.  In Ref.~\cite{Gullans19}, we extend the analysis of boundary-driven 3D Anderson models to consider the entanglement scaling across the localization transition.  Ref.~\cite{Bernard19} introduces another class of stochastic, noninteracting fermion models whose average dynamics reduce to the SSEP and whose NEASs have similar properties to the noninteracting fermion phase (phase II) of the random circuit model studied in this work.

\begin{acknowledgements}
We would like to thank Vedika Khemani, Joel Lebowitz, Adam Nahum, Toma{\v z} Prosen, Brian Swingle, Mike Zaletel, and Marko {\v Z}nidari{\v c} for helpful discussions and correspondence.  MJG thanks Raghu Mahajan for helpful discussions related to Sec.~\ref{sec:appLE}. We thank Arnold Mong for preliminary numerical work on a related Floquet system.  Research was supported by National Science Foundation Grant No.~DMR-1420541.
\end{acknowledgements}

\appendix

\section{Overview of Appendixes}
\label{app:overview}

The appendixes are organized as follows: In Appendix \ref{app:thms}, we present several general theorems that prove the existence of a set of NEAS density matrices for the local random circuit model.  In Appendix \ref{app:SSEP}, we review some basic properties of the SSEP, which describes the time-averaged behavior of the random circuit model.  In Appendix \ref{app:NIFder}, we provide more details for the analysis of the scaling limit in phase II.  In Appendix \ref{app:phaseIII}, we derive a compact representation of the replicated density matrix that describes both the quantum chaotic phase III and the crossovers between the three phases. In Appendix \ref{app:open}, we present an analysis of the von Neumann entropy of an open random circuit without charge conservation.  For a zero entropy reservoir, we find that the entropy of the system is reduced by one bit, on average, compared to the infinite-temperature state.

\section{Stationary  random quantum circuits}
\label{app:thms}

In this appendix, we establish some basic facts about the long-time behavior of random quantum circuits coupled to reservoirs.  
This analysis demonstrate that the models considered in this paper  have an attractive ensemble of density matrices in the long-time limit, which we refer to as nonequilibrium attracting states (NEASs).

For a given $d$-dimensional quantum system, an associated quantum channel $\E(\cdot)$ is defined as a trace-preserving, completely positive linear map that acts on the set of $d \times d$ complex matrices $M_d(\C)$.  Every quantum channel has a representation in terms of a collection of Kraus operators $\E_k$ satisfying $\sum_r K_r^\dag K_r =\id$ such that the density matrix of the system is transformed as \cite{Wolf12}
\be
\E(\rho) = \sum_r K_r \rho K_r^\dag.
\ee
In this work, we are interested in characterizing fixed points $\E(\cdot)$, defined as
\be
\mathcal{F}_{ss}(\E)=\{ {\rm Density~matrices}~\rho_{ss}: \E(\rho_{ss})=\rho_{ss} \}.
\ee
Since the set of density matrices is a compact, convex set in $\mathbb{R}^n$ for   $n=d^2-1$, Brouwer's fixed-point theorem implies that every quantum channel has at least one fixed point.  
Some sufficient conditions for $\rho_{ss}$ to be unique are described in Ref.~\cite{Wolf12}.  

We consider families of quantum channels $\E_\sigma(\cdot)$,    where $\sigma$ is a random variable with probability measure $d \sigma$ that takes on a possibly infinite set of values.  
If we introduce an additional measure $d\nu$ on $\mathcal{F}_{ss}(\E_\sigma)$ (e.g., determined by the distribution of initial states), this naturally induces a distribution of steady-state density matrices $\rho_{\sigma\nu}$ with measure $d \sigma d\nu$.  

In the language of quantum channels, random circuit models are defined as a sequence of independent random variables ${\sigma}=(\sigma_1,\ldots,\sigma_n)$  such that
\be
\M_{n}(\rho) = \E_{\sigma_n} \circ \cdots \circ \E_{\sigma_1} (\rho),
\ee
where the $\E_{\sigma_i}$ are assumed to be drawn from identical distributions of random quantum channels.  We focus on stationary random circuits defined by the property that $\lim_{n \to \infty} \M_n(X) = \rho_\sigma \trace(X)$ with a probability that converges to one, where $\rho_\sigma$ is a density matrix that is independent of the initial state $X$ for all $X \in M_d(\C)$.
Such stationary random circuits have the convenient property that, for sufficiently large $n$, $\M_n(\cdot)$ induces a unique distribution of density matrices $\rho_\nu$ with measure $d \nu$, which is stationary in the sense that $\E_\sigma(\rho_\nu)$ is also distributed with measure $d \nu$.  This property leads to the steady-state equations for the average replicated density matrices   
\be \label{eqn:rhobar}
\overline{\rho\otimes \cdots \otimes\rho} = \int d \sigma \int d\nu\, \E_\sigma(\rho_\nu) \otimes \cdots \otimes  \E_\sigma(\rho_\nu) .
\ee

For a given distribution of quantum channels $\E_\sigma$, the following two theorems are useful in determining whether the associated random circuit $\M_n(\cdot)$ is stationary.  First, we prove that a wide class of random circuit models are in fact stationary:  
\begin{theorem}  \label{thm:markov}
Let $\M_n(\cdot)$ be a random circuit that satisfies the following properties: (i) There exists $m\in \N$ such that, with finite probability, $\M_m(\cdot)$ has a one-dimensional set of fixed points, and (ii) $\M_n$ is almost surely diagonalizable for every $n$.   Then, $\M_n(\cdot)$ is a stationary random circuit.
\end{theorem}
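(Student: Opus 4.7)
The plan is to frame the random circuit as an iterated random function system on the compact convex set of density matrices and show almost-sure contraction toward a single point using the Dobrushin coefficient of ergodicity. Recall that every quantum channel $\E$ is a contraction in trace norm, $\|\E(\rho)-\E(\sigma)\|_1 \leq \|\rho-\sigma\|_1$, so the coefficient
\[
\eta(\E)\equiv \sup_{\rho\neq\sigma}\frac{\|\E(\rho)-\E(\sigma)\|_1}{\|\rho-\sigma\|_1}
\]
satisfies $\eta(\E)\le 1$ and the submultiplicative bound $\eta(\E_1\!\circ\!\E_2)\le \eta(\E_1)\eta(\E_2)$. Establishing stationarity as defined in the excerpt amounts to showing that $\eta(\M_n)\to 0$ almost surely; once that is in hand, the limit $\rho_\sigma\equiv \lim_n \M_n(\rho_0)$ exists a.s.\ and is independent of $\rho_0$ by the contraction, and the extension to arbitrary $X\in M_d(\C)$ with $\trace X\neq 0$ follows by writing $X=(\trace X)\rho_0+X_0$ with $X_0$ traceless and handling $X_0$ via the same Dobrushin estimate on its Hermitian components.

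The heart of the argument is to show that the hypotheses produce, with positive probability, a block $\M_m$ with $\eta(\M_m)\le c$ for some deterministic $c<1$. If $\M_m$ is diagonalizable with a one-dimensional fixed-point space, then combining hypothesis (ii) with Brouwer's theorem (as noted in the excerpt) forces the spectral decomposition
\[
\M_m = P_1 + \sum_{j\ge 2}\lambda_j P_j, \qquad P_1(X)=\rho_\ast \trace(X),
\]
where $\rho_\ast$ is the unique fixed density matrix and $|\lambda_j|\le 1$. On the traceless subspace $T_0=\{X:\trace X=0\}$ only the eigenvalues $\{\lambda_j\}_{j\ge 2}$ act, and those inside the open unit disk drive contraction. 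Once a $c<1$ and a positive probability $p>0$ are secured for a single block, I would decompose the full circuit into independent blocks $\M_m^{(1)},\M_m^{(2)},\ldots$ and apply Borel--Cantelli: with probability one, infinitely many blocks satisfy $\eta(\M_m^{(k)})\le c$, and then submultiplicativity together with non-expansivity of the remaining blocks forces $\eta(\M_n)\to 0$ almost surely, completing the argument.

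The main obstacle is precisely the inference "1D fixed space $\Rightarrow \eta<1$": a diagonalizable channel with a one-dimensional fixed space may still have peripheral spectrum on the unit circle (for instance, conjugation by a unitary with non-degenerate phases has a one-dimensional fixed space yet all other eigenvalues lie on the unit circle), in which case $\eta(\M_m)=1$. I would handle this by enlarging the block length. By the absolute continuity of the measure $d\sigma$ on the random gates (Haar on the relevant unitary manifolds in the models of interest), conditional on the positive-measure event where the one-dimensional fixed-space condition holds, the peripheral eigenvectors of two successive independent blocks are almost surely transverse. The composition of two such blocks then contracts $T_0$ strictly, because an eigenvector of modulus one for $\M_m^{(1)}$ is almost surely not an eigenvector for $\M_m^{(2)}$ and gets rotated into a direction with strictly smaller singular value; a compactness argument on the (continuous) map $(\E_1,\E_2)\mapsto \eta(\E_1\E_2)$ then yields a uniform $c<1$ on an open neighborhood of positive measure. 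Replacing $m$ with $2m$ in the Borel--Cantelli step then closes the proof.
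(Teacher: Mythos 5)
Your proposal reaches the conclusion by a genuinely different route. The paper works directly with the spectral decomposition supplied by hypothesis (ii): each block is written as $P_\infty^\sigma + \sum_k \lambda_k^\sigma P_k^\sigma$, it is asserted that every non-fixed-point eigenvalue has modulus strictly below one, the composition is telescoped so that only the product of the decaying parts survives, and that product's operator norm is bounded by $\prod_\ell \lambda_{\rm max}^{\mu_\ell}$. You instead work with the trace-norm contraction (Dobrushin) coefficient $\eta$, exploit its submultiplicativity and the bound $\eta\le 1$, and use the second Borel--Cantelli lemma to produce infinitely many strictly contracting blocks. Your machinery is in two respects more robust: the identification $\|\sum_k\lambda_k P_k\|_\infty=\lambda_{\rm max}$ used in the paper holds only for normal maps (diagonalizable does not imply the spectral projectors are orthogonal), whereas $\eta$ is submultiplicative for arbitrary channels; and your Borel--Cantelli step supplies the actual reason the product of contraction factors tends to zero, where the paper only notes that the sequence is strictly decreasing, which by itself does not force convergence to zero.

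However, the step on which the theorem genuinely turns is the one you yourself flag as the main obstacle, and your write-up leaves it open. A one-dimensional fixed space plus diagonalizability does not exclude peripheral spectrum: complete dephasing in the $z$ basis followed by a bit flip is diagonalizable, has the unique fixed point $\id/2$, yet has eigenvalue $-1$ on $\sigma^z$, so $\eta=1$ for that block. Your proposed repair --- transversality of peripheral eigenvectors of two independent blocks plus a compactness argument yielding a uniform $c<1$ on a positive-measure event --- is a sketch rather than a proof: ``not an eigenvector of the second block'' does not by itself imply the second block strictly contracts that direction in trace norm, and extracting a uniform $c$ over an open set of positive measure requires checking semicontinuity of $(\E_1,\E_2)\mapsto\eta(\E_1\circ\E_2)$ in the correct direction. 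To be fair, the paper's proof confronts the same difficulty by simply asserting that hypotheses (i) and (ii) put all non-unit eigenvalues strictly inside the unit disk; that is the identical gap stated rather than filled. So: a different and in places more careful argument, but the crucial positive-probability strict-contraction step still needs to be established before either version is complete.
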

\begin{proof}
Condition (i) implies that the quantum channel $\M_n(\cdot)$ has a one-dimensional set of fixed points with a probability that converges to 1 with increasing $n$.  For a given sequence $\sigma$ of sufficiently large length, we denote the fixed point as $\rho_\sigma$.  Condition (ii) implies that we can then almost surely represent $\M_n(\cdot)$ as 
\be \label{eqn:mn}
{ \M}_n =P_\infty^\sigma+ \sum_{k=1}^{d^2-1} \lambda_k^\sigma  P_k^\sigma  ,
\ee
where $\lambda_k^{\sigma}$ are the eigenvalues of $\M_n(\cdot)$ with magnitude strictly less than 1, $P_k^\sigma$ are projectors into the eigenspaces,  $P_\infty^\sigma X = \rho_\sigma \trace(X)$ for all $X \in M_d(\mathbb{C})$, and $P_\infty^\sigma+ \sum_k P_k^\sigma = \id$.
 Note that the second term in Eq.~(\ref{eqn:mn}) is always traceless when acting on $X\in M_d(\C)$  due to the fact that $\M_n(\cdot)$ preserves the trace.  This condition implies that 
\be
P_\infty^{\sigma'} \sum_k \lambda_k^{\sigma} P_k^{\sigma} =0,
\ee for all $\sigma=(\sigma_1,\ldots,\sigma_n)$ and $\sigma'=(\sigma_1',\ldots,\sigma_n')$.

Now consider the limit as $q \to \infty$ of the quantum channels
\be \label{eqn:mqn}
\hat{\M}_{q n} = P_\infty^{\mu_q}+  \sum_k \lambda_k^{\mu_q} P_k^{\mu_q} P_\infty^{\mu_{q-1}} + \prod_{\ell=1}^{q} \sum_k \lambda_k^{\mu_\ell} P_k^{\mu_\ell}  ,
\ee
where $\mu_\ell=(\sigma_{(\ell-1)n+1},\ldots,\sigma_{\ell n})$.    Denote by $\lambda_{\rm max}^{\mu_\ell} = \max_k(|\lambda_k^{\mu_\ell}|)$ the maximum magnitude of the eigenvalues, now
\be
\begin{split}
||\prod_{\ell=1}^{q} \sum_k \lambda_k^{\mu_\ell} P_k^{\mu_\ell} ||_\infty & \le \prod_{\ell=1}^q ||\sum_k \lambda_k^{\mu_\ell} P_k^{\mu_\ell}||_\infty = \prod_{\ell =1}^{q} \lambda_{\rm max}^{\mu_\ell}
\end{split}
\ee
where $|| A ||_\infty $ is the magnitude of the maximum eigenvalue of $A$.  The RHS is a strictly decreasing sequence of positive real numbers, which must converge to zero as $q\to \infty$.  Applying  condition (ii)  to $\M_{qn}(\cdot)$ with $\sigma=(\sigma_1,\ldots,\sigma_{qn})$ almost surely results in the identities 
\begin{align}
P_\infty^\sigma &=  P_\infty^{\mu_q}+  \sum_k \lambda_k^{\mu_q} P_k^{\mu_q} P_\infty^{\mu_{q-1}},\\
   \sum_k \lambda_k^{\sigma}& P_k^\sigma = \prod_{\ell=1}^{q} \sum_k \lambda_k^{\mu_\ell} P_k^{\mu_\ell}.
 \end{align}
 Now, since the maximal eigenvalue $\lambda_{\rm max}^\sigma$ converges to zero, this implies the convergence with probability one
 \be
 \lim_{n\to \infty} \M_n(\rho) = \rho_\sigma,
 \ee
\end{proof}
The following theorem and a weakened version, both proved in Ref.~\cite{Wolf12}, are helpful in determining whether a given $\M_n$ satisfies condition (i) of Theorem \ref{thm:markov} 
\begin{theorem} \label{lem:wolf1}
Let $\E: M_d(\mathbb{C}) \to M_d(\mathbb{C})$ be a quantum channel with Kraus
decomposition $\E(\cdot) = \sum_i K_i \cdot K_i^\dag$.  Denote by $\K_m \equiv {\rm Span} \{ \prod_{k=1}^m K_{i_k} \}$  the complex linear span of all degree-$m$ monomials of Kraus operators forming $\E_n(\cdot)$. Then, the
following are equivalent: 
(1) For all density matrices $\rho$ the limit $\lim_{k\to \infty} \E^k(\rho)$ exists; is independent of $\rho$ and is given by a positive-definite density matrix $\rho_\infty$. (2)There exists an $m\in \mathbb{N}$ such that, for all $n\ge m$, $\K_n = M_d(\mathbb{C})$.
\end{theorem}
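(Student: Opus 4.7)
The plan is to prove the two implications separately by linking the spanning condition on the $\K_n$ to a strict positivity property of iterates of $\E$ and then invoking the noncommutative Perron-Frobenius theorem for completely positive maps.

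For $(2)\Rightarrow(1)$, I would first observe that the Kraus operators of the $m$-th iterate $\E^m$ are precisely the degree-$m$ monomials $K_{i_1}\cdots K_{i_m}$, so the hypothesis $\K_m = M_d(\C)$ says that these operators span the full matrix algebra in the Hilbert-Schmidt sense. Consequently, no nonzero rank-one matrix $|w\rangle\langle v|$ can be Hilbert-Schmidt orthogonal to every such monomial, which means that for any pair of nonzero vectors $|v\rangle,|w\rangle$ at least one monomial satisfies $\langle v|K_{i_1}\cdots K_{i_m}|w\rangle \neq 0$. Applied to $\rho=|w\rangle\langle w|$ this gives
\be
\langle v|\E^m(\rho)|v\rangle = \sum_{i_1,\ldots,i_m}|\langle v|K_{i_1}\cdots K_{i_m}|w\rangle|^2 > 0,
\ee
and by convexity $\E^m(\rho)$ is strictly positive definite for every nonzero $\rho\ge 0$, i.e.\ $\E^m$ is strictly positivity improving. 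The noncommutative Perron-Frobenius theorem for quantum channels then tells us that $1$ is a simple eigenvalue of $\E$ with a full-rank eigenvector $\rho_\infty$ and every other spectral value lies strictly inside the unit disk, which by the Jordan decomposition of $\E$ gives geometric convergence $\E^k(\rho)\to\rho_\infty\trace(\rho)$ for every initial $\rho$.

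For $(1)\Rightarrow(2)$ I would argue by contrapositive. Suppose $\K_n\ne M_d(\C)$ for every $n$. First observe that once $\K_m=M_d(\C)$ for some $m$, the trace-preservation condition $\sum_i K_i^\dag K_i=\id$ (which forces $\sum_i\mathrm{Range}(K_i)=\C^d$) gives $\K_{m+1}=\sum_i K_i\K_m = M_d(\C)$ as well, so the failure of spanning must be permanent: $\dim\K_n$ stabilizes at some value strictly less than $d^2$. The limiting subspace $\K_\infty$ is then a proper subspace of $M_d(\C)$ that is closed under both left and right multiplication by elements of $\mathrm{span}\{K_i\}$, from which one extracts a common nontrivial invariant subspace of $\C^d$ for the Kraus operators. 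This invariant subspace produces one of the standard obstructions to primitivity: either a second linearly independent fixed point of $\E$ supported on a complementary block (contradicting uniqueness of $\rho_\infty$), a fixed point supported only on a proper subspace (contradicting positive-definiteness of $\rho_\infty$), or a unit-modulus eigenvalue of $\E$ other than $1$ arising from a periodic block structure (contradicting existence of the $k\to\infty$ limit). In each case condition (1) fails, completing the contrapositive.

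The main obstacle will be the $(1)\Rightarrow(2)$ direction: failure of the spanning condition does not yield a single uniform pathology but can manifest as uniqueness failure, rank deficiency, or spectral periodicity, and one has to carefully translate the invariant-subspace structure into whichever of these contradicts the appropriate clause of (1). Cleanly performing this translation requires decomposing $\C^d$ into the irreducible components of the algebra generated by the Kraus operators and tracking the action of $\E$ on each block, which is where most of the bookkeeping lies.
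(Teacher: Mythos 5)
The paper does not actually prove this theorem: it is imported verbatim from Wolf's lecture notes \cite{Wolf12}, so there is no in-paper argument to compare against, and your proposal has to be judged on its own terms. Your direction $(2)\Rightarrow(1)$ is correct and is the standard route: spanning of the degree-$m$ monomials is precisely strict positivity of $\E^m$ on nonzero positive operators, and Perron--Frobenius theory for positive maps finishes the job. (To rule out peripheral eigenvalues that are nontrivial roots of unity you should invoke the hypothesis at two consecutive degrees, $n=m$ and $n=m+1$; condition (2) supplies this, so that is only a presentational omission.)

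Your direction $(1)\Rightarrow(2)$ has two genuine problems. First, a local error: $\sum_i K_i^\dag K_i=\id$ does \emph{not} force $\sum_i\mathrm{Range}(K_i)=\C^d$ --- take $K_1=\ket{0}\bra{0}$, $K_2=\ket{0}\bra{1}$ in $d=2$; it forces $\bigcap_i\ker K_i=\{0\}$. The monotonicity claim $\K_m=M_d(\C)\Rightarrow\K_{m+1}=M_d(\C)$ survives, but only via the right-handed factorization $\K_{m+1}=\sum_i\K_m K_i=\sum_i M_d(\C)\,K_i=M_d(\C)$, which uses the absence of a common kernel rather than the sum of ranges. Second, and fatally, the passage from ``spanning fails for all $n$'' to ``the Kraus operators have a common nontrivial invariant subspace of $\C^d$'' is false. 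The spaces $\K_n$ are not nested, so no limiting bimodule $\K_\infty$ need exist; the sequence can cycle. Concretely, take $d=2$, $K_1=\sigma_x/\sqrt{2}$, $K_2=\sigma_y/\sqrt{2}$. Then $\K_n$ alternates between $\mathrm{Span}\{\sigma_x,\sigma_y\}$ and $\mathrm{Span}\{\id,\sigma_z\}$, so $\K_n\ne M_2(\C)$ for every $n$, yet $\sigma_x$ and $\sigma_y$ act irreducibly on $\C^2$ and generate all of $M_2(\C)$: there is no invariant subspace to extract. The obstruction to (1) here is purely spectral, since $\E(\sigma_z)=-\sigma_z$ gives a peripheral eigenvalue $-1$ and the limit fails to exist, and this case cannot be reached through your invariant-subspace funnel.

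A clean repair of this direction avoids the algebraic bookkeeping entirely. The Choi matrix of $\E^n$ is $\sum_j \ket{K^{(n)}_j}\rangle\langle\bra{K^{(n)}_j}$ in vectorized form, so its rank equals $\dim\K_n$. Under (1), $\E^n$ converges to the map $X\mapsto\rho_\infty\trace X$, whose Choi matrix is $\rho_\infty\otimes\id$ and has full rank $d^2$ because $\rho_\infty$ is positive definite. Since a sequence of positive semidefinite matrices converging to a positive definite matrix is eventually positive definite, $\dim\K_n=d^2$ for all sufficiently large $n$, which is exactly (2).
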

The modified version of this theorem has the weaker condition that we are only guaranteed the existence of an $n\in \N$ such that $\K_n  =M_d(\C)$.  In this case, the limit $\lim_{k\to \infty} \E^k(\rho)$ exists; it is independent of $\rho$, but it is given by a positive-semidefinite density matrix $\rho_\infty$.  This version of the theorem applies to the restricted region of the phase diagram with $p_1=0$, where, for $\delta=1$, the NEASs are given by pure states, i.e., rank-one density matrices.\\

\section{Symmetric simple exclusion process}
\label{app:SSEP}
Here, we review some basic properties of the SSEP \cite{Derrida07}. The master equation for the configuration probability $P(\bm{\tau})$ of the SSEP takes the form (see also Eq.~(\ref{eqn:dpdt}))
\be \label{eqn:dpdt2}
\frac{d P(\bm{\tau})}{d t } = \sum_{\{\sigma_i\}} W_{\bm{\tau}}^{\bm \sigma} P(\bm{\sigma}).
\ee
The transition matrix $W_{\bm{\tau}}^{\bm \sigma}$ can be decomposed into two single-site boundary operators and $L-1$ two-site operators in the bulk,
\be
W_{\bm{\tau}}^{\bm{\sigma}} =  [R_L]_{\tau_1}^{\sigma_1} \otimes \mathbb{I} +[R_R]_{\tau_L}^{\sigma_L} \otimes \mathbb{I}+ \sum_{i=1}^{L-1} [W_i]_{\tau_{i i+1}}^{\sigma_i \sigma_{i+1}}\otimes \mathbb{I},
\ee
where, in the local basis $\{11,01,10,00\}$, $W_i$ is equal to an effective hopping matrix
 \begin{align}
 W_i&=\left(\begin{array}{c c c c}
 0 & 0 & 0 &0 \\
 0& -1/2 & 1/2 & 0 \\
 0 & 1/2 & -1/2 & 0\\
 0 & 0 & 0 &0
 \end{array} \right),
 \end{align}
while the boundary transition matrices for antisymmetric reservoirs take the form
 \be
R_{L/R}=\frac{1}{4}\left(\begin{array}{c c }
-1\pm\delta & 1\pm\delta  \\
1\mp\delta & -1\mp\delta
\end{array} \right).
\ee
 The NESS has an exact solution in terms of a matrix-product-state (MPS) representation.  Specifically,
 \be \label{eqn:pcliff}
 P(\tau_1,\ldots, \tau_L)=\bra{U} A_{\tau_1} \cdots A_{\tau_L} \ket{V},
 \ee
 where $A_{L/R}$ are infinite-dimensional matrices and $\ket{U,V}$ are vectors that satisfy the algebraic equations
 \begin{align}
 [A_L, &A_R]= 2(A_L+A_R), \\
  \ket{V}&= \Big(- \frac{1+\delta}{4} A_L + \frac{1-\delta}{4}A_R\Big) \ket{V} , \\
\bra{U}& = \bra{U} \Big( -\frac{1+\delta}{4} A_R + \frac{1-\delta}{4} A_L \Big)  .
\end{align}
From these relations, it directly follows that Eq.~(\ref{eqn:pcliff}) satisfies (\ref{eqn:dpdt})/(\ref{eqn:dpdt2}).  This algebra can be used to efficiently compute the probability amplitude of any configuration by recursively reducing the number of $A_{\tau_i}$ matrices in the representation of $P(\tau_1,\dots,\tau_L)$ from $L$ to  zero.  Explicit representations of these matrices are given in Ref.~\cite{Derrida93}. The one- and two-point connected correlations functions for $i<j$  are given by \cite{Derrida07}
\begin{align} \label{eqn:tx}
\mean{\tau_{i}} &= \frac{1+\delta}{2} - \delta \frac{i}{L+1}, \\ \label{eqn:txty}
\tau_{ij} &= - \frac{i(L+1-j)}{(L+1)^2 L} \delta^2, 
\end{align}
where $\tau_{ij} \equiv  \mean{\tau_{i}\tau_{j}} -  \mean{\tau_{i}}\mean{\tau_{j}}$.  Note that the two-point function is negative due to the repulsive hard-core interactions between particles.
  In the scaling limit, the connected three-point function also has a simple form ($x<y<z$)
\begin{align}
 \tau(x,y,z) = - 2 \frac{x (1-2y)(1-z)}{L^2} \delta^3.
\end{align}

\section{Scaling limit for phase II}
\label{app:NIFder}

In this appendix, we provide more details of the derivation of the volume-law correction to the average entropy and mutual information for the noninteracting fermion random circuit.  First, we note that the free-fermion gates acting on sites $k$ and $k+1$  obey the relation
\be
U c_i^\dag U^\dag = \sum_{j} V_{ij} c_j^\dag
\ee
where $V_{ij}$ acts as the identity on most sites except when $j$ is equal to $k$ or $k+1$, in which case it is a Haar random matrix on U(2) with probability $p_1$, permutes sites $k$ and $k+1$ with probability $(1-p_1)/2$ due to the action of the iSWAP, or is the identity with probability $(1-p_1)/2$.  The action of the reservoir can also be simply accounted for by combining the action of the unitary gate, swapping with the reservoir, and tracing over the reservoir into a Kraus operator acting on the density matrix; however, in the scaling limit ($L\to \infty$) a detailed consideration of the boundary operators is not necessary to lowest order in $1/L$ as they only serve to set boundary conditions on the course-grained correlation functions.   

In deriving time-evolution equations for  the covariance matrix,
\begin{align}
A_{ij} \equiv (1-\delta_{ij})  \overline{\lvert \langle c_i^\dag c_j \rangle \lvert^2 } + \delta_{ij} \big( \overline{\mean{n_i}^2 } - \overline{\mean{n_i}}^2\big),
\end{align}
one finds that circuit averaging will couple $A_{ij}$ to density-density fluctuations $\overline{\mean{ n_i} \mean{n_j}}$ for $i \ne j$.  However, since every NEAS is a Gaussian fermionic state, we can use Wick's theorem  to find a relation between the density-density correlations and $A_{ij}$ using the mapping of  $\bar{\rho}$ to  the SSEP
\be
\begin{split}
\overline{ \mean{n_i n_j} }&=\overline{\mean{n_i}} ~ \overline{\mean{n_j}}  + \mean{\tau_i \tau_j}_c \\
&= \overline{\langle c_i^\dag c_i c_j^\dag c_j \rangle }  =\overline{\mean{ n_i} \mean{n_j}} - \overline{\lvert \langle c_i^\dag c_j \rangle \lvert^2} , \label{eqn:4pt} 
\end{split}
\ee
where $\mean{\tau_i \tau_j}_c$ is given in Eq.~(\ref{eqn:txty}).  Using the definition of $A_{ij}$, we find that the average density-density correlations are given by
\be
\overline{\mean{ n_i} \mean{n_j}} - \overline{\mean{n_i}}~  \overline{\mean{n_j}}  =A_{ij} + \tau_{ij} .
\ee
From our solution for $A_{ij} = - \tau_{ij}$, we find that these correlations exactly vanish to lowest order in $1/L$.  

\begin{widetext}
Denoting averages over random circuits with an overbar,  the time-evolution equations for the covariance matrix elements are given by
\be
\begin{split}
\overline{\langle  c_{i_1}^\dag c_{i_2} \rangle \langle c_{i_3}^\dag c_{i_4} \rangle}(t+\delta t)  =  \sum_{k,\{m_\ell \} } \int \frac{ d \mu }{L}V_{i_1 m_1}^{k \sigma} V_{i_2 m_2}^{*k \sigma} V_{i_3 m_3}^{k \sigma} V_{i_4 m_4}^{*k \sigma}  \overline{\langle c_{m_1}^\dag c_{m_2} \rangle \langle c_{m_3}^\dag c_{m_4} \rangle}(t),
\end{split}
\ee
where $V_{ij}^{k \sigma}$ acts on sites $k$ and $k+1$ and $d \mu $ is the  probability measure for the randomly chosen set of gates on that site.   To compute this average, we can use standard formulas for Haar averages of tensor products of matrices on $U(n$) \cite{Nahum17}.  In particular, for a given $k$, when $i_\ell$ and $m_\ell$ are all equal to either $k$ or $k+1$, we have the identity
\be
\begin{split}
\int d \mu V_{i_1 m_1}^{k \sigma} V_{i_2 m_2}^{*k \sigma} V_{i_3 m_3}^{k \sigma} V_{i_4 m_4}^{*k \sigma} & =  \frac{p_1}{3} \Big[ \delta_{i_1 i_2} \delta_{i_3 i_4} \delta_{m_1 m_2} \delta_{m_3 m_4} + \delta_{i_1 i_4} \delta_{i_2 i_3} \delta_{m_1 m_4} \delta_{m_2 m_3}  \\
&- \frac{1}{2} \Big( \delta_{i_1 i_4}\delta_{i_2 i_3} \delta_{m_1 m_2} \delta_{m_3 m_4} + \delta_{i_1 i_2} \delta_{i_3 i_4} \delta_{m_1 m_4} \delta_{m_2 m_3} \Big) \Big] \\
&+ \frac{1-p_1}{2} \Big( \prod_{\ell=1}^4\delta_{i_\ell m_\ell}  +  \prod_{\ell=1}^4 F^{k k+1}_{i_\ell m_\ell} \Big), 
\end{split}
\ee
where $F^{k k+1}$ is a permutation matrix for sites $k$ and $k+1$.  On the other hand, if only a pair of $i_\ell$ indices are equal to $k$ or $k+1$ (e.g., $i_1$ and $i_3$), we instead have the identity
\be
\begin{split}
\int d \mu V_{i_1 m_1}^{k \sigma} V_{i_2 m_2}^{*k \sigma} V_{i_3 m_3}^{k \sigma} V_{i_4 m_4}^{*k \sigma}  &= \frac{1}{2} \Big( \prod_{\ell=1}^4\delta_{i_\ell m_\ell}  + F^{k k+1}_{i_1 m_1} F^{k k+1}_{i_3 m_3} \delta_{i_2 m_2} \delta_{i_4 m_4}  \Big).
\end{split}
\ee
Using these relations together with Eq.~(\ref{eqn:4pt}), we find the steady-state equations for the covariance matrix in the bulk $ 2\le i\le L-2$ and $i+2\le j\le L-1$,
\begin{align}  \label{eqn:aii}
0 &= - \Big( \frac{1}{2} + \frac{p_1}{6}\Big) A_{ii} + \Big(\frac{1}{4}- \frac{p_1}{12} \Big) (A_{i-1i-1}+A_{i+1 i+1}) + \frac{p_1}{3} (A_{i i+1} + A_{i-1 i} )  \\ \nonumber
&+ \frac{ \delta^2}{2 (L+1)^2}  - \frac{2 i (L+1) - 2i^2 -1}{6 L (L+1)^2}  \delta^2 p_1, \\  \label{eqn:daiipdt}
0 &= - \Big( \frac{1}{3} + \frac{2 p_1}{9}\Big) A_{ii+1} +  \frac{p_1}{18}  (A_{ii}+A_{i+1 i+1}) + \frac{1}{6} (A_{i-1 i+1} + A_{i i+2} )  + \frac{2 i L - 2 i^2 + L}{18 L (L+1)^2} \delta^2 p_1 \\
0 & = -\frac{1}{2}A_{ij} + \frac{1}{8} (A_{i-1 j}+A_{i+1 j} + A_{i j-1}+A_{i, j+1}).
\end{align}
\end{widetext}

Before examining the steady-state solutions  for finite $p_1$, it is constructive to examine their behavior for $p_1=0$, where we have an analytic solution for the NEAS density matrices and their probability distribution.  In this case, we see that $A_{ii}$ decouples from $A_{i i+1}$, which becomes undriven.  As a result, $A_{ij}=0$ for $j \ne i$ and we are left with a discrete diffusion equation for $A_{ii}$,
\be
A_{i-1 i-1} - 2 A_{ii} + A_{i+1 i+1}  = - \frac{2\delta^2}{(L+1)^2}.
\ee
We define $h(x)=A_{xL,xL}$, which, in the scaling limit, satisfies the boundary conditions $h(0)=h(1)=0$, leading to the solution
\be \label{eqn:Axx}
h(x) = x (1-x) \delta^2+ O(L^{-1}).
\ee
An alternative derivation of this result is to use the mapping of $\mathbb{P}(\rho_{\bm{\tau}} )$ to the SSEP with fully pseudospin-polarized reservoirs to write
\be
\overline{\mean{n_i}^2} = \Big( \frac{1+ \delta}{2} \Big)^2 \mean{\tau_i} + \Big( \frac{1- \delta}{2}\Big)^2 (1 -\mean{\tau_i}) ,
\ee
where $\mean{\tau_i} = 1 - i/(L+1)$, which agrees with (\ref{eqn:Axx}) in the scaling limit after subtracting $\overline{\mean{n_i}}^2 $.  Using this solution to compute the correction to the entropy we find
\be
 S(\rho_{\rm LE})- \overline{S(\rho_{\bm \tau})}  =   \frac{\delta^2}{3} L + O(L^0) + O(\delta^3),
  \ee
  which agrees with Eq.~(\ref{eqn:delSssep}) expanded to order $\delta^2$.
  
  For finite $p_1$, the solution to the steady state exhibits  crossover behavior in the scaling limit because the $A_{i j}$ correlations are now sourced by the current along the diagonal.  To solve for $A_{ij}$, we first  invert the steady-state solution of Eq.~(\ref{eqn:aii})  and find
    \be
\begin{split}
  A_{ii} &= \sqrt{\frac{p_1}{3}} \sum_j e^{- \lambda \abs{i - j}} (A_{j j+1} + A_{j-1 j}) \\
&   + \frac{ 3 (L+1) - 2 i (L+1 - i) p_1}{2 L (L+1)^2 p_1} \delta^2,
\end{split}
  \ee
  where $\cosh \lambda = \frac{3+p_1}{3-p_1}$.  This expression is valid in the bulk up to exponentially small corrections on the order of $e^{-\lambda L}$.
Notice that the second term is nonperturbative in $p_1$, which is consistent with the  nonanalytic behavior we find in the entropy and mutual information.  Defining $a(x,y)=A_{xL,yL+1}$, we can rewrite the first term as
\be
\begin{split}
\sqrt{\frac{p_1}{3}} &\sum_j e^{- \lambda \abs{i - j}} (A_{j j+1} + A_{j-1 j}) \\
&\approx 4 a(x,x)+\frac{3}{L^2 p_1} (\del_x+\del_y)^2 a(x,y)\lvert_{x=y}.
\end{split}
\ee
Inserting this identity into Eq.~(\ref{eqn:daiipdt}) leads to the coarse-grained diffusion equation for $a(x,y)$, 
\begin{align}
 \nabla^2 a = - \frac{2 \delta(x-y)}{L} \big[ \delta^2+  (\del_x+\del_y)^2 a \big].
 \end{align}
The second term on the rhs contributes an $O(L^{-2})$ correction to $a(x,y)$, which is why we neglected it in the discussion in the main text.

 \section{Average replicated density matrix in chaotic phase III - $abc$ Model}
 \label{app:phaseIII}

 In this appendix, we solve for the one- and two-point correlation functions of the time-averaged replicated density matrix $\overline{\rho_{\rm NEAS} \otimes \rho_{\rm NEAS}}$ perturbatively in $\delta$ and $L^{-1}$. Using the ansatz that the higher-order connected correlation functions scale with higher powers of $\delta$, these correlation functions then allow us to compute the deviations of the average entropy and mutual information away from $\bar{\rho}$.  As discussed in the main text, we find that the entropy converges to that of local equilibrium, while the mutual information obeys an area law to second order in $\delta$.
 
To describe the dynamics of the NEAS  $\rho_\nu$, we first decompose the state into a complete set of orthogonal, Hermitian operators $S$ normalized according to $\trace(S S'^\dag)= \delta_{SS'}$
\begin{align}
\rho_\nu &=  \sum_S a_S^\nu S , ~~a_S^\nu=\trace(\rho_\nu S^\dag).
\end{align}
The random channels that make up the random circuit map the state $\rho_\nu$ to a different NEAS $\E_\sigma(\rho_\nu)$.
Defining the coefficients $e_{SS'}^\sigma$ by 
\be
\E_\sigma(S)=\sum_{S'} e_{SS'}^\sigma S',
\ee
the density matrix is updated as
\begin{align}
\E_\sigma(\rho_\nu)&=\sum_{SS'}e_{SS'}^\sigma a_{S'}^\nu S.
\end{align}
Then the average moments in the NEASs satisfy
\be \label{eqn:moments}
\overline{a_{S_1}^\nu \cdots a_{S_n}^\nu}  = \sum_{S_i'} \overline{e_{S_1S_1'}^\sigma \cdots e_{S_nS_n'}^\sigma} ~ \overline{a_{S_1'}^\nu \cdots a_{S_n'}^\nu}.
\ee
As a result, the matrix $\overline{ \rho_{\rm NEAS}^{\otimes n}}$ is given by the NESS of a stochastic process in the space of operator strings with  the transition matrix $W_{SS'}=\overline{e_{S_1S_1'}^\sigma \cdots e_{S_nS_n'}^\sigma} - \delta_{SS'}$ \cite{Nahum17}.
 
  To compute the average entropy and mutual information to lowest order in $\delta$, it is sufficient to have access to only the doubled matrix $\overline{\rho_{\rm NEAS} \otimes \rho_{\rm NEAS}}$.   To solve for this operator, we first reduce to a subset of the full Hilbert space that describes the NEASs.  Unitarity of the bulk dynamics implies that $\rho^2$ evolves with the same average dynamics as $\rho$.  This implies that $\overline{\rho_{\rm NEAS}^2}$ has only  diagonal components in the local $z$ basis, placing constraints on  which second-order moments $\overline{a_S a_{S'}}$ are allowed to be nonzero.  In particular, the only allowed operator string pairs $\binom{S}{S'}$ at each site are one of the following six  pairs
\be
\left( \begin{array}{c}
S \\
S'
\end{array}\right)=
\left( \begin{array}{c}
\ldots \ell \ldots r \ldots u \ldots d \ldots u\ldots d\ldots  \\
\ldots r\ldots  \ell\ldots u\ldots d\ldots d\ldots u \ldots
\end{array}\right),
 \ee
where the single-site operators are taken as $u =(1+\sigma^z)/2,$ $d=(1-\sigma_z)/2,$ $\ell=\sigma^-$, and $r=\sigma^+$.  This simplification reduces the size of the Hilbert space needed to represent $\overline{\rho_{\rm NEAS} \otimes \rho_{\rm NEAS}}$ from $16^L$ to $6^L$.

To describe the dynamics in this basis we make a formal mapping of each of these on-site operator pairs to one of three classes of spin-1/2 particles: $a_\uparrow=\binom{\ell}{r},~a_{\downarrow}=\binom{r}{\ell},~b_\uparrow=\binom{u}{u},~b_\downarrow=\binom{d}{d}, ~c_\uparrow=\binom{u}{d},$ and $c_\downarrow=\binom{d}{u}$.  
 It is further convenient to make a Jordan-Wigner transformation on the $\ell_i$ and $r_i$ operators, which leads to an anticommutation relation for the $a$ and $c$ particles.
We can then map the solution for $\overline{\rho\otimes \rho}$ to the NESS of a six-species symmetric exclusion process in the space
$
(\mu_1^{s_1},\ldots,\mu_N^{s_N}),
$
where $\mu_i \in \{ a,b,c \} $, $s_i \in\{ \uparrow,\downarrow \}$.  We can evaluate the transition matrix in this representation following a similar approach as for the NIF random circuit, making use of standard formulas for averages over Haar random unitaries.  We find that the amplitude of each configuration $P_L$ evolves according to
\be \label{eqn:dpn}
\begin{split}
\frac{d P_L(\bm{\mu^{s}})}{dt}  = \sum_{\bm{\sigma}} W_{\bm{\mu^{s}}}^{\bm{\sigma}} P_L(\bm{\sigma}) .
\end{split}
\ee
 The two-site transition matrix in the bulk is a $6 \times 6 \times 6 \times 6$ tensor with nonzero entries,
   \begin{widetext}
 \begin{align}
 \small{
 W_{0}=\left(\begin{array}{c c c c c c}
 -\frac{1}{2}-\gamma_1 & \frac{1}{2}-\gamma_1 & \gamma_1 & \gamma_1 & \gamma_1 & \gamma_1 \\
 \frac{1}{2}-\gamma_1 & -\frac{1}{2}-\gamma_1 & \gamma_1 & \gamma_1 & \gamma_1 & \gamma_1 \\
 \gamma_1 & \gamma_1 & -\frac{1}{2}-\gamma_1 & \frac{1}{2}-\gamma_1 & -\gamma_1 & -\gamma_1  \\
  \gamma_1 & \gamma_1 & \frac{1}{2}-\gamma_1 & -\frac{1}{2}-\gamma_1 & -\gamma_1 & -\gamma_1  \\
  \gamma_1 & \gamma_1 &  -\gamma_1 & -\gamma_1 & -\frac{1}{2}-\gamma_1 & \frac{1}{2}-\gamma_1 \\
  \gamma_1 & \gamma_1 &  -\gamma_1 & -\gamma_1 & \frac{1}{2}-\gamma_1 & -\frac{1}{2}-\gamma_1 
 \end{array}\right), }
 \end{align}
 \begin{align}
 W_{a_s b_{s'}}&=W_{b_s c_{s'}}=\left(\begin{array}{c c}
 -\frac{1}{2} & \frac{1}{2} \\
\frac{1}{2} & -\frac{1}{2}
 \end{array} \right), ~W_{a_s c_{s'}}=\left(\begin{array}{c c}
-\frac{1+\gamma_2}{2}  & \frac{1-\gamma_2}{2} \\
\frac{1-\gamma_2}{2}& -\frac{1+\gamma_2}{2}
 \end{array} \right),
 \end{align}
 where $\gamma_1 \equiv p_1/6$, $\gamma_2 \equiv p_2$, $W_0$ acts in the subspace $\{ b_\uparrow b_\downarrow, b_\downarrow b_\uparrow, c_\uparrow c_\downarrow, c_\downarrow c_\uparrow, a_\uparrow a_\downarrow, a_\downarrow a_\uparrow \} $ and $W_{\mu_s \nu_{s'}}$ acts in the subspace $\{ \mu_s \nu_{s'}, \nu_{s'}\mu_s \}$.    
 The left boundary transition matrix is $R_L = R_L^{bc} \oplus R_L^a$ where the matrix elements of $R_L^{bc}$ in the basis $\{b_\uparrow,b_\downarrow,c_\uparrow,c_\downarrow \}$ are
  \begin{align}
 \footnotesize{R_L^{bc}=
\left(\begin{array}{c c c c c c}
-p_d^2(\frac{1}{2}+\gamma_1) - p_d p_u & p_u^2(\frac{1}{2}-\gamma_1) & \frac{p_u^2}{2}+\gamma_1 p_d p_u & \frac{p_u^2}{2}+\gamma_1 p_d p_u  \\
p_d^2(\frac{1}{2}-\gamma_1) & -p_u^2(\frac{1}{2}+\gamma_1) - p_d p_u & \frac{p_d^2}{2}+\gamma_1 p_d p_u & \frac{p_d^2}{2}+\gamma_1 p_d p_u \\
\frac{p_d p_u}{2}+\gamma_1 p_d^2 & \gamma_1 p_u^2+\frac{p_d p_u}{2} & -\frac{1}{2} + p_d p_u (1-2\gamma_1) &0  \\
\frac{p_d p_u}{2}+\gamma_1 p_d^2 & \gamma_1 p_u^2+\frac{p_d p_u}{2} & 0 &-\frac{1}{2} +  p_d p_u (1-2\gamma_1)   
\end{array} \right),}
\end{align}
  \end{widetext} 
where $p_u = (1+\delta)/2$ and $p_d = (1-\delta)/2$.  The matrix elements of $R_L^a$ in the basis $\{a_{\uparrow},a_\downarrow\}$ are
\be
R_L^a =  \left(\begin{array}{c c}
-1/2 & 0 \\
0 & -1/2 
\end{array} \right).
\ee
The right boundary transition matrix $R_R$ can be obtained by switching $p_d$ and $p_u$ in the expressions for $R_L$.
 
   From these expressions, we see that the interacting gates effectively induce a dissipative interaction between $a$ and $c$ particles.  This result can be understood intuitively because the interactions effectively couple the off-diagonal correlations of the fermions to density-density fluctuations, which damp out the off-diagonal terms.  Alternatively, the presence of these dissipative terms can be understood using the arguments given in the main text, whereby the nonconserved operators spread ballistically to the reservoirs where they are decohered, which leads to a dissipation term in the bulk.

There are several helpful identities to keep in mind when working with this representation for $\overline{\rho_{\rm NEAS} \otimes \rho_{\rm NEAS}}$.  First, the fact that $\trace[\rho] =1$ is reflected by the identity
\be
1=\overline{\trace[\rho_{\rm NEAS}]^2} = \sum_{\{ \mu_{i}^{s_i}\, : \,  \mu_i \in\{b,c\} \}}  P_L(\mu_1^{s_1},\ldots,\mu_{L}^{s_L}) .
\ee
This conservation law is explicitly preserved by the bulk and boundary transition matrices. 
Similarly we can express the average purity as
\be \label{eqn:purityabc}
\trace[ \overline{\rho_{\rm NEAS}^2}] = \sum_{\{ \mu_{i}^{s_i}\, : \, \mu_i \in\{a,b\} \}}P_L(\mu_1^{s_1},\ldots,\mu_{L}^{s_L}).
\ee
The average purity is preserved by the bulk transition matrices, which are associated with unitary dynamics;  however, purity is not conserved by the boundary matrices.  These boundary terms conserve the total number of $b$ and $c$ particles to preserve probability, but give rise to pure damping of the $a$ particles. The reduced density matrices for a subset of sites $A = \{i_1,\ldots,i_n\} \subset \{1,\ldots,L\}$ have the representation
\begin{align} 
\overline{\rho_{\rm NEAS}^A \otimes \rho_{\rm NEAS}^A} &= \sum_{ \{ \mu_{i_j}^{s_{i_j}}\, :\,  i_j \in A \}} P_A(\bm{\mu^{s}}) \bigotimes_{j=1}^n \hat{\mu}_{i_j}^{s_{i_j}} ,\\
P_A(\mu_{i_1}^{s_{i_1}},\ldots,&\, \mu_{i_n}^{s_{i_n}} ) = \sum_{\{ \mu_{i}^{s_i}\, : \, i \in A^c, \, \mu_i \in\{b,c\} \}}  P_L(\bm{\mu^{s}}) ,
\end{align}
where $A^c$ is the complement of $A$,  $\rho^A= \trace_{A^c} [\rho]$, and $\hat{\mu}_{i}^{s_{i}}$ denotes the single-site operator in the doubled space corresponding to the label $\mu_{i}^{s_{i}}$.

In order to map the dynamics of $P_L$ to a classical stochastic lattice gas, one requires that $P_L\ge 0$ for all configurations of the particles.  By definition, positivity of $P_L$ is guaranteed for configurations that contain only $a$ and $b$ particles and only $b$ and $c$ particles; however, any configuration that contains a mixture of $a$ and $c$ particles is allowed to have negative weight.  This fact should provide sufficient warning to the reader to avoid literally interpreting the dynamics of the $abc$ model as a classical stochastic lattice gas.  On the other hand, for the problem at hand and sufficiently small values of $\delta$, we find that $P_L$ does not have negative weight configurations in the large-$L$ limit.  Thus, our analysis provides an \textit{a posteriori} justification for the interpretation of  the $abc$ model as a classical stochastic lattice gas, but we do not assume the positivity of $P_L$ in the analysis below.

To compute quantities such as the entropy, purity, or mutual information, it is obviously sufficient to have access to arbitrarily high-order correlation functions.  In the case of the NIF random circuit, we took advantage of the fact that the NEASs are Gaussian fermionic states to represent the entropy in terms of two-point functions.  For the full interacting problem, such a simplification is no longer possible. Instead, we take advantage of the fact that the only correlations in the system are perturbatively suppressed in the current $J\sim\delta/L$.  This fact allows us to make an expansion of the NEAS density matrices around product states,
\be \label{eqn:rhoexp}
\begin{split}
\rho_{\rm NEAS} & = \bigotimes_{i =1}^{L} \rho_i + \sum_{\ell < m} \bigotimes_{i \ne \ell,m} \rho_i \otimes \delta \rho_{\ell m} \\
&+ \sum_{\ell<m<k} \bigotimes_{i \ne \ell,m,k}\rho_i \otimes \delta \rho_{\ell m k} + \ldots,
\end{split}
\ee
where the $\delta \rho_{\ell_1 \ldots \ell_n}$ are defined recursively as the deviation of the reduced density matrix on sites $\ell_1,\ldots, \ell_n$ from the cluster expansion.  This property implies that any partial trace over $\delta \rho_{\ell_1 \ldots \ell_n}$ is zero.  Due to the fact that the NEASs for $\delta=0$ are product states, each of these terms must be at least order $\delta$.  

To illustrate the validity of such an expansion for this class of current-driven problems we  consider a few examples, focusing on the case of the NIF random circuit for $p_2=0$.
For the NESS of the SSEP represented as a density matrix, one can show that these connected correlation functions satisfy the scaling \cite{Derrida07}
\be
\delta  \rho_{\ell_1 \ldots \ell_n} \sim \frac{\delta^n}{L^{n-1}}
\ee
For the discrete hopping random circuit, we saw that $\delta  \rho_{\ell_1 \ldots \ell_n} =0$ for all $n$.  In the case of the NIF random circuit, the mutual information between two halves of the chain satisfies a volume law; thus, we know these states contain significant long-range correlations.  Nevertheless, we  now argue that the expansion in Eq.~(\ref{eqn:rhoexp}), which occurs in the original spin basis instead of the fermionic basis, is still valid.  In phase II,  we found that the off-diagonal matrix elements of  $\delta \rho_{\ell m} $ scale, with high probability, as  $\delta/\sqrt{L}$, while the diagonal elements scale as $\delta^2/L^2$.   To evaluate the scaling of the higher-order connected correlation functions, we can use the representation of $\rho$ as a Gaussian fermionic state and Wick's theorem to relate these correlations to the two-point functions.  For example, the nonzero connected three-point functions for $i<j<k$ have the scaling for the off-diagonal correlations
\be
\delta \mean{r_i \ell_j u_k}  \equiv \mean{ r_i \ell_j u_k } -\mean{ r_i \ell_j } \mean{u_k}   = - G_{ik}  G_{kj} \sim \delta^2/L,
\ee
while the connected correlations on the diagonal scale as
\begin{align}
\delta \mean{ u_i u_j u_k } & =2 \textrm{Re}[G_{ki} G_{ij} G_{jk}]  \sim \delta^3/L^{3/2}.
\end{align}
More generally, we find that the off-diagonal connected correlation functions of length $n$, for $n$ even, scale as $\delta^{n/2}/L^{n/2}$, while, for $n$ odd, they scale as $\delta^{(n+1)/2}/L^{(n+1)/2}$.  For the diagonal connected correlations we instead find the scaling $\delta^n/L^{n/2}$.  Finally, we remark that an important feature of the class of NEASs we consider is that every density matrix in the ensemble has no correlations between states with different total $z$ angular momentum.  This result implies that the single-site density matrices $\rho_i$ have no off-diagonal components.
When evaluating the purity  to order $\delta^2$, we can represent 
\be \label{eqn:puritycor}
\begin{split}
\trace[&\rho_{\rm NEAS}^2]   \\
&= \trace\Big[ \Big(\bigotimes_{i =1}^{L} \rho_i + \sum_{\ell < m} \bigotimes_{i \ne \ell,m} \rho_i \otimes \delta \rho_{\ell m} \Big)^2\Big] +O(\delta^3) \\
&= \prod_i \trace[\rho_i^2] + 2 \sum_{\ell < m} \prod_{i \ne \ell,m} \trace[\rho_i^2] \trace[\rho_\ell \rho_m \delta \rho_{\ell m}] \\
&+ \sum_{\ell <m} \prod_{i  \ne \ell,m} \trace[\rho_i^2] \trace[\delta \rho_{\ell m}^2] +O (\delta^3).
\end{split}
\ee
Here, we were able to drop several terms because of the scaling,
\begin{align}
\trace[\rho_{\ell} \rho_m \delta \rho_{\ell m}] \trace[\rho_{\ell'} \rho_{m'} \delta \rho_{\ell' m'}] \sim \delta^4/L^4, \\
\trace[\rho_{\ell} \rho_{m} \rho_{n} \delta \rho_{\ell m n} ] \sim \delta^3/L^{3/2} ,\\
\trace[\rho_{\ell} \rho_{m} \rho_{n} \rho_k \delta \rho_{\ell m n k } ] \sim \delta^4/L^{2}.
\end{align}
These scalings follow because multiplication by the single-site density matrices does not map off-diagonal terms of $\delta \rho_{\ell_1\cdots\ell_n}$ onto the diagonal, while the diagonal components have the scaling $\delta \rho_{\ell m} \sim \delta^2/L^2$ and $\delta \rho_{l m k} \sim \delta^3/L^{3/2}$.  Thus, despite the volume-law mutual information in phase II, quantities such as the entropy and mutual information can be computed  with  access only to nonextensive correlation functions of $\rho_{\rm NEAS}$ and without knowledge of the Gaussian fermionic structure of the NEASs.  The error in the $n$th order truncation scales as $\delta^n$.  

For the quantum chaotic phase, it is more convenient to apply the ansatz in Eq.~(\ref{eqn:rhoexp}) to $\overline{\rho_{\rm NEAS} \otimes \rho_{\rm NEAS}}$ instead of each instance of the random circuit. 
 This is because the averaging over random circuits can induce correlations even between product-state density matrices.  For example, for $p_1=0$, the  NEASs are pure product states, while $\mathbb{P}(\rho_{\rm NEAS})$ encodes all the long-range correlations of the SSEP (see Appendix \ref{app:SSEP}).      
 
 Based on Eq.~(\ref{eqn:purityabc}) and the symmetries of the problem for $m_L = - m_R$, to compute the average purity, we need to find the correlation functions
\begin{align}
\bar{b}_i^{s}& \equiv P_{\{ i \} }(b_i^s), \\
\delta b_{ij}^{\uparrow} &\equiv P_{\{ i ,j \}} (b_i^\uparrow,b_j^\uparrow) - \bar{b}_i^\uparrow \bar{b}_j^\uparrow, \\
\delta b_{ij}^{0} &\equiv P_{\{ i ,j \}} (b_i^\uparrow,b_j^\downarrow) - \bar{b}_i^\uparrow \bar{b}_j^\downarrow, \\
a_{ij} &\equiv P_{\{ i,j \}}(a_i^\uparrow, a_j^\downarrow).
\end{align}
   These correlation functions are subject to several mathematical constraints. Conservation of probability implies that 
\begin{align}
2 \bar{c}_i + \bar{b}_i^\uparrow+\bar{b}_i^\downarrow =1 , 
\end{align}
where $\bar{c}_i  \equiv P_{\{ i \} }(c_i^s)$ is independent of $s$ by the definition of the $c$ particles.  A second constraint can be found using the fact that tracing over the second replica reduces the doubled density matrix to $\bar{\rho}$.  For the single-site density matrices, this implies 
\be \label{eqn:bpc}
\bar{b}_i^\uparrow+\bar{c}_i = \mean{\tau_i} = \frac{1+\delta}{2} - \delta \frac{i}{L+1},
\ee
where we made use of Eq.~(\ref{eqn:tx}).  
Expanding in $\delta$,  the single-site correlation functions take the form
\begin{align}
\bar{b}_i^\uparrow = \frac{1}{4} + \delta f_i + \delta^2 g_i +O(\delta^3), \\
\bar{b}_i^\downarrow = \frac{1}{4} - \delta f_i + \delta^2 g_i +O(\delta^3),
\end{align}
which follows because sending $\delta \to - \delta$  maps $\bar{b}_i^\uparrow \to \bar{b}_i^\downarrow$.  We can use Eq.~(\ref{eqn:bpc}) to constrain all the terms with odd powers of $\delta$ in this sum.  In particular, the only nonzero term with an odd power of $\delta$ is 
\be
f_i = \frac{1}{2} - \frac{i}{L+1},
\ee
which is an exact result.

We can derive similar constraints for the two-point functions.  First, they are symmetric under the mapping $\delta \to - \delta$.  This constraint implies that the only unique two-point functions are
\be
 \delta b_{ij}^{\uparrow},~\delta b_{ij}^{0},~\delta c_{ij}^{\uparrow},~\delta c_{ij}^{0},~\delta bc_{ij},~a_{ij}
 \ee
 where $\delta c_{ij}^s$ is defined analogously to $\delta b_{ij}^s$ and $\delta bc_{ij} =  P_{\{ i ,j \}} (b_i^s,c_j^{s'}) - \bar{b}_i^s c_j^{s'}$ for all $s$ and $s'$.
Tracing over one of the sites gives the identities
 \begin{align}
 \delta bc_{ij} &= -\frac{1}{2} ( \delta b_{ij}^{\uparrow} + \delta b_{ij}^{0} ), \\
  \delta b_{ij}^{\uparrow} &+ \delta b_{ij}^{0}=  \delta c_{ij}^{\uparrow} + \delta c_{ij}^{0}.
  \end{align}
  Tracing over one of the replicas similar to Eq.~(\ref{eqn:bpc}) then leads to the relations
  \begin{align}
  \delta c^\uparrow_{ij}&= \delta b_{ij}^0 + \tau_{ij}, \\
  \delta c^0_{ij} & = \delta b_{ij}^{\uparrow} - \tau_{ij}.
  \end{align}
 As a result of these constraints, to lowest order in $\delta$ the unknown correlation functions are $
g_i, $ $\delta b_{ij}^{0,\uparrow}$, and $a_{ij}.$

  It is convenient to work in scaled coordinates and define the variables 
  \begin{align}
  h(x) &\equiv \delta^2 ( g_{Lx} - f_{Lx}^2) ,\\
     a(x,y) & \equiv a_{Lx Ly+1},\\
   b(x,y) &\equiv \delta b^\uparrow_{Lx Ly+1}+\delta b_{Lx Ly+1}^0 \\ 
   B(x,y) & \equiv \delta b^\uparrow_{Lx Ly+1} - \delta b_{Lx Ly+1}^0,\\
   \tau(x,y)& \equiv \tau_{Lx Ly+1}. \\ \nonumber
   \end{align}    
    A direct calculation to lowest order in $1/L$ and second order in $\delta$ shows that these variables satisfy  independent diffusion equations away from the diagonal ($x=y$),
\begin{align} \label{eqn:hx}
h''(x)&= - 2 \delta^2+\frac{8 \gamma_1 L^2  }{1-2 \gamma_1}   \\ \nonumber
& \times [h(x) +B(x,x)  - a(x,x)-\tau(x,x)] , \\
 \nabla^2 a(x,y) &= \frac{4 \gamma_2 L^2}{2-\gamma_2} a(x,y), \\
 ~\nabla^2 b(x,y)&= \frac{8 \gamma_1 L^2}{1-2 \gamma_1} b(x,y), ~\nabla^2 B(x,y)= 0.
\end{align}
These variables are mixed by the  boundary conditions on the diagonal,
                                 \begin{widetext}
\begin{align}
 \label{eqn:dax}
 (\del_x & - \del_y) \, a(x,y)\lvert_{x=y} \, =  \frac{4 \gamma_1 \delta^2}{L(2-\gamma_2)}  + \frac{8 \gamma_1 L}{2 - \gamma_2} [ h(x) +B(x,x)- a(x,x)-\tau(x,x)], \\ 
\label{eqn:bxy}
\Big[ \frac{ 1-2 \gamma_1}{4L}& (\del_x - \del_y) b(x,y) -\frac{\gamma_1}{L} (\del_x + \del_y) [B(x,y)+a(x,y)]   \Big]_{x=y} = -\frac{ \gamma_1  }{L} h'(x) + \frac{\gamma_1 \delta^2}{L^2}(1-2x)-  \gamma_1  b(x,x)  , \\
 \label{eqn:Bxy}
\Big[ \frac{1}{4L}&(\del_x - \del_y)B(x,y) - \frac{\gamma_1}{L}(\del_x+\del_y)[B(x,y)+a(x,y)]  + \frac{\gamma_1}{L^2} \del_x \del_y B(x,y) +\frac{\gamma_1}{2 L^2}(\del_x+\del_y)^2 a(x,y)  \Big]_{x=y} \\ \nonumber
&=\gamma_1\Big[ B(x,x) +a(x,x)- \tau(x,x)  -h(x)  - \frac{ h'(x)}{L} + \frac{ \delta^2}{2L^2}(1-4x) \Big] + \frac{1-5 \gamma_1}{4 L^2} h''(x).
\end{align}
The boundary conditions at the edges of the sample are $h(0)=h(1)=0$ and $a(0,y)=a(x,1)=b(0,y)=b(x,1)=B(0,y)=B(x,1)=0$.  
\end{widetext}

For $p_1=p_2=0$, these equations have the solution 
\be
h(x)=\delta^2 x(1-x),~B(x,x)=2 \tau(x,x),
\ee
and $a(x,y)=b(x,y)=0$.
The crossover to phase II starting from $p_2=0$ occurs at $p_1 L^2 \gg 1$.  In this limit, Eqs.~(\ref{eqn:hx})--(\ref{eqn:dax}) lead to the same solution we found in Sec.~\ref{sec:phaseII} and Appendix~\ref{app:NIFder} using the fermion representation
\be
a(x,y)=-\tau(x,y),~h(x)=\frac{(1-2\gamma_1) \delta^2}{4 \gamma_1 L^2} -\tau(x,x),
\ee
where $h(x)$ deviates from this solution only in a region of order $p_1^{-1/2}$ sites near the boundaries.
 Solving for $B(x,y)$ and $b(x,y)$ using Eqs.~(\ref{eqn:bxy}) and (\ref{eqn:Bxy}) gives the solutions 
 \be
 B(x,y)=\tau(x,y),~b(x,y)=0.
 \ee  The crossover to phase III beginning in phase II, occurs for $p_2 L^2 \gg 1$, where the exponential decay of $a(x,y)$ off the diagonal implies that $a(x,x)$ is constant except over order $p_2^{-1/2}$ sites near the boundaries.  Similar to phase II, we find the solution
\be \label{eqn:bxxiii}
B(x,x)=\tau(x,x).
\ee
Solving for the other components in phase III, (i.e., $p_{1,2}L^2 \gg 1$) gives the solutions
 \begin{align}
 h(x)&= \bigg(\frac{1-2 \gamma_1}{4 \gamma_1 }+ \frac{1}{\sqrt{2 \gamma_2(2-\gamma_2)}}\bigg) \frac{\delta^2}{L^2}, \\ \label{eqn:axyiii}
 a(x,y)&= \frac{1}{\sqrt{2 \gamma_2(2-\gamma_2)}} \frac{\delta^2}{L^2} e^{- \sqrt{\frac{2 \gamma_2}{2- \gamma_2}} L (y-x)}, 
 \end{align}
 and $b(x,y)=0$.
Away from the boundaries, Eqs.~(\ref{eqn:bxxiii})--(\ref{eqn:axyiii}) provide a complete description of $\overline{\rho_{\rm NEAS} \otimes \rho_{\rm NEAS}}$  to second order in $\delta$ and lowest order in $1/L$.  

The average deviation of the entropy from local equilibrium and the mutual information between the left and right halves of the chain for $p_{1,2}L^2 \gg 1$ are then given by
 \begin{align} \nonumber
\Delta &S  = 2L \int_0^1 dx h(x) + 4L^2 \int_0^1 dy\int_0^y dx \,  a(x,y) \\ 
&=\frac{\alpha_1 \delta^2}{L p_1} + \frac{\alpha_2 \delta^2}{L p_2}, \\
I&(L:R)  = 4 L^2 \int_{1/2}^{1} dy \int_{0}^{1/2} dx\,  a(x,y)= \frac{\alpha_3 \delta^2}{L^2 p_2^{3/2} }.
\end{align}
Here, the coefficients $\alpha_i$ are all order one and take the explicit values
\begin{align}
\alpha_1&= 3-p_1,~\alpha_2=2+\sqrt{\frac{2 p_2}{2-p_2}}, \\
\alpha_3 &= \sqrt{2(2-p_2)} .
\end{align} 
Consequently, the average entropy and mutual information of the NEAS density matrices are equal to those of local equilibrium in the scaling limit.

\section{Open random quantum circuit without charge conservation}
\label{app:open}

To compare to the random circuit with conservation laws studied in the main text, here we analyze an open random circuit without conservation laws acting on a spin chain with one end of the chain coupled to a zero-entropy, spin-polarized reservoir.  
In this case, we find that the NEASs have, on average, one bit of entropy less than the infinite-temperature state.  This result can be understood intuitively because as the reservoir becomes entangled with the system, it is also injecting known pure states.  As shown in Fig.~\ref{fig:haarrand}, this reduces the entropy of the system by one bit, which is spread nonlocally across the entire chain due to the chaotic dynamics in the bulk.  

 \begin{figure}[tb]
\begin{center}
\includegraphics[width=0.45 \textwidth]{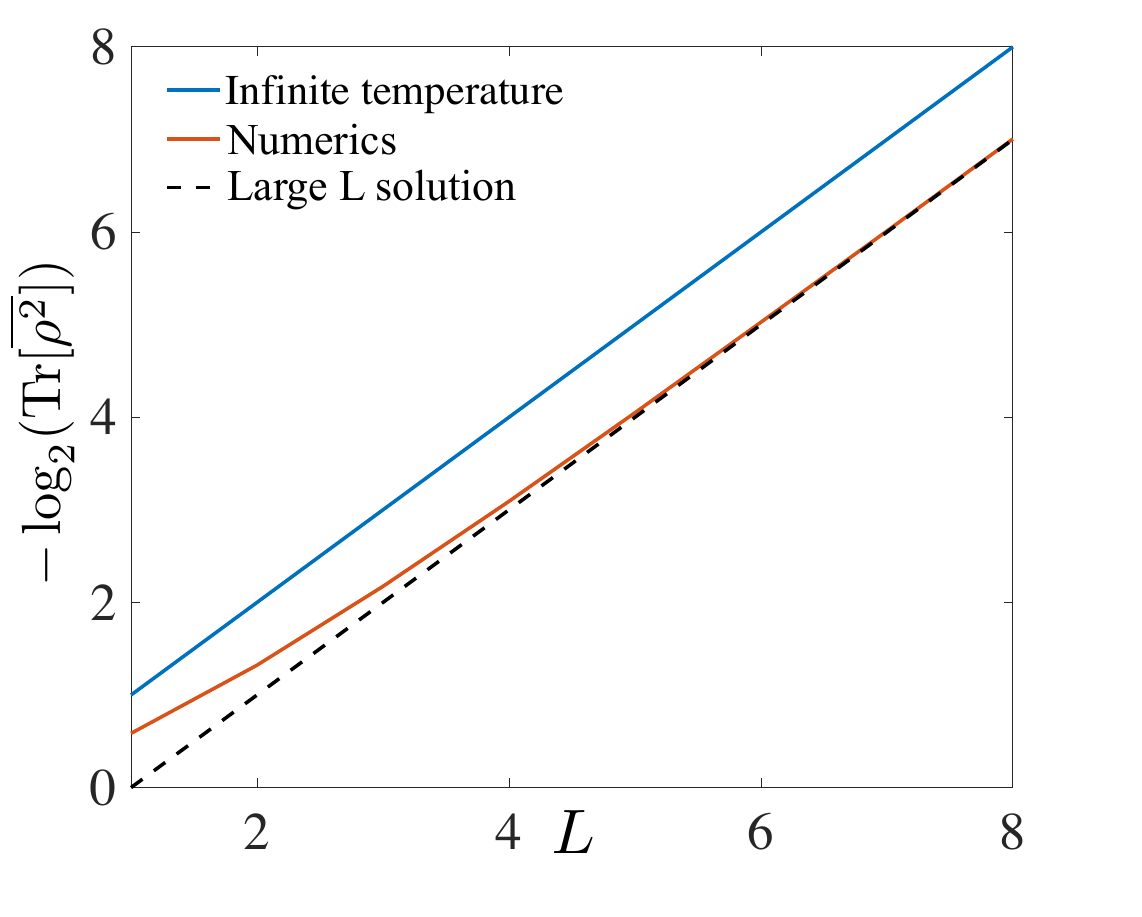}
\caption{Average purity of the NEASs for the open random circuit without charge conservation coupled to a zero-temperature reservoir. The entropy is, on average, reduced by one bit compared to the infinite-temperature state in the large-$L$ limit. }
\label{fig:haarrand}
\end{center}
\end{figure}

The random circuit is composed of Haar random two-qubit unitaries $U^\mu_{ii+1}$ acting on each pair of nearest neighbors.  After a unitary is applied on  site 0, this site is swapped with a spin from the reservoir. We can write the combined action of the unitary gate on sites 0 and 1 and the swap with the reservoir as a quantum channel acting on the reduced density matrix $\rho$ for sites $1,\ldots,L$ 
\be
\begin{split}
\E_1^\mu(\rho)&= \trace_{R,0} \Big[ U_{\rm swap}^{R0} U_{01}^\mu \, \rho \otimes \lvert{\uparrow \uparrow }\rangle \langle{\uparrow \uparrow}\lvert U_{01}^{\mu \dag} U_{\rm swap}^{R0} \Big] \\
&= \sum_r K_{1r}^\mu \rho K_{1r}^{\mu \dag},
\end{split}
\ee
where $U_{\rm swap}^{R0}$ is a two-qubit swap operator acting on the reservoir and site 0.   The two Kraus operators $K_{1r}^{\mu}$ are given by the  $2\times 2$ submatrices
\be
U_{01}^\mu = \left( \begin{array}{c  c} K_{1 \uparrow}^\mu & * \\ 
K_{1\downarrow}^\mu & * \end{array} \right),
\ee
where the basis is written as $\{ \uparrow \uparrow,\uparrow\downarrow,\downarrow\uparrow,\downarrow\downarrow \}$. 
For the other sites, there is no interaction with the reservoir and the  quantum channels associated with each unitary have the representation $\E_{i}^\mu(\rho) = U_{i-1 i}^\mu \rho  U_{i-1 i}^{\mu \dag}$, $i=2,\ldots,L$.  Similar to the case analyzed in the main text, this random circuit evolves to a set of NEASs in the long-time limit.

The time-averaged density matrix is simply given by the infinite-temperature state $\bar{\rho} =2^{-L} \id$; however, since the reservoir has zero entropy we do not expect the higher-order moments of $\rho$ to be at infinite temperature.  We can evaluate the average purity by noting that $\overline{\rho^2}$ is also proportional to $\id$, which implies that the second-order moments of the density matrix coefficients satisfy
\be
\overline{a_{S} a_{S'^\dag}} \propto \delta_{SS'}.
\ee
For this problem, we can then write the second-order moment equations in the form
\begin{align} \label{eqn:momMod1}
|a_{S}|^2 = \sum_{S'} W_{SS'} |a_{S'}|^2
\end{align}
Further simplifications are possible by noting that, due to the symmetry of the problem, the average populations of strings with $S_i=X,Y,Z$ are equal in the NEASs.  This result allows us to represent each $S\to 01001\ldots$ by binary strings, where $0$ or $1$ denotes whether a given site has a trivial or nontrivial operator \cite{Nahum17}.  In this case, the local Hilbert space for the strings is  mapped to a pseudospin 1/2, and the transition matrix takes the form
\begin{align}
W &= R \otimes \id + \sum_{i=1}^{L-1}  W_{i} \otimes \id ,~R = \left( \begin{array}{c c}
1 & 0 \\
b & 1-d 
\end{array} \right), \\
W_{i}& = \left( \begin{array}{c c c c}
1& 0 & 0 &0 \\
0 & 1/3 & 1/3 & 1/3 \\
0 & 1/3 & 1/3 & 1/3 \\
0 & 1/3 & 1/3 & 1/3
\end{array} \right),
\end{align}
where $R$ acts on site 1 in the basis $\{0, 1 \}$ and $W_{i}$ acts on sites $i$ and $i+1$ in the basis $\{00,01,10,11 \}$. The birth and death rates of the nontrivial strings on site 1 are given by $b = 0.2$ and $d=0.6$, respectively.

The operator $\overline{\rho^2}$ is fully determined by its trace  
\be
\overline{\rho^2}=  \frac{\id}{2^{2L}}\Big( 1+ \sum_S |a_S|^2\Big).
\ee
Because of the mixing induced by the boundary, the populations $p_1=\sum_S |a_{1 S}|^2$ and $p_\mu= \sum_S |a_{ \mu S}|^2$ have to be treated separately.  We can write down a reduced Markov chain describing only the populations $p_1=\sum_S |a_{1 S}|^2$, $p_\mu= \sum_S |a_{ \mu S}|^2$,  and the population in the identity string $p_0$
\be
\frac{d}{d t} \left( \begin{array}{c} 
p_0 \\
p_\mu \\
p_1 
\end{array} \right) = \left(\begin{array}{c c c}
0 & 0 & 0 \\
b & -d-\epsilon & b+\epsilon \\
0 & \epsilon & - \epsilon
\end{array} \right)  \left( \begin{array}{c} 
p_0 \\
p_\mu \\
p_1 
\end{array} \right),
\ee
This Markov chain describes the process whereby the trivial string gives birth to a nontrivial operator at site 1.   This operator then spreads throughout the system, leading to a finite mixing with $p_1$ due to evolution under $W_{1}$ at rate $\epsilon \sim 1/3$.  The strings with a nontrivial operator at site 1   die at rate $d$ and can be born from the  nontrivial operator strings with the identity at site 1 at rate $b$.   The steady state is independent of  $\epsilon$,  
\be \label{eqn:p2}
 (p_0,p_\mu,p_1) = \Big(1, \frac{b}{d-b},\frac{b}{d-b} \Big) = (1,1/2,1/2).
 \ee
The average purity is given by
 \be
-\log_2( \trace[ \overline{\rho^2}])=  (L-1) ,
 \ee
 which shows that the NEASs have, on average, one bit of entropy less than the infinite-temperature state.  We have  checked that this analysis gives excellent agreement with numerics on small chains (see Fig.~{\ref{fig:haarrand}).  There are small finite-size corrections due to the fact that there is a slightly higher probability of the information bit being lost at site one, but these decay exponentially with $L$. 

\bibliographystyle{./apsrev-nourl-title-PRX}
\bibliography{./Chaos}

\end{document}